\documentclass[11pt,letterpaper]{article}
\pdfoutput=1

\usepackage[numbers,sort&compress]{natbib}
\usepackage{mciteplus}
\usepackage{amsmath}
\usepackage{amsthm}
\usepackage{amsfonts}
\usepackage{amssymb}
\usepackage{braket}
\usepackage{tikz-cd}
\usepackage{tikz}
\usetikzlibrary{arrows.meta, calc}

\usepackage[letterpaper, margin=1.25in]{geometry}

\newcommand\numberthis{\addtocounter{equation}{1}\tag{\theequation}}
\let\oldket\ket
\renewcommand{\ket}[1]{{\oldket{#1}}}

\newtheorem{lemma}{Lemma}
\newtheorem{theorem}{Theorem}

\newtheorem{postulateletter}{Postulate}
\newenvironment{postulatecustom}[1]{
  \renewcommand{\thepostulateletter}{#1}
  \begin{postulateletter}
}{
  \end{postulateletter}
}

\usepackage{titlesec}
\titleformat{\part}[hang]
  {\normalfont\Large\bfseries}
  {\partname~\thepart:}
  {0.5em}
  {}

\usepackage{hyperref}
\hypersetup{
    colorlinks=true,
    linkcolor=blue,
    citecolor=blue,
    urlcolor=blue,
    pdftitle={On the relationship between the POVM framework and the no-signaling hypothesis},
    pdfauthor={Will Cavendish}
}

\title{On the relationship between the POVM framework and the no-signaling hypothesis}

\author{Will Cavendish\thanks{John Bell Institute for the Foundations of Physics. Email: \href{mailto:willcavendish@johnbellinstitute.org}{willcavendish@johnbellinstitute.org}. ORCID: 0009-0005-7578-756X.}}
\date{}

\allowdisplaybreaks[1]

\begin{document}
\maketitle

\begin{abstract}
The quantum foundations literature contains dozens of ``no-signaling'' arguments that rely on the Born rule or, more generally, the positive operator-valued measure (POVM) framework to show that the resources of non-relativistic quantum mechanics (NRQM) cannot be used to send superluminal signals. What these arguments typically establish, however, is not a prohibition on superluminal signaling per se, but the seemingly stronger claim that signaling at any speed between mutually isolated systems is forbidden. This paper gives a precise formulation of this claim, the \emph{no-signaling hypothesis}, and investigates the following converse question: under what additional assumptions does it imply that all experimental outcome probabilities are described by POVMs? Surprisingly few are needed: given the standard NRQM representation of composite experiments and two textbook claims about marginal probabilities, the no-signaling hypothesis is equivalent to the universality of the POVM framework. More broadly, by formalizing the relevant concepts and deriving the logical relationships between them, the paper shows that the no-signaling hypothesis can be used to derive most, though not all, of the probability-generating fragment of NRQM. The resulting map of implications clarifies which common postulates of NRQM are logically interrelated and which can be modified independently.
\end{abstract}

\section{Introduction}

Non-relativistic quantum mechanics (NRQM) and special relativity (SR) were developed to describe different empirical phenomena in different regimes. It is therefore unsurprising that their basic mathematical formulations are in tension. What is arguably more surprising is that the two theories can be reconciled at all. As its name suggests, NRQM makes use of ``non-relativistic'' resources. For example, NRQM describes multi-particle states using wave functions defined on configuration space, the points of which refer (or appear to refer) to the simultaneous positions of particles that can be arbitrarily far apart. SR, by contrast, is the paradigmatic ``relativistic'' theory, and the relativity of simultaneity is among its central tenets. Yet the prevailing view is that these tensions are only apparent, and that what Shimony called the ``peaceful coexistence'' \cite{shimony1984controllable} of NRQM and SR reflects deep physical principles. This paper investigates the quantum-mechanical basis of this compatibility. Which features of NRQM, if we gave them up, would irreparably disturb the peace?

It is widely believed that the answer to this question has something to do with ``signaling.'' On this view, what underwrites the compatibility of NRQM and SR is that both theories, for reasons of their own, ``forbid signaling'' in the relevant sense. But what is the relevant sense? In SR, it is \emph{superluminal} signaling that is assumed to be problematic. NRQM, however, has a different theoretical vocabulary in which the speed of light plays no explicit role. As Kennedy points out, many ``no-signaling'' results in the literature\footnote{The specific arguments Kennedy has in mind are those in \cite{bussey1982super, jordan1983quantum, ghirardi1980general, shimony1984controllable, redhead1987incompleteness, eberhard1989quantum, scherer1993problem}.} rely on assumptions that are surprisingly agnostic about the underlying physics:
\begin{quote}
The proofs do not involve Planck's constant, the de Broglie relations, nor the Schr\"odinger equation, i.e., none of the ruling features of quantum theory. Likewise, they do not involve the speed of light. Signalling between the correlated subsystems at any speed is prohibited. \cite{kennedy1995empirical}
\end{quote}
This should be puzzling, because we can and do send \emph{subluminal} signals. NRQM would not be empirically plausible if it ruled them out. The first order of business in addressing this puzzle is to say, in terms that NRQM has the resources to interpret, what sort of ``signaling'' ought to be avoided.

As I will argue, ruling out the relevant sort of ``signaling'' requires a highly structured theory of ``observability.'' Indeed, once enough features of the ``standard NRQM package'' are in place, the \emph{assumption} of no-signaling can be used to prove the generalized Born rule, i.e., that all experimental outcome probabilities for quantum-mechanical prepare-and-measure experiments are described by positive operator-valued measures (POVMs). I will refer to this claim as POVM universality (see Postulate \ref{POVM} below). Since the generalized Born rule in turn implies no-signaling, the two are, modulo auxiliary hypotheses, equivalent (see Theorem \ref{mainthm}). The goal of this paper is to give a rigorous, general account of this equivalence, and a full inventory of the assumptions it requires.

The main difficulty in providing this account is not supplying the proofs, but stating the claim and the relevant definitions clearly enough to make rigorous deductions possible. Once concepts like ``signaling'' or ``prepare-and-measure experiment'' have been formalized, the rest is just math, in the sense that the correctness of the proofs can be evaluated without reference to what the formalism is meant to represent. Whether such deductions are physically meaningful, however, depends on the extent to which the postulates correctly formalize the intended physical concepts. This is something that theorists disagree about, and such disagreements bear directly on the claims at issue (see, e.g., \cite{masanes2019measurement, kent2025measurement, masanes2025response, stacey2024contradictions}). As we will see, several common mathematical postulates (e.g., the use of density matrices to represent proper mixtures, see Section \ref{sec:ProperMixture}) encode much stronger assumptions than they appear to. So what I choose to assume (or, if I am less careful than I intend to be, what I implicitly assume by accident) is at least as important as any deductions built on these assumptions.

For this reason, the bulk of this paper (Part \ref{part:map}) is dedicated to laying out how I have formalized the relevant concepts. My choices are all ``standard,'' in the sense that they are adapted from formulations in the existing literature, or (as is the case with Postulate \ref{PL}) unconditionally equivalent to standard assumptions. I do not intend to argue that the postulates in question are unique or inevitable on physical grounds, only that their ``standardness'' makes them both worth taking seriously and worthy of scrutiny. Figure \ref{Implications} maps the logical relationships between them.

\begin{figure}[hpt]
\centering
\begin{tikzpicture}[
    state/.style={align=center, text width=2.5cm},
    arrow/.style={-{Implies[length=3mm, width=3mm]}, double distance=1.5pt, thick, black},
    bi_arrow/.style={{Implies[length=3mm, width=3mm]}-{Implies[length=3mm, width=3mm]}, double distance=1.5pt, thick, black},
    bg_post/.style={align=center}, 
    annot_arrow/.style={-stealth, thin, gray, bend right=15}
]

    \draw[dotted, thick] (-5.5,3.5) rectangle (9.5,-6);
    \draw[dotted, thick] (1.0,3.5) -- (1.0,-6);

    \node[font=\large\scshape] at (-2.25,3.1) {Composite systems};
    \node[font=\large\scshape] at (5.1,3.1) {Single systems};

    \node[state, fill=white, inner sep=1pt] (M) at (1.1, -0.3) {Existence \\of Qubit \\Composites\\ {\color{blue}(\ref{EQC})}};
    \node[state] (A) at (-4,2) {No Signaling \\{\color{blue}(\ref{NS})}};
    \node[state] (B) at (2.5,2) {Parallelogram \\ Law  {\color{blue}(\ref{PL})}};
    \node[state] (C) at (6.5, 2) {Weak PL \\ {{\color{blue}(\ref{WPL})}}};
    \node[state] (D) at (6.5,-2) {Quadraticity \\ {\color{blue}(\ref{QF})}};
    \node[state] (E) at (2.5,-2) {Weak SEP {\color{blue}(\ref{WSEP})}};
    \node[state] (F) at (-4,-2) {Partial Trace \\ sufficiency \\{\color{blue}(\ref{PT})}};
    \node[state] (G) at (6.5, -5) {POVM \\ universality \\ {\color{blue}(\ref{POVM})}};
    \node[state] (H) at (2.5, -5) {Standard\\ {$\mathrm{Ensemble}{\text -}\mathrm{level}$}\\Probabilities {\color{blue}(\ref{SEP})}};
    \node[state] (J) at (-2.0, 0.2) {{$\mathrm{Standard~Marginal}$} \\ {$\mathrm{decompositions~for}$}\\ {$\mathrm{Qubit~Composites}$}\\ {\color{blue}(\ref{SMQC})}};
    \node[state] (I) at (-0.7, -3.6) {$\mathrm{Independence~of}$\\ $\mathrm{Product~states}$\\
    {\color{blue}(\ref{IP})}};
    \node[state] (K) at (-1.0, 2) {Weak NS\\{\color{blue}(\ref{WNS})}};
    \node[state] (L) at (-3.5, -5.0) {Product-state \\Multiplicativity {\color{blue}(\ref{PM})}};

    \draw[arrow, shorten <=-2mm, shorten >=-6mm ] (A) to coordinate[midway] (AK_mid) (K);
    \draw[arrow, shorten <=-5mm, shorten >=-2mm ] (K) to coordinate[midway] (KB_mid) (B);
    \draw[arrow, shorten >= -6mm ] (B) to (C);
    \draw[arrow] (B) to coordinate[midway] (BC_mid) (D);
    \draw[arrow] (C) to coordinate[midway] (CD_mid) (D);
    \draw[arrow, shorten <=-2mm, shorten >=-4mm] (D) to coordinate[midway] (DE_mid) (E);
    \draw[arrow] (E) to coordinate[midway] (EB_mid)  (B);
    \draw[arrow, shorten <=-6mm, shorten >=-4mm] (E) to coordinate[midway] (EF_mid) (F);
    \draw[arrow] (F) to coordinate[midway] (FA_mid) (A);
    \draw[arrow, shorten <=-1mm, shorten >=-2.5mm] (F) to coordinate[midway] (FJ_mid) ($(J)+(-0.95,-0.85)$);
    \draw[arrow, shorten <=-2mm, shorten >=-3mm] (D) to coordinate[midway] (DH_mid) (H);
    \draw[arrow] (H) to (E);
    \draw[bi_arrow, shorten <=-0mm, shorten >=-0mm] (D) to coordinate[midway] (DG_mid)(G);
    \draw[arrow, shorten <=-1mm, shorten >=-1mm] (L) to [bend right=3] coordinate[midway] (LI_mid) ($(I)+(-0.6,-0.4)$);

    \node[draw, circle, fill=white, inner sep=1pt, minimum size=6mm, font=\scriptsize, xshift=-0.2cm] at (KB_mid) {{\color{blue}\ref{sec:WNStoPL}}};
    \node[draw, circle, fill=white, inner sep=1pt, minimum size=6mm, font=\scriptsize] at (BC_mid) {{\color{blue}\ref{sec:PLtoQF}}};
    \node[draw, circle, fill=white, inner sep=1pt, minimum size=6mm, font=\scriptsize] at (DE_mid) {{\color{blue}\ref{sec:QFtoWSEP}}};
    \node[draw, circle, fill=white, inner sep=1pt, minimum size=6mm, font=\scriptsize] at (DH_mid) {{\color{blue}\ref{sec:QFtoSEP}}};
    \node[draw, circle, fill=white, inner sep=1pt, minimum size=6mm, font=\scriptsize] at (DG_mid) {{\color{blue}\ref{sec:POVM}}};
    \node[draw, circle, fill=white, inner sep=1pt, minimum size=6mm, font=\scriptsize] at (EB_mid) {{\color{blue}\ref{sec:WSEPtoPL}}};
    \node[draw, circle, fill=white, inner sep=1pt, minimum size=6mm, font=\scriptsize] at (EF_mid) {{\color{blue}\ref{sec:WSEPtoPT}}};
    \node[draw, circle, fill=white, inner sep=1pt, minimum size=6mm, font=\scriptsize] at (FJ_mid) {{\color{blue}\ref{sec:PTtoSQ}}};
    \node[draw, circle, fill=white, inner sep=1pt, minimum size=6mm, font=\scriptsize] at (FA_mid) {{\color{blue}\ref{sec:PTtoNS}}};
    \node[draw, circle, fill=white, inner sep=1pt, minimum size=6mm, font=\scriptsize] at (CD_mid) {{\color{blue}\ref{sec:WPLtoQF}}};
    \node[draw, circle, fill=white, inner sep=1pt, minimum size=6mm, font=\scriptsize] at (LI_mid) {{\color{blue}\ref{sec:PPI}}};
    \node[bg_post] (dim) at (8.25,0.75) {$\dim(\mathcal{H})\ge 3$};

    \draw[{Circle[length=2pt]}-Stealth] (-0.6,1.2) to [bend left=2] (0.2,1.7);
    \draw[{Circle[length=2pt]}-Stealth] (7.8, 0.5) to [bend left=10] (7.0,0);
    \draw[{Circle[length=2pt]}-Stealth] (-0.9, -2.9) to [bend right=0] (-0.8,-2.4);
    \draw[{Circle[length=2pt]}-Stealth] (-1.1,-2.9) to [bend right=5] (-2.8,-1.2);
    \draw[{Circle[length=2pt]}-Stealth] (0.85, 0.65) to [bend right=5] (0.6,1.6);
    
\end{tikzpicture}
\caption{A map of the logical relationships between 13 postulates. Acronyms in blue link to the section of the paper in which the given postulate is formalized. A thin arrow running from a background assumption to an implication indicates that the implication depends on that assumption, e.g., the proof of $\mathrm{\ref{WPL}}\Rightarrow\mathrm{\ref{QF}}$ applies only to systems whose Hilbert space $\mathcal{H}$ satisfies $\dim(\mathcal{H}) \ge 3$. A circled number over an implication arrow indicates the section in which it is proved.}
\label{Implications}
\end{figure} 

Many of the proofs in this paper are variations on arguments appearing elsewhere in the literature (e.g., \cite{gisin1989stochastic, svetlichny1998quantum, simon2001no, caves2004gleason, barnum2005influence, nenashev2014blackbox}). The arguments that may be less familiar are those depending on the parallelogram law (Postulates \ref{PL} and \ref{WPL} below), which draw on results of Gleason \cite{gleason1957}, and those concerning general proper mixtures, which require Bochner integration. Once the results shown in Figure \ref{Implications} are proved, many further results can be read off the diagram by following directed paths of implications. Theorem \ref{mainthm}, which gives a precise formulation of the equivalence between the no-signaling hypothesis and POVM universality, is one such result (see Section \ref{sec:proofOfMainTheorem}).\footnote{In Theorem \ref{mainthm}, and elsewhere in this paper, I will use the term ``textbook NRQM'' to loosely indicate the claims, formalism, and methods presented in a typical introductory undergraduate course in quantum mechanics. The contents of Griffiths's textbook \cite{griffiths2005introduction} are what I have in mind.}

\begin{theorem}\label{mainthm} The no-signaling hypothesis (Postulate \ref{NS}) is equivalent to POVM universality (Postulate \ref{POVM}), provided that textbook spin measurements on qubits can be composed with any available experiment on an isolated second system (Postulate \ref{EQC}), and the following two empirical claims of textbook NRQM hold:
\begin{itemize}
\item When Alice and Bob perform isolated experiments on a joint system prepared in a product state, Bob's marginal outcome probabilities are those his experiment assigns to his own state, whatever Alice's state or choice of experiment (Postulate \ref{IP}).
\item In the special case that Alice performs a spin measurement on a qubit, Bob's marginal outcome probabilities can be written as a sum over Alice's outcome probabilities in the standard way (Postulate \ref{SMQC}).
\end{itemize}
\end{theorem}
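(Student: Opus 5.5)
The plan is to read the equivalence off the implication map in Figure \ref{Implications}. I will establish each of the two directions by chaining arrows that are proved in the sections indicated there, keeping track of which background assumptions each arrow consumes. The work is bookkeeping: I need to check that the background assumptions used along each path are exactly \ref{EQC}, \ref{IP} and \ref{SMQC}. Any other hypothesis, such as $\dim(\mathcal{H})\ge 3$, must be avoidable by choosing the path carefully.

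\emph{From \ref{NS} to \ref{POVM}.} I will follow the chain
\[
\mathrm{\ref{NS}}\Rightarrow\mathrm{\ref{WNS}}\Rightarrow\mathrm{\ref{PL}}\Rightarrow\mathrm{\ref{QF}}\Rightarrow\mathrm{\ref{POVM}}.
\]
\begin{itemize}
\item The step \ref{NS} $\Rightarrow$ \ref{WNS} is immediate, since the weak version is a special case.
\item The step \ref{WNS} $\Rightarrow$ \ref{PL} (Section \ref{sec:WNStoPL}) is where \ref{EQC} and \ref{SMQC} enter. Qubit composites exist, and Alice's spin measurement yields Bob's marginals as the standard sum over her outcomes. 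Weak no-signaling then forces Bob's outcome probabilities on the two sides of a parallelogram to agree, because these correspond to two decompositions of the same reduced state.
\item For \ref{PL} $\Rightarrow$ \ref{QF}, I will use the direct arrow from Section \ref{sec:PLtoQF}, not the route through \ref{WPL}. The route through \ref{WPL} carries the restriction $\dim(\mathcal{H})\ge 3$, and the statement of the theorem imposes no dimension hypothesis.
\item Finally, \ref{QF} $\Leftrightarrow$ \ref{POVM} is the biconditional proved in Section \ref{sec:POVM}.
\end{itemize}

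\emph{From \ref{POVM} to \ref{NS}.} I will follow
\[
\mathrm{\ref{POVM}}\Rightarrow\mathrm{\ref{QF}}\Rightarrow\mathrm{\ref{WSEP}}\Rightarrow\mathrm{\ref{PT}}\Rightarrow\mathrm{\ref{NS}}.
\]
\begin{itemize}
\item The first two steps are Sections \ref{sec:POVM} and \ref{sec:QFtoWSEP}. Alternatively, I can pass through \ref{SEP}, which implies \ref{WSEP}.
\item Section \ref{sec:WSEPtoPT} gives partial-trace sufficiency. According to the figure, this step depends on \ref{IP}.
\item Section \ref{sec:PTtoNS} then concludes no-signaling. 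Once Bob's marginal statistics depend only on the partial trace of the joint state, Alice's isolated operations cannot alter them.
\end{itemize}

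The main obstacle I expect is confirming that no hidden assumption slips in along either chain. In particular, I must check that the arrow \ref{WNS} $\Rightarrow$ \ref{PL} needs only \ref{EQC} and \ref{SMQC} (together with \ref{IP}, where product states appear). I must also check that the arrow \ref{PT} $\Rightarrow$ \ref{SMQC} is not circularly required; it is a consequence, not a premise, so it does not matter here. I will audit each cited section's hypotheses against the list in the theorem, and confirm that \ref{PM}, which implies \ref{IP}, is not needed as an additional assumption. The theorem then follows by combining the two directions.
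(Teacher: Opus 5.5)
Your proposal is correct and matches the paper's proof: $\mathrm{NS}\Rightarrow\mathrm{WNS}\Rightarrow\mathrm{PL}$ (Lemma~\ref{imp1}, using EQC and SMQC) $\Rightarrow\mathrm{QF}\Rightarrow\mathrm{POVM}$ in one direction, and $\mathrm{POVM}\Rightarrow\mathrm{QF}\Rightarrow\mathrm{WSEP}\Rightarrow\mathrm{PT}$ (using IP) $\Rightarrow\mathrm{NS}$ in the other, with the direct $\mathrm{PL}\Rightarrow\mathrm{QF}$ arrow chosen, as you note, to avoid the dimension restriction on the WPL route. One small correction to your audit list: Lemma~\ref{imp1} does not use IP, so the forward direction needs only EQC and SMQC, and IP enters only in the converse.
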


Given this network of implications, it is natural to wonder whether the single-system postulates in Figure \ref{Implications}, together with Postulate \ref{IP}, suffice to characterize the probability-generating fragment of textbook NRQM. In Section \ref{sec:PMIndep}, I show this is not the case by translating Popescu and Rohrlich's PR-box \cite{popescu1994quantum} example into the present framework. This yields a model whose CHSH value exceeds Tsirelson's bound and that satisfies every postulate in Figure \ref{Implications} except Postulate \ref{PM}.

The paper is organized as follows. Part \ref{part:map} opens with an analysis of Kennedy's puzzle. What signaling worries does NRQM introduce, and how do ``no-signaling'' arguments address them? Sections \ref{sec:PandM} and \ref{sec:compositeExps} then formalize the postulates mapped in Figure \ref{Implications}, and Section \ref{sec:discussion} discusses their significance. The proofs of the implications in Figure \ref{Implications} (Section \ref{mathImp}) and the construction of the PR-box model (Section \ref{sec:PMIndep}) are collected in Part \ref{part:proofs}.

\part{No-Signaling Arguments and Their Formalization}\label{part:map}

\section{Signaling Scenarios in NRQM}\label{sec:background}

To characterize the type of ``signaling'' no-signaling arguments aim to rule out, it helps to look closely at a standard ``signaling'' scenario. After doing this in the next subsection, I discuss how different ways of modifying or extending NRQM relate to the worries this scenario raises. This discussion motivates an alternative scenario, which I call the ``bunker game'' (Section \ref{sec:TheBunkerGame}), involving attempts to signal between mutually isolated systems at subluminal speeds. The bunker game separates questions about spacetime structure from more general signaling concerns and underlies my approach to isolating and formalizing the relevant fragment of standard quantum theory. Finally, I discuss the core features of my approach, highlighting which parts of textbook NRQM I will take as background assumptions (e.g., Hilbert space structure), and which I will promote to the foreground postulates appearing in Figure \ref{Implications}.

\subsection{The textbook signaling scenario}\label{subsection:whatworry}

In this subsection, I present the textbook argument against signaling in NRQM (see, e.g., \cite[Sec.~12.2]{griffiths2005introduction}), taking its description of the scenario at face value. The aim is to understand in detail why signaling is impossible \emph{according to textbook NRQM}. The scenario I will consider involves two experimenters, Alice and Bob, who attempt to communicate. I will assume that Alice attempts to signal \emph{to} Bob, so they will play asymmetric roles in this story (if not in the underlying physics). As discussed by Maudlin, ``when we characterize one system as a transmitter and another as a receiver of a signal, we require that a \emph{controllable} aspect of the transmitter be correlated with some \emph{observable} aspect of the receiver'' \cite{maudlin2011quantum}. We will therefore be concerned with what Alice can \emph{do}, and what Bob can \emph{infer}.

In the version of the story I will tell, Alice and Bob each control a single particle, and are separated by some arbitrary distance $d_{AB}$. I will also assume that, once the pair of particles has been prepared, any interactions between them are negligible. This is represented in NRQM by the absence of interaction terms in the Hamiltonian.  At some predetermined time $t_{\mathrm{transmit}}$, Alice performs one of several available actions on her particle. Subsequently (and note that ``subsequently'' has a clear meaning, since NRQM is a \emph{non-relativistic} theory), Bob performs an experiment on his particle and observes the outcome. No-signaling arguments typically assume that Alice can choose what to \emph{do} to her particle, and that Bob can choose what to \emph{observe} about his, within the bounds of what NRQM allows. These choices are usually treated as free variables in Bell's sense \cite{bell1985exchange}. The goal of a no-signaling argument is to show that there is no protocol by which Bob can learn anything about Alice's actions.

As Kennedy's remark suggests, these arguments seem to ask NRQM to forbid something physically possible. After all, in most familiar circumstances, Bob can learn what Alice did by going to her lab and asking her. Worries about \emph{superluminal} signaling are, implicitly, worries about timing. The question is not whether Bob can learn what Alice did, but whether he can learn it before a signal from Alice could reach him. To make this precise, let $t_{\mathrm{receive}}$ be the time at which Bob infers what Alice did, so that $\Delta t=t_{\mathrm{receive}}-t_{\mathrm{transmit}}$ is the time that elapses between Alice's action and Bob's inference. If $\Delta t$ were bounded above by a value that did not depend on $d_{AB}$, this would threaten \emph{superluminal} signaling, since the protocol would work even if $d_{AB}$ were larger than $c\,\Delta t$, where $c$ is the speed of light. With Alice and Bob sufficiently far apart, a $d_{AB}$-independent protocol would let Bob learn what Alice did faster than \emph{anything} traveling at any fixed finite speed could traverse the space between them. This is why no-signaling arguments, which rule out such protocols, never need to invoke $c$, and end up prohibiting signaling ``at any speed.''

What features of NRQM suggest that such protocols might be available? The first ingredient concerns how NRQM represents the wave function of a pair of particles. The canonical example is Bohm's version of the EPR experiment \cite{bohm1955quantum}, which involves the singlet state for a pair of spin-1/2 particles. This state is typically given the mathematical representation $\frac{1}{\sqrt{2}}\big(\ket{\uparrow}\!\otimes\!\ket{\downarrow}-\ket{\downarrow}\!\otimes\!\ket{\uparrow}\big)$, where one imagines that Alice's particle is represented by the left-hand tensor factor, and Bob's particle is represented by the right-hand tensor factor. An important feature of this representation is that space and time do not enter into it at all. This means that, to the extent this representation is adequate for describing our scenario, $d_{AB}$-independence is guaranteed. One might object that the spin-only expression is merely shorthand for a richer representation that does mention space and time. If Alice and Bob's particles were each in the ground state of identical traps with energy $E_0$, for instance, their joint state might be represented by
$$
\Psi(q_A,q_B,t)=
\tfrac{1}{\sqrt{2}}\,e^{-2iE_0t/\hbar}\left[\left(\!\begin{smallmatrix} \psi(q_A)\\ 0\end{smallmatrix}\!\right)\!\otimes\!\left(\!\begin{smallmatrix} 0 \\ \phi(q_B) \end{smallmatrix}\!\right) -\left(\!\begin{smallmatrix} 0 \\ \psi(q_A)\end{smallmatrix}\!\right)\!\otimes\!\left(\!\begin{smallmatrix} \phi(q_B) \\ 0 \end{smallmatrix}\!\right)\right]\,,
$$
where $|\psi(q_A)|^2$ and $|\phi(q_B)|^2$ give the position distributions of the two particles, concentrated in Alice and Bob's labs respectively. This representation \emph{does} contain information about $d_{AB}$. But NRQM tells us that the spin-only representation suffices for predictions involving only spin, so at least in this case there is an empirically adequate representation of the state that exhibits the sort of $d_{AB}$-independence at issue.

The second ingredient is what NRQM tells us about how to represent the joint state after Alice performs a spin measurement. The textbook Born rule tells us the probability that Alice obtains a given outcome as a function of the joint state before her measurement. Then, the collapse postulate (or L\"uders rule in the case of eigenspaces of dimension greater than $1$) tells us how to represent the joint state after her measurement given the outcome she obtains. The post-measurement state is, in turn, what we use to compute the probabilities of any subsequent measurement on the pair of particles. Read literally, the collapse postulate says that Alice's measurement affects the quantum state instantaneously. Since this joint state describes Bob's particle, this change has consequences for Bob.

The third ingredient is that the post-measurement joint state depends on which measurement Alice chooses. The formalism associates Alice's choice of measurement with an orthonormal basis for the Hilbert space $\mathcal{H}_A$ describing her particle, which is $\mathbb{C}^2$ in this case. For simplicity, we can imagine restricting Alice to one of two measurement choices, one described by $\mathcal{B}_z=\{\ket{\uparrow},\ket{\downarrow}\}$ and the other by $\mathcal{B}_x=\{\ket{+},\ket{-}\}$, where $\ket{\pm}=\frac{1}{\sqrt{2}}\big(\ket{\uparrow}\pm \ket{\downarrow}\big)$. If she chooses the measurement represented by $\mathcal{B}_z$, the collapse rule says that the joint state of the system after her measurement is described by either $\ket{\uparrow}\!\otimes\!\ket{\downarrow}$ or $\ket{\downarrow}\!\otimes\! \ket{\uparrow}$. In either case, the joint state is a product state, and for product states NRQM tells us that, at least for the purposes of describing Bob's particle alone, we can forget about the state of Alice's particle entirely by dropping the left-hand tensor factor. This means that we can meaningfully describe Bob's particle by a state of its own after Alice's measurement, namely an element of $\{\ket{\uparrow},\ket{\downarrow}\}$ if she measures in $\mathcal{B}_z$, and of $\{\ket{+},\ket{-}\}$ if she measures in $\mathcal{B}_x$.

We can now see what theorists who prove no-signaling theorems have to worry about. According to NRQM, the state Bob's particle ends up in depends on Alice's choice of measurement. Of course, the randomness of Alice's outcome means she cannot control which of the possible states Bob's particle is left in. In the case of the singlet state, each outcome occurs with probability $\frac{1}{2}$ regardless of the basis Alice chooses. However, she can control the fact \emph{that} Bob's particle ends up in $\{\ket{\uparrow},\ket{\downarrow}\}$, by measuring in $\mathcal{B}_z$. By choosing the measurement associated with $\mathcal{B}_x$, she instead ensures that it lies in $\{\ket{+},\ket{-}\}$. The control Alice has in this situation is sometimes referred to as ``steering,'' a term that goes back to Schr\"odinger, who noted that the theory tells us not only that the state of the other particle changes, but also that ``the state arrived at \emph{depends} quite decidedly on \emph{what} measurements one chooses to take---not only on the results they yield'' \cite{Schrodinger1936} (emphasis original).

At this point in the story, it seems that our only hope of avoiding signaling is for Bob to be unable to distinguish between the two processes Alice can bring about. Process Z leaves his particle in either $\ket{\uparrow}$ or $\ket{\downarrow}$ with equal probability, and Process X leaves it in $\ket{+}$ or $\ket{-}$ with equal probability. What we now need is a theory of what Bob can \emph{observe} about his particle, and what sort of inferences such observations license. This is the climax of our story. Can NRQM prevent Bob from learning what Alice has done? Note that the pressure has landed on epistemic features of single-system NRQM. After Alice's measurement, she drops out of the picture, and the remaining question is what Bob alone can learn about an unentangled particle well-localized to his lab. 

As it happens, NRQM's highly structured theory of observables restricts what Bob can learn in just the way that is needed. For any observable $M_B$ Bob chooses to measure, a particle produced by Process Z and a particle produced by Process X yield the same outcome probabilities for $M_B$. Since everything Bob can observe is, by assumption, described by an observable, he cannot tell which process Alice has brought about. No-signaling arguments generalize this story to any joint initial state $\Psi$, any choice of measurement by Alice, and any choice of observable by Bob (see, e.g., \cite{shimony1984controllable}).

So, according to NRQM, the peace is maintained. But only just, and for a surprising reason. One might expect that Alice's actions cannot influence the state of Bob's particle at all. However, this is not what NRQM says. The limit on what Alice can control is not enough to prevent signaling on its own. What ultimately does the work in the story above is the limit NRQM places on what Bob can observe.

\subsection{Which features of these scenarios are worth taking seriously?}\label{subsec:features}

The tale told in the last subsection used NRQM's mathematical resources to raise the possibility of signaling, then used other claims of NRQM to rule it out. Given that no one treats NRQM as a comprehensive final theory, what does this tell us about how NRQM can be generalized or modified?

The first thing to understand is that the concern raised by this scenario cannot simply be ``interpreted away.'' For instance, it might seem that the problem dissolves if we adopt an interpretation of NRQM on which collapse does not ``really happen.'' This move is, of course, unavailable to collapse theorists, who take collapse to be a real physical process. They certainly need to take the threat of signaling seriously, which may help explain why they have contributed extensively to the no-signaling literature \cite{ghirardi1979some, ghirardi1980general, ghirardi1983quantum, ghirardi1988experiments, ghirardi1989cp, bassi2008recent, ghirardi2012proposal, ghirardi2013entanglement}. However, even approaches to quantum theory that deny collapse face a version of the same problem. In explicitly nonlocal theories like Bohmian mechanics, no-signaling depends both on the quantum equilibrium hypothesis \cite{durr1992quantum}, without which signaling becomes possible \cite{valentini1991signal},
and on the claim that outcome statistics of experiments are always given by POVMs \cite{durr2004quantum}, either because an exact ``POVM theorem'' can be proved, as some argue \cite{tumulka2022foundations}, or because the claim holds ``to a very good approximation'' \cite{beck2025povm}. That non-POVM arrival-time statistics (see Section \ref{subsec:observability}) might permit signaling has been an explicit concern in the recent Bohmian literature \cite{goldstein2024arrival, amblard2026arrival}. 

Similar issues arise in other approaches. Some Everettians not only reject collapse, but also deny that Alice's choices change anything about Bob's subsystem at all \cite{page1982einstein}, which might seem to dissolve the concern. However, the relative state of Bob's particle, conditional on Alice's recorded outcome, depends on her choice of measurement. That relative state is what determines Bob's subsequent outcome probabilities in a given branch of the universal wave function. What needs to be ruled out is the existence of a protocol under which Alice and Bob tend to end up in branches in which Bob happens to receive the signal Alice attempts to send. 

Interpretations in which collapse reflects only a change in our information, rather than a physical change, face the same concern. In these interpretations, Alice's measurement updates our \emph{description} of Bob's particle, and the updated description, which fixes Bob's subsequent outcome probabilities, depends on her choice of measurement. If that description predicts that Bob receives Alice's message more often than chance, it predicts signaling, however it is interpreted. Nor does the concern go away if NRQM is regarded as merely an effective approximation to a relativistic theory such as QFT. To the extent that the deeper theory recovers NRQM's predictions in the relevant regime, it inherits the same problem. This is why no-signaling arguments are needed in QFT as well (see \cite{eberhard1989quantum}).

What must be explained in each setting is why Bob cannot distinguish the scenarios Alice can bring about. This requires an account of what Bob can distinguish, which is the role observables play in NRQM. Rules about what Bob can observe can simply be postulated, as in operational treatments that define observables as POVMs (e.g., \cite{busch1995operational}), or one can try to derive them from more primitive assumptions. Approaches differ in how they answer this question, but none can avoid it entirely if they grant that the empirical content of NRQM's description of the scenario is correct.

The formalism I adopt below invokes neither collapse nor conditional states. Its relevant postulate (Postulate \ref{SMQC}, Section \ref{subsec:semiorthodox}) is an empirical claim about the relationship between outcome probabilities for different experiments, and that claim alone makes signaling possible if Bob can observe too much.

\subsection{What can Bob observe?}\label{subsec:observability}

In NRQM's telling of the story, the measurement postulates describe both what Alice does and what Bob can observe. Alice's side, which triggers collapse, has received considerable scrutiny. Bob's side has received less, because it belongs to the part of the measurement postulates theorists tend to be comfortable with, namely, the ``probability-generating'' fragment, which is often identified simply with ``the Born rule.'' This fragment not only makes prescriptive claims about outcome probabilities but also restrictive claims about ``observability.'' The most important of these is that \emph{everything} Bob can observe is described by an ``observable,'' a word designed to carry exactly this connotation.

Many theorists can and do question this claim, not just because they are generally dissatisfied with the measurement postulates, but also because there are concrete empirical questions that are difficult, if not impossible, to address using the textbook framework. One illustrative example is the still-unsolved arrival-time problem, which asks how to describe time-of-flight data for quantum particles (see \cite{muga2000arrival}). Part of the difficulty is the apparent impossibility of defining a ``time operator.'' This point is often attributed to Pauli, who wrote a famous footnote claiming that ``the introduction of an operator $t$ is basically forbidden and the time $t$ must necessarily be considered as an ordinary number (`c'-number)'' \cite{pauli2012general}.

However one interprets Pauli's claim, it does not mean that we cannot \emph{physically} perform laboratory procedures that measure duration. We can and do. At most, it implies that the standard theory of self-adjoint operators cannot \emph{describe} such experiments, at least under certain assumptions about how that theory's resources are to be used. The challenge of describing the resulting data cannot be met by insisting that experimentalists do not ``really'' observe what they say they do. Such a claim may be theoretically defensible in a given interpretation, but it does not answer the empirical question the arrival-time problem poses.

Theorists working on the arrival-time problem try to extend the orthodox formalism in theoretically motivated ways, typically proposing descriptions that are \emph{not} given by self-adjoint operators. Often these proposals can be framed in terms of POVMs, but sometimes they go beyond the POVM framework altogether \cite{marchewka2001survival, ruggenthaler2005times, das2019arrival}. Given the story in Section \ref{subsection:whatworry}, however, any extension of the class of available ``observables'' threatens to introduce signaling. In the extreme case, if Bob had access to a ``state readout device'' that told him what his state was, he could easily tell what Alice did.\footnote{Cf.\ \cite{ghirardi1983quantum} for discussion of an early proposal to evade no-signaling arguments ``by widening the class of observables,'' and \cite{goldstein2024arrival, amblard2026arrival} for discussion of a recent example \cite{das2019arrival} from the arrival-time literature.} This is how a seemingly mundane empirical question comes to bear on signaling. Not every extension actually introduces signaling, however. POVMs, for instance, go beyond self-adjoint operators, yet the standard no-signaling arguments still apply to them. How can we tell which extensions of NRQM's theory of observables threaten to introduce signaling, and which are benign?

\subsection{What are the stakes?}\label{subsec:stakes}

The stakes of this question are widely thought to be extremely high. Many theorists worry that causality itself is threatened, since, as Griffiths memorably puts it, according to SR,
\begin{quote}
a \emph{causal} influence that propagated faster than light would carry unacceptable implications. For according to special relativity there exist inertial frames in which such a signal propagates \emph{backward in time}---the effect preceding the cause---and this leads to inescapable logical anomalies. (You could, for example, arrange to kill your infant grandfather. Not a good idea!) \cite{griffiths2005introduction}
\end{quote}
On the other hand, Kent has argued that it is not logically necessary to put this sort of burden on the theory of observables \cite{kent2005nonlinearity, kent2021quantum}. In his ``local state'' formalism, the description of spatially separated subsystems incorporates spacetime structure in a way that standard NRQM does not. This decouples questions about single-system observability from questions about superluminal signaling, so that even introducing quantum state readout devices would not allow faster-than-light signaling (see Section \ref{sec:TheBunkerGame}).

That spacetime structure and ``observability'' can be decoupled in this way should not be surprising, given the limited role spacetime notions played in the no-signaling argument of Section \ref{subsection:whatworry}. There, concerns about superluminality translated into concerns about the existence of $d_{AB}$-independent signaling protocols. But what if we set aside questions of signal speed and put Alice and Bob close together? In this setting, we can no longer rely on the decay of familiar potentials with distance to justify neglecting interactions between their systems, as standard no-signaling arguments typically do (see, e.g., \cite{ghirardi1980general}).\footnote{It has been argued that the reliance of no-signaling proofs on non-interaction makes them question-begging as defenses of the peaceful coexistence of NRQM and SR \cite{peacock1998edge}.} Still, particles that are close together can also be dynamically isolated from one another, or very nearly so. We can therefore ask whether the resources of NRQM can be used to rule out signaling between mutually isolated systems, even when they are close together.

Reframing the signaling question in terms of isolation rather than distance disentangles it from questions of spacetime structure. What is at issue is whether there can be ``signaling through opaque media'' \cite{kent2021quantum}. No-signaling arguments answer this question in the negative, a point Lindblad makes with particular clarity in reply to a proposal for superluminal signaling between the detectors on the two wings of an EPR experiment.\footnote{Lindblad's comment was one of several replies to the proposal of Datta, Home, and Raychaudhuri \cite{datta1987curious} (see also \cite{finkelstein1987cp, squires1988non}). Hall similarly locates the essential assumption in the independence of the measuring apparatus from the distant subsystem, concluding that ``a non-local effect can occur only if a non-local measurement is made'' \cite{hall1987imprecise}.}
\begin{quote}
In a non-relativistic theory such a signal is not forbidden, but it can exist only if there is an interaction between the two detectors. However, here the particles cease to interact after the creation of the pair, so there can be no such interaction if the detectors are independent, and hence there should be no signal whatever, regardless of velocity. \cite{lindblad1987comment}
\end{quote}
Unlike a prohibition on superluminal signaling, this ``no-signaling hypothesis'' can be formulated within NRQM proper,\footnote{Cf.~\cite{barnum2003no} on the difference between no-signaling and no faster-than-light signaling.} and it has nothing to do with the safety of our infant grandfathers.

\subsection{The bunker game}\label{sec:TheBunkerGame}

To make the no-signaling hypothesis concrete, consider the following cooperative two-player game, to be played by Alice and Bob. The game requires a highly secure bunker, designed to prevent what goes on inside it from being inferred from outside. No expense has been spared. The bunker is equipped with lead-lined walls, vacuum-gapped thermal isolators, Faraday cages, etc., and is buried deep underground with no phone lines or other communication infrastructure connecting it to the outside world. It can only be accessed via a single tunnel, which can be closed off by doors designed to be as impermeable as the bunker itself.  

Each round, Alice enters the bunker, seals the tunnel behind her, and flips a fair coin whose sides are marked $\mathrm{Y}$ and $\mathrm{N}$.\footnote{The outcome of the coin flip is assumed to be independent of any systems Alice and Bob have prepared, and of any resources or information they shared before the round.} Alice may then do whatever she likes to try to communicate the outcome of the coin flip to Bob. The only rule is that she must stay inside the bunker, with the tunnel sealed, until the current round of the game is over. Bob, who remains outside throughout, can do whatever he likes to try to infer the outcome of Alice's coin flip. Before the tunnel is reopened, Bob must guess whether the coin landed $\mathrm{Y}$ or $\mathrm{N}$. He can take as much time as he likes to guess. In particular, the spacetime region in which Bob arrives at his guess can be fully inside the future light cone of any relevant action Alice takes. Alice can bring anything she wants into the bunker. This includes any laboratory equipment (e.g., a Stern--Gerlach apparatus), and any physical system she likes that can have interacted in any way with something remaining outside with Bob. One situation to have in mind is the analog of the signaling scenario of Section \ref{subsection:whatworry}, where Alice goes into the bunker with one particle from an entangled pair, the partner of which stays with Bob.

The winning conditions of the bunker game are as follows. A positive integer $N$ and a parameter $\delta\in \big(0,\tfrac{1}{2}\big)$ are chosen. Alice and Bob play $N$ independent rounds of the game, each of which ends with Alice coming out of the bunker and revealing the answer so that Bob's guess can be checked. After $N$ rounds, if Bob has guessed correctly more than $N\times\big(\tfrac{1}{2}+\delta\big)$ times, Alice and Bob win. I will call this the $(N,\delta)$ bunker game. The question is whether there is a strategy Alice and Bob can carry out that gives them a $95\%$ chance\footnote{If Alice and Bob can find a strategy that ensures Bob's guesses are correlated with the coin flips, they can reach any fixed probability of winning by taking $N$ large enough. For instance, if Bob's guess is correct with probability $\frac{1}{2}+\varepsilon$ on each round, then for $\delta=\varepsilon/2$, Chebyshev's inequality bounds the probability of losing by $1/(N\varepsilon^2)$, so $N\ge 20/\varepsilon^2$ rounds suffice for a $95\%$ chance of winning.} of winning the $(N,\delta)$ bunker game for at least one pair $(N,\delta)$.

Let's now analyze an illustrative strategy using NRQM. As in Section \ref{subsection:whatworry}, we describe Alice's particle by its spin alone, so $\mathcal{H}_A=\mathbb{C}^2$ with basis $\{\ket{\uparrow},\ket{\downarrow}\}$. Bob, by contrast, may be far more resourceful, performing more complicated experiments (e.g., arrival-time experiments). We will therefore allow the Hilbert space $\mathcal{H}_B$ describing Bob's particle to be infinite-dimensional, like the space $L^2(\mathbb{R}^3)$ used for a particle's spatial degrees of freedom, assuming only that it is separable. Any joint state $\Psi\in\mathcal{H}_A\otimes\mathcal{H}_B$ can then be written as $\Psi=\ket{\uparrow}\!\otimes \psi + \ket{\downarrow}\!\otimes \phi$ for some $\psi,\phi\in\mathcal{H}_B$, where normalization requires $\|\psi\|^2+\|\phi\|^2=1$.

Suppose Alice and Bob carry out the following protocol. If the coin lands $\mathrm{Y}$, Alice measures spin in the $z$-direction, which is associated with the basis $\mathcal{B}_z=\{\ket{\uparrow},\ket{\downarrow}\}$, and if it lands $\mathrm{N}$, she measures spin in the $x$-direction, which is associated with $\mathcal{B}_x=\{\ket{+},\ket{-}\}$. What does NRQM tell us about whether this protocol works?

Let's call the situation Bob is in after Alice's measurement Condition $\mathrm{Y}$ when the coin lands $\mathrm{Y}$, and Condition $\mathrm{N}$ when it lands $\mathrm{N}$. Analyzing Condition $\mathrm{Y}$ is straightforward. The Born rule tells us that the probability Alice gets ``$z$-up'' is $\|\psi\|^2$ and the probability she gets ``$z$-down'' is $\|\phi\|^2$. In the former case, the collapse rule tells us Bob's system is subsequently described by $\psi/\|\psi\|$, and in the latter by $\phi/\|\phi\|$.\footnote{If $\psi=0$, the corresponding outcome occurs with probability zero, and we simply omit that term from $\mu_z$. The same convention applies to $\phi$, and to $\psi\pm\phi$ in $\mu_x$ below.} From Bob's perspective, when the coin lands $\mathrm{Y}$, his particle behaves exactly as though it had been produced by a random state-generating machine that returns $\psi/\|\psi\|$ with probability $\|\psi\|^2$ and returns $\phi/\|\phi\|$ with probability $\|\phi\|^2$. This can be described by a discrete probability distribution\footnote{Section \ref{sec:ProperMixture} explains why I do not use density operators here.} $\mu_z$ over $\mathcal{H}_B$, which we write as
$$
\mu_z=\left\{\left(\|\psi\|^2,\frac{\psi}{\|\psi\|}\right),\left(\|\phi\|^2,\frac{\phi}{\|\phi\|}\right)\right\}\,.
$$
To analyze Condition $\mathrm{N}$, we first express the joint state $\Psi$ in terms of $\mathcal{B}_x$, which gives us  $\Psi= \ket{+}\!\otimes\!\left(\tfrac{\psi+\phi}{\sqrt{2}}\right) + \ket{-}\!\otimes\!\left(\tfrac{\psi-\phi}{\sqrt{2}}\right)$. We then find that, from Bob's perspective, his particle behaves as though it had been produced by a random state-generating machine described by
$$
\mu_x = \left\{\left(\frac{\|\psi+\phi\|^2}{2},\frac{\psi+\phi}{\|\psi+\phi\|}\right),\left(\frac{\|\psi-\phi\|^2}{2},\frac{\psi-\phi}{\|\psi-\phi\|}\right)\right\}\,.
$$
Whether this protocol lets them win the bunker game comes down to whether Bob can reliably distinguish the output of the random state-generating machine described by $\mu_z$ from that of the one described by $\mu_x$. 

This is where NRQM's restrictive claim about observability does its work. Everything Bob can learn about his particle comes from measuring an observable, represented by a self-adjoint operator $M_B$ on $\mathcal{H}_B$. By the spectral theorem, $M_B$ gives rise to a projection-valued measure, which associates a projection $E(O)$ with each set of possible outcomes $O$. Suppose, then, that on each round of the game, Bob measures some such $M_B$ and bases his guess on whether the outcome lies in some set $O$.\footnote{Bob is free to choose different measurements and outcome sets on different rounds.} Let $\mathbb{P}(O\,;\,\mu_z)$ denote the probability that his outcome lies in $O$ when the coin has landed $\mathrm{Y}$, so that his particle is described by a random wave function drawn according to $\mu_z$, and define $\mathbb{P}(O\,;\,\mu_x)$ analogously. Conditioning on Alice's outcome, the standard law of total probability\footnote{This application of the law of total probability is unproblematic, since Alice's measurement unambiguously occurs and its outcome can be recorded. This makes it, on all accounts, the sort of event that one can condition on (unlike, e.g., unobserved which-path alternatives in a double-slit experiment).} gives
\begin{align*}\label{eq:muzprob}
\mathbb{P}(O\,;\,\mu_z)
&=\left\| E(O)\,\frac{\psi}{\|\psi\|}\right\|^2 \|\psi\|^2 +\left\| E(O)\,\frac{\phi}{\|\phi\|}\right\|^2 \|\phi\|^2
=\left\| E(O)\,\psi\right\|^2 +\left\| E(O)\,\phi\right\|^2\,.\numberthis
\end{align*}
Similarly, conditioning on Alice's outcome and applying the parallelogram law gives
\begin{align*}\label{eq:muxprob}
\mathbb{P}(O\,&;\,\mu_x)
=\left\| E(O)\,\frac{\psi+\phi}{\|\psi+\phi\|}\right\|^2 \frac{\|\psi+\phi\|^2}{2} +\left\| E(O)\,\frac{\psi-\phi}{\|\psi-\phi\|}\right\|^2\frac{\|\psi-\phi\|^2}{2}\\
&=\frac{1}{2}\left\| E(O)\,\psi + E(O)\,\phi\right\|^2 +\frac{1}{2}\left\| E(O)\,\psi - E(O)\,\phi\right\|^2= \left\| E(O)\,\psi\right\|^2 +\left\| E(O)\,\phi\right\|^2\,.\numberthis
\end{align*}
Putting together Equations \eqref{eq:muzprob} and \eqref{eq:muxprob}, we have that $\mathbb{P}(O\,;\,\mu_z)=\mathbb{P}(O\,;\,\mu_x)$ for every self-adjoint operator $M_B$ and every set of outcomes $O$. Bob's outcome is therefore distributed identically regardless of how the coin landed, so any rule for turning his outcome into a guess is expected to be correct exactly half the time. No matter how large $N$ is, then, Alice and Bob cannot be confident of winning the $(N,\delta)$ bunker game for any $\delta>0$ using this protocol.

The above argument gives the essential structure of an NRQM-based no-signaling argument (see Section \ref{OPFPosts} for a more formal discussion). Its conclusion concerns only Bob's marginal statistics, but the derivation passed through post-collapse states that depend on Alice's choice of measurement. If this dependence is physically real, the influence involved requires no spatial intermediary, and no bunker, however carefully designed, can completely seal off Alice's ability to influence the outside world. Can she use this influence to signal through barriers? NRQM says she cannot.

Not every theory agrees. An analysis of the bunker game in Kent's local state formalism \cite{kent2005nonlinearity, kent2021quantum}, equipped with quantum state readout devices, reaches the opposite conclusion. According to that formalism, as soon as Alice's measurement lies in Bob's past light cone, the local state assigned to Bob's particle changes. Since Bob can wait as long as he likes before guessing, he can wait until he is in the future light cone of Alice's relevant actions, and then use the quantum state readout device to determine whether his particle is described by one of the states produced by $\mu_z$ or by $\mu_x$. If Alice and Bob choose an initial joint state for which these sets of states are disjoint (the singlet state, for instance), Bob can guess correctly every time. So if state readout devices exist and the local state formalism is valid, Alice and Bob cannot send \emph{superluminal} signals, but they can nevertheless win the bunker game.

\subsection{A minimalistic framework for discussing signaling in NRQM}\label{sec:overview}

We can now state the central question of this paper more precisely. Suppose that Alice and Bob \emph{cannot} win the bunker game. What does this tell us about what is \emph{not} ``observable'' about a quantum system? In particular, how does this assumption constrain the probability-generating fragment of quantum mechanics?

As I have emphasized, the answer depends on which assumptions are made and how they are formalized. The relevant class of experiments, discussed in detail in Section \ref{sec:PandM}, is not \emph{all} possible experiments, but only ``prepare-and-measure experiments,''\footnote{In this paper, I will try to follow Bell's guidance in \cite{bell1990against} and use the word ``experiment'' rather than ``measurement'' whenever I am no longer working within textbook NRQM. The term ``prepare-and-measure experiment'' departs from this practice, but I will use it because it is standard (see, e.g., \cite{leifer2014quantum}) and appropriately circumscribes the class of experiments discussed. As I use it, the term is not meant to imply that some preexisting property of the prepared system is being ``measured.'' Experiments resembling Rube Goldberg machines, in which a prepared particle sets off a complicated chain of events (and where there is no expectation that observations of downstream phenomena amount to ``measuring'' a property of the particle that set off the process), are still within scope.} which fit the following idealized description. First, a procedure is followed that ``prepares a target system,'' e.g., a single particle. Then, some further procedure is carried out on the prepared system, e.g., an arrival-time experiment. The whole procedure is then repeated many times, and outcome frequencies are tabulated.

The aim is to quantitatively describe outcome probabilities for such experiments. I will take an ``operationalist'' perspective that models experiments as black boxes that take in descriptions of ``prepared systems'' and output outcome probabilities. These black boxes will be described by ``outcome probability functions'' (OPFs), a term borrowed from Galley and Masanes \cite{galley2017classification}. This perspective does not reflect a commitment to the view that ``any claim that cannot be settled by experiment is meaningless.'' Instead, it reflects the view that empirical questions ought to be the common denominator between different approaches to quantum theory. Indeed, whatever their interpretational stance, most theorists agree that making quantitative predictions about outcome probabilities is \emph{among} the things a satisfactory theory of quantum mechanics ought to be able to do.

Modeling experiments as black boxes sets aside dynamical questions. I have chosen to do this for two reasons. First, beyond the assumption that the subsystems do not interact (which the framework captures through a notion of isolation, see Section \ref{sec:PPI}), the dynamics of state evolution plays no essential role in the signaling story told above. The only change of state the argument used was the one produced by the collapse postulate. Second, assumptions about dynamics are contested. For instance, whether to restrict attention to unitary evolution or to allow the non-unitary evolution proposed by collapse theories is something proponents of different interpretations disagree about. Keeping such assumptions out of the framework lets its results apply more broadly. Many important results relate no-signaling to dynamical considerations such as linearity \cite{gisin1989stochastic, gisin1990weinberg, polchinski1991weinberg, simon2001no}. These typically take the standard account of observables for granted and ask what no-signaling implies about dynamics, whereas this paper sets dynamics aside and asks what no-signaling implies about observables.

I will make the standard assumption that ``prepared states'' can be labeled by elements of a Hilbert space. This allows us to discuss ``orthogonality'' and ``superpositions'' in the usual way. Among other things, this means the OPFs we are looking for are functions from a unit sphere $S(\mathcal{H})$ in a Hilbert space $\mathcal{H}$ to a space of probability measures on an outcome space $\Omega$. Another important background assumption is that composite quantum systems are described using the tensor product formalism (see Section \ref{sec:compositeExps}). From the perspective of generalized probabilistic theories (GPTs), both of these representational choices are non-trivial assumptions (see \cite{plavala2023general}), and the same is true in other frameworks such as Sorkin's quantum measure theory \cite{sorkin1994quantum}. Many theorists have tried to derive them from assumptions they take to be more fundamental. Some derive Hilbert space structure, mostly in finite dimensions \cite{hardy2001quantum, chiribella2011informational, masanes2011derivation}, while others derive the tensor product representation of composite systems \cite{aerts1978physical, carcassi2021four}. I take both as given, and ask what follows when no-signaling is added.

One motivation for the present work comes from studying proposed solutions to the arrival-time problem discussed in Section \ref{subsec:observability}. I have therefore aimed for a framework that is as simple as possible while still general enough to handle arrival-time models. Since these models are typically formulated using infinite-dimensional Hilbert spaces like $L^2(\mathbb{R})$, I will \emph{not} assume that the Hilbert spaces used to describe prepared states are finite-dimensional, only that they are separable. The remaining assumptions that the story above relied on will be stated as explicit postulates, with single-system postulates given in Section \ref{sec:PandM}, and composite-system postulates given in Section \ref{sec:compositeExps}.

\section{Prepare-and-Measure Experiments}\label{sec:PandM}

As described informally in Section \ref{sec:overview}, a prepare-and-measure experiment has three ingredients: (1) a procedure to ``prepare the target system,'' (2) a preparation-independent experimental protocol, and (3) a means of recording which of a predetermined set of possible ``outcomes'' has occurred.

What exactly licenses the claim that a certain ``preparation'' has been achieved is not a question I will address in detail, though there are several common justifications available. One is post-selection on a measurement outcome, e.g., only keeping particles that go up in a Stern--Gerlach apparatus \cite[Ch.~5]{feynman1965lectures}. Another is performing certain laboratory procedures, like ``cooling a particle to the ground state of a trap'' \cite{diedrich1989laser}, that are thought to reliably produce a given quantum state. Techniques from quantum state tomography \cite{heinosaari2013quantum} can then be used to corroborate the claim that a given procedure ``prepares'' the intended state. 

Once this preparation stage is complete, a preparation-independent protocol is applied. For instance, the protocol might involve switching off a trap, allowing the particle to propagate freely for a predetermined interval of time, firing a laser pulse over the region the particle is expected to occupy, and then detecting the scattered or transmitted light. Such a protocol is ``preparation independent'' in the sense that it can be followed no matter what state the trapped particle began in. The ``outcome'' of the experiment need not be a single discrete value, but can be something much richer, e.g., an image captured by an array of optical detectors, ultimately encoded as brightness levels for a family of pixels. The prepared state describes the prepared system at the time the preparation-independent protocol is initiated.\footnote{Unlike in textbook applications of the Born rule to time-dependent states, where an observable is measured at a specific instant of time, the outcome of a prepare-and-measure experiment need not be associated with any ``time of measurement.'' The notion of an ``initiation time'' for an experiment (e.g., the time at which a particle is released from a trap) provides a natural reference time in models where the relative timing of Alice and Bob's actions plays a role.\label{fn:initiation}}

The theoretical goal is to describe the probability of a given set $O$ of outcomes as a function of the preparation procedure or, in more familiar terms, of the state it prepares. As the sections below describe in more detail, the mathematical model I associate with a prepare-and-measure experiment is a tuple $\mathcal{E}=(\Omega_{\mathcal{E}},\Sigma_{\mathcal{E}},\mathcal{H}_\mathcal{E}, \mathbb{P}_{\mathcal{E}})$. Here $(\Omega_{\mathcal{E}},\Sigma_{\mathcal{E}})$ is a measurable space describing the possible experimental outcomes, $\mathcal{H}_{\mathcal{E}}$ is a non-zero separable complex Hilbert space used to model preparation procedures, and $\mathbb{P}_{\mathcal{E}}:\Sigma_{\mathcal{E}}\times S(\mathcal{H}_\mathcal{E})\to [0,1]$ is a function such that the map $O\mapsto \mathbb{P}_{\mathcal{E}}(O\,;\psi)$ is a probability measure on $(\Omega_{\mathcal{E}},\Sigma_{\mathcal{E}})$ for each unit vector $\psi\in S(\mathcal{H}_\mathcal{E})$. The function $\mathbb{P}_{\mathcal{E}}$ is also required to be phase-invariant, i.e., $\mathbb{P}_{\mathcal{E}}(O\,;e^{i\theta}\psi)=\mathbb{P}_{\mathcal{E}}(O\,;\psi)$ for all $\theta\in \mathbb{R}$. I will call such a tuple a \emph{Hilbert space-based Prepare-and-Measure experiment Model}, or HPMM.

As the density of adjectives involved in this definition suggests, this need not be understood as a completely general framework for modeling \emph{all} quantum-mechanical experiments.\footnote{Examples of experimental protocols that are not directly covered are ``weak measurement'' protocols involving post-selection \cite{aharonov1988result}. In such protocols, the directly observed empirical data are processed using post-selection, and data points that are post-selected out are not assigned an outcome. This violates the requirement of the present framework, stated below, that each run have exactly one outcome. It also has structural consequences. The conditional outcome probabilities in such protocols are \emph{ratios} of quadratic forms, and so generally fail Postulate \ref{QF} below. However, this is simply an artifact of how the empirical data are processed, and not the sort of violation of Postulate \ref{QF} that anyone would find surprising.} An HPMM formalizes a collection of successful modeling conventions that many different approaches to NRQM share. The scope of HPMMs, however, is the class of experiments to which the textbook Born rule is applied. Since the Born rule is often thought to capture the full empirical content of quantum mechanics, many theorists \emph{do} take this class of models to apply to all ``quantum experiments.''\footnote{For example, Tumulka asserts that ``every quantum physical experiment'' is described by a POVM \linebreak \cite[Sec.~5.1.2]{tumulka2022foundations} (cf.~Postulate \ref{POVM} below).}

In the framework that follows, the mathematical objects over which we will quantify are HPMMs, their composites, and families thereof. The postulates are stated in terms of two predicates. The first, \emph{availability}, applies to HPMMs. The second, \emph{describing a post-preparation independent composite experiment}, applies to composite HPMMs and is introduced in Section \ref{sec:PPI}. For brevity, a composite HPMM satisfying the second predicate will sometimes be called ``post-preparation independent.'' Informally, an HPMM is meant to be ``available'' when it describes a prepare-and-measure experiment, such as a Stern--Gerlach measurement, that Alice or Bob can actually carry out. Formally, both should be read as primitive predicates. The postulates below are constraints on these predicates, but the mathematical framework itself does not determine which HPMMs are available or which composites are post-preparation independent. Availability assumptions come in through explicit postulates, in particular Postulate \ref{EQC}. The logical relations mapped in Figure \ref{Implications} therefore hold regardless of how these predicates are interpreted, and a model of a given set of postulates amounts to specifying which HPMMs are available and which composites describe post-preparation independent experiments (see Section \ref{subsec:PRmodel} for an example). 

\subsection{Outcome spaces}

The first ingredient needed to model a prepare-and-measure experiment is an \emph{experimental outcome space}, which we formalize via the standard Kolmogorov framework for probability theory \cite{kolmogorov2018foundations}. A space of possible experimental outcomes is represented by a measurable space $(\Omega_{\mathcal{E}},\Sigma_{\mathcal{E}})$, where $\Omega_{\mathcal{E}}$ is understood to be the set of all possible outcomes, and $\Sigma_{\mathcal{E}}$ is a $\sigma$-algebra of subsets of $\Omega_{\mathcal{E}}$, interpreted as the sets of outcomes to which we can meaningfully ascribe probabilities.

As an example, consider the standard Stern--Gerlach experiment with a planar glass detection plate oriented along the $xy$-plane. Suppose each run uses a single silver atom. The atom either adheres to the glass plate at some position $(x,y)$ or it does not, so we can take $\Omega_{\mathcal{SG}}$ to be $\{\mathrm{ND}\}\cup \mathbb{R}^2$, where $\mathrm{ND}$ indicates ``no detection.'' A standard choice for $\Sigma_{\mathcal{SG}}$ is the $\sigma$-algebra generated by the open sets of $\mathbb{R}^2$ together with the singleton set $\{\mathrm{ND}\}$. The finite size of any real glass plate is not a problem here. Our outcome space $\Omega_{\mathcal{SG}}$ just needs to be able to describe the information we want to record about our experiment, so points outside the plate can simply be assigned probability zero.

Note that the outcome space $\Omega_{\mathcal{E}}$ is \emph{not} given to us by the physical experiment itself, but rather reflects our choices about what to model. In the example of the Stern--Gerlach experiment, for instance, we could equally well choose the coarser outcome space $\Omega_{\mathcal{SG}}'=\{\mathrm{ND},\mathrm{UP},\mathrm{DOWN}\}$, and describe any atom that adheres to the plate at a position $(x,y)$ with $y>0$ as an $\mathrm{UP}$ outcome, and any atom that adheres elsewhere as a $\mathrm{DOWN}$ outcome. This is, in fact, the coarse-grained description theorists typically use. Likewise, we could choose a refinement of $\Omega_{\mathcal{SG}}$, e.g., by recording data about \emph{when} the atom adheres to the plate.

In this framework, we assume that each run of the experiment has \emph{exactly one} ``outcome.'' This requires that $\Omega_{\mathcal{E}}$ comprehensively describe all possible outcomes. In practice, this often means giving the experiment a fixed finite duration $T$ and adding to the outcome space a formal $\mathrm{ND}$ result indicating a ``null'' outcome, as in the Stern--Gerlach example above. In experiments whose natural duration depends on the prepared state, a temporal cutoff gives an unambiguous criterion for declaring that the null outcome has occurred. For many experiments, this is not a substantive restriction. Experiments with no natural temporal cutoff, however, such as waiting for the decay of a long-lived particle, are a different matter. Each truncation of such an experiment at a finite time $T$ can be modeled by an HPMM, but it is not always clear how to stitch such a family of HPMMs together into a single model for the untruncated experiment.

\subsection{State preparation}\label{subsec:stateprep}

The next thing we need is a mathematical description of preparation procedures. Outcome spaces are described just as they would be in any probabilistic theory, but preparations are where the ``quantum-mechanical'' character of the experiments becomes important. In quantum mechanics, ``prepared states'' are typically labeled by elements of an inner product space, which lets us make sense of ``adding together'' the labels of preparation procedures to form superpositions. In the present framework, I will assume that the preparation procedures for the quantum experiments of interest can be labeled by elements of the unit sphere $S(\mathcal{H}_\mathcal{E})$ of a separable Hilbert space $\mathcal{H}_\mathcal{E}$. I will call $\mathcal{H}_\mathcal{E}$ ``the space of preparations.''\footnote{Since we will only be interested in phase-invariant outcome probability functions $\psi\mapsto\mathbb{P}_{\mathcal{E}}(O\,;\psi)$, we would lose no generality by assuming that prepared states are described by elements of the projective Hilbert space $\mathcal{P}(\mathcal{H}_{\mathcal{E}})$. However, in the context of Postulates \ref{WPL} and \ref{PL}, working with rays rather than unit vectors becomes cumbersome, since addition of rays in $\mathcal{P}(\mathcal{H}_{\mathcal{E}})$ is ill-defined. I therefore work with the redundant space $S(\mathcal{H}_{\mathcal{E}})$ and impose phase invariance as an additional requirement.} Section \ref{sec:ProperMixture} generalizes this labeling to ``proper mixtures,'' described by elements of the space $\mathrm{M}_1(S(\mathcal{H}_{\mathcal{E}}))$ of Borel probability measures on $S(\mathcal{H}_{\mathcal{E}})$.

Not every real-world experiment involves a clean ``preparation'' phase, or even any well-defined preparation phase at all. For instance, in the original Stern--Gerlach experiment (as opposed to an idealized Stern--Gerlach experiment performed with a single atom carefully prepared in a definite spin state), the closest thing to a ``preparation phase'' is the heating of the oven that emits the silver atoms, and it is not clear what initial wave function, or even what probability measure over initial wave functions, should be assigned. Other quantum-mechanical experiments, e.g., those involving cosmic rays, do not naturally fit into this framework, since the particles involved are not ``prepared'' by the experimenter at all.

\subsection{Outcome probability functions (OPFs)}

The final component of an HPMM is its OPF (cf.~\cite{galley2017classification}). Given an outcome space $(\Omega_{\mathcal{E}},\Sigma_{\mathcal{E}})$ and a space of preparations $\mathcal{H}_{\mathcal{E}}$, an \emph{outcome probability function} is a function $f_{\mathcal{E}}$ that assigns to each $\psi\in S(\mathcal{H}_{\mathcal{E}})$ a probability measure $f_{\mathcal{E}}(\psi)$ on $(\Omega_{\mathcal{E}},\Sigma_{\mathcal{E}})$. Equivalently, an outcome probability function $f_{\mathcal{E}}$ can be encoded by $\mathbb{P}_{\mathcal{E}}:\Sigma_{\mathcal{E}}\times S(\mathcal{H}_{\mathcal{E}})\to [0,1]$ given by $\mathbb{P}_{\mathcal{E}}(O\,;\psi)= f_{\mathcal{E}}(\psi)(O)$. This function is informally understood to describe the probability of an outcome in the set $O\in \Sigma_{\mathcal{E}}$ when the preparation label is $\psi$. OPFs are defined on all of $S(\mathcal{H}_\mathcal{E})$, and the postulates below quantify over all of $S(\mathcal{H}_\mathcal{E})$. This amounts to adopting the standard idealization that every unit vector in $\mathcal{H}_\mathcal{E}$ can be treated as a label for a possible preparation.\footnote{As noted at the start of this section, I am setting aside the question of when one is justified in claiming that a given state can be or has been ``prepared.'' If some states cannot be prepared, even approximately, this idealization fails, and the instances of the postulates involving those states have no direct empirical content. For instance, a violation of Postulate \ref{NS} that occurred only for joint states that cannot be prepared could not be used to signal. What the postulates imply when restricted to proper subsets of $S(\mathcal{H})$ is a question this paper leaves open.}

The interpretation of probability in this framework deserves a brief comment. Formally, the probabilities in this paper are simply real numbers, which we can add, multiply, and compare in the standard way. Claims about relationships between outputs of an OPF, such as Postulate \ref{PL} below, are standard arithmetic claims. The framework requires only that these numbers obey the Kolmogorov axioms, and is otherwise agnostic about what they mean. Interpretive assumptions enter only in motivating the postulates, e.g., in the link between probabilities and observed frequencies used in Section \ref{sec:signalingViaChoice}.

\subsubsection{Postulates on OPFs: WPL, PL, and QF}\label{OPFPosts}

We can now state three postulates on the structure of OPFs, beginning with the simplest.

\begin{postulatecustom}{WPL}[The weak parallelogram law for OPFs]\label{WPL} An HPMM $\mathcal{E}$ has \emph{Property \ref*{WPL}} if for any pair of orthogonal states $\psi, \phi\in S(\mathcal{H}_{\mathcal{E}})$, and for any set of outcomes $O\in \Sigma_{\mathcal{E}}$, 
$$
\mathbb{P}_{\mathcal{E}}\left(O\,;\psi\right) +  \mathbb{P}_{\mathcal{E}}\left(O\,;\phi\right) =\mathbb{P}_{\mathcal{E}}\left(O\,;\frac{\psi + \phi}{\sqrt{2}}\right) +  \mathbb{P}_{\mathcal{E}}\left(O\,;\frac{\psi - \phi}{\sqrt{2}}\right)\,.
$$
Under \emph{Postulate \ref*{WPL}}, all available HPMMs satisfy Property \ref*{WPL}. 
\end{postulatecustom}

\noindent The weak parallelogram law concerns only equally weighted superpositions of orthogonal states.\footnote{I have stated Postulate \ref{WPL} for phase-invariant functions on $S(\mathcal{H}_{\mathcal{E}})$ for simplicity. It can also be stated in terms of rays in $\mathcal{P}(\mathcal{H}_{\mathcal{E}})$. A function $f:\mathcal{P}(\mathcal{H}_{\mathcal{E}})\to [0,1]$ satisfies the \emph{projective weak parallelogram law} if for any pair of orthogonal rays $r_1,r_2\in\mathcal{P}(\mathcal{H}_{\mathcal{E}})$, for any $\tilde{r}_1,\tilde{r}_2\in S(\mathcal{H}_{\mathcal{E}})$ such that the natural projection map $\pi : \mathcal{H}_{\mathcal{E}}\setminus\{0\}\to\mathcal{P}(\mathcal{H}_{\mathcal{E}})$ satisfies $\pi(\tilde{r}_i)=r_i$ for $i\in\{1,2\}$, $f(r_1)+f(r_2) = f\big(\pi(\tilde{r}_1+\tilde{r}_2)\big) + f\big(\pi(\tilde{r}_1-\tilde{r}_2)\big)$. It is straightforward to check that a phase-invariant function $\tilde{f}:S(\mathcal{H}_{\mathcal{E}})\to [0,1]$ satisfies the weak parallelogram law if and only if there exists a function $f$ satisfying the projective weak parallelogram law such that $\tilde{f}=f\circ \pi$. A similar discussion applies to Postulate \ref{PL}.} The following stronger formulation of this postulate will also be useful.

\begin{postulatecustom}{PL}[The parallelogram law for OPFs]\label{PL} An HPMM $\mathcal{E}$ has \emph{Property \ref*{PL}} if for any pair of elements $\psi,\phi\in S(\mathcal{H}_{\mathcal{E}})$, any set of outcomes $O\in \Sigma_{\mathcal{E}}$, and any unitary matrix $U = \left(\begin{smallmatrix}
a & b \\
c & d
\end{smallmatrix}\right)$ such that $a\psi+b\phi$ and $c\psi+d\phi$ are both non-zero, 
\begin{align*}
\mathbb{P}_{\mathcal{E}}\left(O\,;\psi\right) &+  \mathbb{P}_{\mathcal{E}}\left(O\,;\phi\right)=
\| a \psi + b \phi\|^2\, \mathbb{P}_{\mathcal{E}}\left(O\,;\frac{a\psi + b\phi}{\|a\psi+b\phi\|}\right)\\
&\qquad\qquad\qquad\qquad\qquad\qquad\qquad\qquad + \|c\psi + d \phi\|^2\,\mathbb{P}_{\mathcal{E}}\left(O\,;\frac{c\psi + d\phi}{\|c \psi + d\phi\|}\right).
\end{align*}
Under \emph{Postulate \ref*{PL}}, all available HPMMs satisfy Property \ref*{PL}. 
\end{postulatecustom}

Postulate \ref{PL} is directly related to the bunker game in Section \ref{sec:TheBunkerGame} above. Under the standard measurement postulates of NRQM, it is equivalent to the assertion that if Alice and Bob share the joint state $\tfrac{1}{\sqrt{2}}\big(\ket{\uparrow}\otimes \psi+\ket{\downarrow}\otimes \phi\big)$, Bob's probability of seeing an experimental outcome in the set $O$ after Alice measures does not depend on whether Alice measures in the basis $\{\ket{\uparrow},\ket{\downarrow}\}$ or in its image $\{V\ket{\uparrow},V\ket{\downarrow}\}$ under a unitary transformation $V$. Note that Postulate \ref{WPL} is the special case of Postulate \ref{PL} where $a=b=c=\tfrac{1}{\sqrt{2}}$, $d=-\tfrac{1}{\sqrt{2}}$, and $\psi$ and $\phi$ are orthogonal. Conversely, Postulate \ref{WPL} implies Postulate \ref{PL} whenever $\dim(\mathcal{H}_{\mathcal{E}})\ge 3$ (see Figure \ref{Implications}).

The next postulate concerns the algebraic structure of OPFs. Recall that a \emph{sesquilinear form} on a Hilbert space $\mathcal{H}$ is a function $q:\mathcal{H}\times\mathcal{H}\to\mathbb{C}$ (not assumed to be bounded) such that for all $w\in\mathcal{H}$, $v\mapsto q(w,v)$ is linear and $v\mapsto q(v,w)$ is conjugate-linear. A \emph{quadratic form} on $\mathcal{H}$ is a function $Q:\mathcal{H}\to\mathbb{C}$ of the form $Q(v)=q(v,v)$ for some sesquilinear form $q$.

\begin{postulatecustom}{QF}[Quadratic form structure of experimental outcome probability functions]\label{QF} An HPMM $\mathcal{E}$ has \emph{Property \ref*{QF}} if for any $O\in \Sigma_{\mathcal{E}}$, there exists a quadratic form \linebreak $Q_O:\mathcal{H}_{\mathcal{E}}\to \mathbb{R}$ such that for all $\psi\in S(\mathcal{H}_{\mathcal{E}})$, $\mathbb{P}_{\mathcal{E}}(O\,;\psi)= Q_O(\psi)$. Under \emph{Postulate \ref*{QF}}, all available HPMMs satisfy Property \ref*{QF}.
\end{postulatecustom}

Postulate \ref{QF} expresses the familiar fact that probabilities in textbook NRQM are quadratic in the wave function.

\subsubsection{POVM universality}\label{sssec:POVM}

The POVM framework is widely thought to provide the most general description of measurement outcome probabilities in quantum theory (see, e.g., \cite{peres1995quantum, holevo2011probabilistic, heinosaari2011mathematical}). The following definition follows Busch et al.~in \cite{busch1995operational}. Let $\mathcal{L}(\mathcal{H})$ denote the space of bounded operators on a Hilbert space $\mathcal{H}$. Given a measurable space $(\Omega,\Sigma)$, a \emph{positive operator-valued measure acting on $\mathcal{H}$} is a function $E:\Sigma\to \mathcal{L}(\mathcal{H})$ such that (1) for all $O\in \Sigma$, $E(O)$ and $I-E(O)$ are positive operators, and (2) for all sequences of pairwise disjoint sets $\{O_i\}\subset \Sigma$, the partial sums $\sum_{i=1}^N E(O_i)$ converge in the weak operator topology to $E(\bigcup_{i=1}^\infty O_i)$. If $E(\Omega)=I$, $E$ is said to be \emph{normalized}.

\begin{postulatecustom}{POVM}[POVM universality]\label{POVM} An HPMM $\mathcal{E}$ has \emph{Property \ref*{POVM}} if there exists a normalized POVM $E_{\mathcal{E}}:\Sigma_{\mathcal{E}}\to \mathcal{L}(\mathcal{H}_{\mathcal{E}})$ such that for all $\psi \in S(\mathcal{H}_{\mathcal{E}})$ and $O\in \Sigma_{\mathcal{E}}$, $\mathbb{P}_{\mathcal{E}}(O\,;\psi)=\langle \psi,E_{\mathcal{E}}(O)\psi\rangle$. Under \emph{Postulate \ref*{POVM}}, all available HPMMs satisfy Property \ref*{POVM}.
\end{postulatecustom}

\subsubsection{Orthodox HPMMs}\label{sec:OrthodoxOPFs}

I will call an HPMM $\mathcal{E}$ \emph{orthodox} if it has Property \ref{POVM} and $E_{\mathcal{E}}(O)$ is a projection, i.e., $E_{\mathcal{E}}(O)^2=E_{\mathcal{E}}(O)$, for each $O\in\Sigma_{\mathcal{E}}$. In this case, $E_{\mathcal{E}}$ is called a \emph{projection-valued measure} (PVM). The main examples of orthodox HPMMs we will need are those describing textbook measurements of a qubit, where $\mathcal{H}_{\mathcal{E}}=\mathbb{C}^2$ and the outcome space $\Omega_{\mathcal{E}}$ is the two-element set $\{\mathrm{Y},\mathrm{N}\}$ equipped with the discrete $\sigma$-algebra. Suppose such an HPMM is nontrivial, i.e., $\mathbb{P}_{\mathcal{E}}(\{\mathrm{Y}\}\,;\psi)\notin \{0,1\}$ for some $\psi\in S(\mathcal{H}_\mathcal{E})$. Then there exists an element $u\in S(\mathcal{H}_{\mathcal{E}})$ (unique up to phase) such that for all $\phi\in S(\mathcal{H}_\mathcal{E})$, $\mathbb{P}_{\mathcal{E}}(\{\mathrm{Y}\}\,;\phi)=|\langle u,\phi\rangle|^2$. I will therefore denote such an orthodox HPMM by $\mathcal{E}_{\overline{u}}$, where $\overline{u}$ denotes the ray defined by $u$.

The set $\mathcal{P}(\mathbb{C}^2)$ of such rays can be identified with the Bloch sphere $S^2$, and $\overline{u}\in S^2$ can be thought of as a setting of the measuring apparatus. For a massive spin-$\frac{1}{2}$ particle, $\overline{u}$ corresponds to an oriented axis along which the spin is measured, and for a photon qubit, it parameterizes arrangements of wave plates and polarizing beam splitters that implement a textbook ``polarization measurement.''

\subsection{Proper mixtures of quantum states}\label{sec:ProperMixture}

In many settings, a preparation procedure is better described by a probability measure $\mu\in \mathrm{M}_1\left(S(\mathcal{H}_{\mathcal{E}})\right)$ than by a single element $\psi\in S(\mathcal{H}_{\mathcal{E}})$. This is the case when the prepared state is itself selected by a random process, as happens in the NRQM model of the bunker game (see Section \ref{sec:TheBunkerGame}). Such measures are called ``proper mixtures.''\footnote{A ``proper mixture'' describes a probability distribution over prepared states. It models a preparation procedure whose output is random, with the understanding that on repeated runs the states are sampled independently from this distribution. This matters because ensembles of fixed composition can carry information that independent samples do not. D'Espagnat noted that a collection of $2N$ spin-$\tfrac{1}{2}$ particles, exactly $N$ of which are $z$-spin up and $N$ of which are $z$-spin down, can be distinguished with high confidence from a collection in which exactly $N$ are $x$-spin up and $N$ are $x$-spin down, even though both might naively be assigned the density operator $\tfrac{1}{2}I$ \cite[Sec.~8.3]{d2018veiled}. Measuring $z$-spin on every particle gives exactly $N$ up outcomes for the former collection, but is unlikely to yield exactly $N$ up outcomes for the latter when $N$ is large. This test relies on the fixed composition of the two collections, however. For $2N$ independent samples from $\big\{\big(\tfrac12,\ket{\uparrow_z}\big),\big(\tfrac12,\ket{\downarrow_z}\big)\big\}$ and from $\big\{\big(\tfrac12,\ket{\uparrow_x}\big),\big(\tfrac12,\ket{\downarrow_x}\big)\big\}$, textbook NRQM predicts identical statistics.} A ``pure'' state $\psi$ can be thought of as a special case of a proper mixture, where the probability measure $\mu$ is the Dirac measure $\delta_\psi$ supported at $\psi$. The most familiar proper mixtures are countably supported measures $\mu$, i.e., measures concentrated on a countable set $\{\psi_i\}\subset S(\mathcal{H}_{\mathcal{E}})$. I will denote such a measure $\sum_i p_i\,\delta_{\psi_i}$ by $\{(p_i,\psi_i)\}$, where $\sum_i p_i=1$ and the $\psi_i$ need not be distinct. Realistic preparation procedures, however, do not typically produce proper mixtures of this form. For instance, a somewhat noisy procedure designed to prepare $\ket{\uparrow}$ is likely to produce states spread continuously around $\ket{\uparrow}$ on the Bloch sphere.

Let $\mathcal{E}$ be an HPMM such that for every $O\in \Sigma_{\mathcal{E}}$ the map $\psi\mapsto \mathbb{P}_{\mathcal{E}}(O\,;\psi)$ is Borel measurable on $S(\mathcal{H}_{\mathcal{E}})$. Then we can extend $\mathbb{P}_{\mathcal{E}}$ to a function $\widetilde{\mathbb{P}}_{\mathcal{E}}:\Sigma_{\mathcal{E}}\times \mathrm{M}_1\left(S(\mathcal{H}_{\mathcal{E}})\right)\to [0,1]$ via the formula
\begin{equation}\label{eq:affinity}
\widetilde{\mathbb{P}}_\mathcal{E}(O\,;\mu)\equiv\int_{\psi\in S(\mathcal{H}_{\mathcal{E}})}\!d\mu(\psi)\,\mathbb{P}_\mathcal{E}(O\,;\psi)\,.
\end{equation}
For countably supported proper mixtures $\mu=\{(p_i,\psi_i)\}$, we define\footnote{This definition does not require any measurability assumptions on the OPFs. When the OPFs are measurable, it agrees with Equation \eqref{eq:affinity}.}
$$
\widetilde{\mathbb{P}}_\mathcal{E}(O\,;\mu)\equiv\sum_{i} p_i \,\mathbb{P}_\mathcal{E}(O\,;\psi_i)\,.
$$
Heuristically, $\widetilde{\mathbb{P}}_\mathcal{E}(O\,;\mu)$ is the probability that a two-step random process, like the one in the bunker game, yields an outcome in $O$. First ``randomly sample'' a wave function $\psi$ according to $\mu$, then ``randomly sample'' an outcome according to $\mathbb{P}_{\mathcal{E}}(-\,;\psi)$.

In textbook NRQM, it is common practice to represent elements of $\mathrm{M}_1(S(\mathcal{H}_{\mathcal{E}}))$ using density operators, i.e., elements of the set $\mathcal{D}(\mathcal{H}_{\mathcal{E}})$ of trace-class, positive-semidefinite operators $\mathcal{H}_{\mathcal{E}}\to \mathcal{H}_{\mathcal{E}}$ with trace equal to $1$. This representation can be thought of in terms of the ``mixture map'' $\mathcal{M}:\mathrm{M}_1\big(S(\mathcal{H}_{\mathcal{E}})\big) \to \mathcal{D}(\mathcal{H}_{\mathcal{E}})$ defined by
$$
\mathcal{M}(\mu) = \int_{\psi \in S(\mathcal{H}_{\mathcal{E}})}\!d\mu(\psi)\, \ket{\psi}\!\bra{\psi}\,,
$$
where the integral above is the Bochner integral of the function $\psi \mapsto \ket{\psi}\!\bra{\psi}$ considered as a map from $S(\mathcal{H}_{\mathcal{E}})$ to the Banach space $\mathrm{B}_1(\mathcal{H}_{\mathcal{E}})$ of trace-class operators equipped with the trace norm.\footnote{See \cite{hytonen2016analysis} for an introduction to Bochner integrals, and \cite{holevo2006continuous} for their use in defining the density operator of a general ensemble of quantum states.} That this gives a well-defined element of $\mathcal D(\mathcal H_{\mathcal{E}})$ is shown in Lemma \ref{tcintegral} below.

When $\mu=\{(p_i,\psi_i)\}$, its image $\mathcal{M}(\mu)$ is the familiar density operator $\sum_i p_i \ket{\psi_i}\!\bra{\psi_i}$. Since many different probability measures on $S(\mathcal{H}_{\mathcal{E}})$ have the same image under $\mathcal{M}$, the mixture map loses information about a proper mixture. For finitely supported measures in finite-dimensional settings, the preimage $\mathcal{M}^{-1}(\rho)$ of a density operator can be characterized exactly \cite{Schrodinger1936, gisin1989stochastic, hughston1993complete}. The amount of information lost is substantial. For instance, the space of probability measures on the Bloch sphere is infinite-dimensional, whereas the space of density operators on $\mathcal{H}_{\mathcal{E}}\cong \mathbb{C}^2$ is $3$-dimensional. It is therefore far from obvious that the empirical content of an arbitrary proper mixture is fully captured by its image under $\mathcal{M}$.\footnote{For a nonlinear model in which distinct finitely supported proper mixtures can always be distinguished, even when they have the same density operator, see \cite{haag1978comments}. For discussion of the role this distinction plays in derivations of the linearity of state evolution from no-signaling assumptions, such as \cite{simon2001no}, see \cite{bona2003comment, svetlichny2002critique, ziman2002quantum}.}

\subsubsection{Postulates on proper mixtures}\label{subsec:propmix}

The next two postulates assert that distinguishing ensembles associated with identical density operators (DEIDO) is impossible. A version of this assumption was put forward as Postulate K by Peres in \cite[p.~76]{peres1995quantum}, and is also a consequence of the ``Mixture Equivalence Principle'' formulated by Fedida and Kent \cite{fedida2025mixture}. As we note below, the affine dependence of outcome probabilities on density operators, which is often built into statements of this kind, is automatic.

\begin{postulatecustom}{WSEP}[Standard ensemble-level probabilities for countably supported proper mixtures]\label{WSEP} An HPMM $\mathcal{E}$ has \emph{Property WSEP} if for any $O\in\Sigma_{\mathcal{E}}$, there exists a function $f_O:\mathcal{D}(\mathcal{H}_{\mathcal{E}})\to [0,1]$ such that for all $\mu\in \mathrm{M}_1(S(\mathcal{H}_{\mathcal{E}}))$ of the form $\{(p_i,\psi_i)\}$,
$$
\widetilde{\mathbb{P}}_{\mathcal{E}}(O\,;\mu)=f_O\bigg(\sum_{i} p_i\ket{\psi_i}\!\bra{\psi_i}\bigg)\,.
$$
Under \emph{Postulate WSEP}, all available HPMMs satisfy Property WSEP.
\end{postulatecustom}

Postulate \ref{WSEP} will be sufficient for the purposes of proving Theorem \ref{mainthm}. However, the following more general statement applying to all proper mixtures is arguably more natural, and turns out to be unconditionally equivalent to it (see Figure \ref{Implications}).

\begin{postulatecustom}{SEP}[Standard ensemble-level probabilities for general proper mixtures]\label{SEP} An HPMM $\mathcal{E}$ has \emph{Property SEP} if for any $O\in \Sigma_{\mathcal{E}}$, there is a function $f_O:\mathcal{D}(\mathcal{H}_{\mathcal{E}})\to [0,1]$ such that the following diagram commutes:
$$
\begin{tikzcd}[column sep={2.5cm,between origins}, row sep=0.5cm]
\mathrm{M}_1(S(\mathcal{H}_{\mathcal{E}})) \arrow{rr}{\widetilde{\mathbb{P}}_{\mathcal{E}}(O\,;-)~~~~} \arrow{dr}[swap]{\mathcal{M}} && {[0,1]} \\
& \mathcal{D}(\mathcal{H}_{\mathcal{E}}) \arrow{ur}[swap]{f_O}
\end{tikzcd}
$$
Under \emph{Postulate SEP}, all available HPMMs satisfy Property SEP.\footnote{Note that Property \ref{SEP} requires that $\psi\mapsto\mathbb{P}_{\mathcal{E}}(O\,;\psi)$ is Borel measurable for every $O$, otherwise $\widetilde{\mathbb{P}}_{\mathcal{E}}$ is undefined. Borel measurability is thus part of the assertion, not a background hypothesis. Property \ref{WSEP} carries no such requirement, so the equivalence of the two postulates includes the claim that Postulate \ref{WSEP} implies measurability. This follows by way of Postulates \ref{PL} and \ref{QF}, which together make each OPF a continuous function of $\psi$ (see Lemmas \ref{DIPL}, \ref{PLtoQF}, and \ref{densityReductionGeneral}).}
\end{postulatecustom}

\noindent Postulate \ref{WSEP} is simply the special case of Postulate \ref{SEP} where $\mu$ has countable support.

The function $f_O$ appearing in the above postulates is necessarily affine, a property inherited from the definition of $\widetilde{\mathbb{P}}_\mathcal{E}$ in Equation \eqref{eq:affinity}. To see this, note first that $\mathcal{M}$ is surjective. Indeed, given any $\rho\in \mathcal{D}(\mathcal{H}_\mathcal{E})$, since $\rho$ is trace-class, and hence compact, it has a spectral decomposition of the form $\rho=\sum_{i} p_i\, \ket{\psi_i}\!\bra{\psi_i}$ where $\{\psi_i\}$ is orthonormal and $\sum_i p_i=1$. This shows that $\rho$ is the image under $\mathcal{M}$ of the countably supported element $\{(p_i,\psi_i)\}\in \mathrm{M}_1(S(\mathcal{H}_\mathcal{E}))$. Let $\rho_1,\rho_2\in \mathcal{D}(\mathcal{H}_{\mathcal{E}})$ and $\mu_i\in \mathcal{M}^{-1}(\rho_i)$. The affinity of the mixture map $\mathcal{M}$ then gives that for any $\alpha \in [0,1]$,
\begin{align*}
f_O\big(\alpha \rho_1+(1-\alpha)\rho_2\big)&= f_O\big( \mathcal{M}\big(\alpha \mu_1 +(1-\alpha)\mu_2\big)\big) \stackrel{\textrm{\ref{SEP}}}{=}
\widetilde{\mathbb{P}}_\mathcal{E}(O\,;\alpha \mu_1 +(1-\alpha)\mu_2)\\
&= \int_{\psi\in S(\mathcal{H}_{\mathcal{E}})}\!d\big(\alpha \mu_1+(1-\alpha)\mu_2\big)(\psi)\,\mathbb{P}_\mathcal{E}(O\,;\psi)\\
&=\alpha \int_{\psi\in S(\mathcal{H}_{\mathcal{E}})}\!d\mu_1(\psi)\,\mathbb{P}_\mathcal{E}(O\,;\psi)+(1-\alpha)\int_{\psi\in S(\mathcal{H}_{\mathcal{E}})}\!d\mu_2(\psi)\,\mathbb{P}_\mathcal{E}(O\,;\psi)\\
&\stackrel{\textrm{\ref{SEP}}}{=}\alpha f_O\big(\mathcal{M}(\mu_1)\big)+(1-\alpha)f_O\big(\mathcal{M}(\mu_2)\big)=\alpha f_O(\rho_1)+(1-\alpha)f_O(\rho_2)\,.
\end{align*}
The same argument applies to countably supported mixtures, so the $f_O$ appearing in Postulate \ref{WSEP} must be affine as well. This shows that what looks on the surface like a purely informational restriction (factoring through the mixture map) entails an algebraic restriction (affinity).

\section{Composite Experiments}\label{sec:compositeExps}

Relating single-system measurement postulates to signaling requires formalizing \emph{composite experiments}, i.e., experiments that can be decomposed into component parts, each of which can be understood as an experiment in its own right. In the present framework, a composite experiment is modeled as an HPMM whose outcome space and space of preparations are built from those of its components using product and tensor-product constructions. An HPMM $\mathcal{J}$ will be called \emph{a composite of HPMMs $\mathcal{E}_A$ and $\mathcal{E}_B$} if $\Omega_\mathcal{J}=\Omega_{\mathcal{E}_A}\!\times\Omega_{\mathcal{E}_B}$, $\Sigma_{\mathcal{J}}=\Sigma_{\mathcal{E}_A}\otimes \Sigma_{\mathcal{E}_B}$, and $\mathcal{H}_\mathcal{J}=\mathcal{H}_{\mathcal{E}_A}\otimes \mathcal{H}_{\mathcal{E}_B}$,
i.e., the joint outcome space is the product of the outcome spaces for the component HPMMs, the joint $\sigma$-algebra is the tensor product of the component $\sigma$-algebras,\footnote{The tensor product $\sigma$-algebra is the $\sigma$-algebra generated by elements of the form $O_A\times O_B$ where $O_A\in \Sigma_{\mathcal{E}_A}$ and $O_B\in \Sigma_{\mathcal{E}_B}$.} and the associated Hilbert space is the tensor product of the component Hilbert spaces.

On the level of outcomes, this just amounts to declaring that an outcome of the joint experiment is a pair of outcomes, one for each component experiment. In particular, the event that Bob obtains an outcome in $O_B\in \Sigma_{\mathcal{E}_B}$ corresponds to the set $\Omega_{\mathcal{E}_A}\times O_B$ of joint outcomes, so Bob's outcome probabilities can be read off the composite HPMM. On the level of preparations, however, the assumption that joint preparations are labeled by unit vectors in $\mathcal{H}_{\mathcal{E}_A}\otimes \mathcal{H}_{\mathcal{E}_B}$ is substantive.

This definition constrains only the outcome spaces and Hilbert spaces of $\mathcal{E}_A$ and $\mathcal{E}_B$, not their OPFs. If $\mathcal{J}$ is a composite of $\mathcal{E}_A$ and $\mathcal{E}_B$, it is therefore also a composite of any other pair of HPMMs with the same outcome spaces and Hilbert spaces, whatever their OPFs. Several postulates below, such as Postulate \ref{IP}, relate $\mathbb{P}_\mathcal{J}$ to $\mathbb{P}_{\mathcal{E}_A}$ and $\mathbb{P}_{\mathcal{E}_B}$, so whether they hold for $\mathcal{J}$ depends on which pair of HPMMs it is taken to be a composite of. I will therefore use the term \emph{composite HPMM} for a triple $(\mathcal{J};\mathcal{E}_A,\mathcal{E}_B)$, where $\mathcal{J}$ is a composite of $\mathcal{E}_A$ and $\mathcal{E}_B$. Such a triple is \emph{available} when $\mathcal{J}$ is. When the components are clear from context, I will refer to the triple simply as $\mathcal{J}$.

\subsection{The composition structure of experiments}\label{sec:composition}

In discussions of bipartite experiments, it is typically taken for granted that a pair of experimental procedures, one for each wing of a bipartite setup, can be performed together as a single joint experiment. In Bohm's spin-based variation of the EPR experiment \cite{bohm1955quantum}, for example, one specifies a joint experiment simply by choosing a spin measurement for each particle. The underlying assumption is that if two prepare-and-measure experiments can be performed on mutually isolated systems, then there is a joint experiment consisting of performing both together and recording the pair of outcomes. Physically, one can simply do one experiment on one particle and the other experiment on the other, provided the particles are far enough apart to be experimented on separately. The argument below will only need a special case of this claim, in which one of the two experiments is a textbook qubit measurement and the other is any available experiment. Postulate \ref{EQC} in Section \ref{subsec:semiorthodox} formalizes this special case in terms of HPMMs.

Such discussions likewise typically take for granted that a pair of procedures specifies a \emph{unique} joint experiment. In Bohm's experiment, for instance, choosing a basis for each wing is taken to fix the joint experiment completely. Operationally, this seems uncontroversial. Once one has fully specified what Alice is to do and what Bob is to do, there is nothing left to specify about their joint activities.\footnote{I assume that the initiation times of the two experiments (cf.\ footnote~\ref{fn:initiation}), with respect to agreed-upon time coordinates, are specified as part of the joint procedure. In many models the relative timing of Alice and Bob's actions does not matter, but the present framework does not assume this.} In the present framework, however, a pair of models $\mathcal{E}_A$ and $\mathcal{E}_B$ need not determine a unique model of the joint experiment. The model of Section \ref{subsec:PRmodel} contains two composites with the same components but different OPFs.

\subsection{Post-preparation independence of the components of a composite experiment}\label{sec:PPI}

The present framework also requires a way to formalize the claim that Alice and Bob's experimental procedures are ``independent of one another'' once the preparation phase is complete, a condition I will call \emph{post-preparation independence}. This condition is meant to rule out influence or communication between the two wings, not correlations between their results. A classical example illustrates the difference. Suppose Alice and Bob have each received a vial of a mystery substance from a crime scene. Because their vials come from the same source, we would expect their results to be strongly correlated, at least in forensically useful situations. Yet we would still say that they performed independent analyses, provided neither could influence or communicate with the other.

Influence and communication can take several forms. The most obvious is that Bob learns the outcome of Alice's experiment and changes his procedure accordingly. Another is direct physical interaction. If Alice and Bob were working side by side with charged particles, for instance, the attraction or repulsion between their particles could become non-negligible. A subtler case is a shared power supply, through which the triggering of Alice's apparatus by a detection event could affect how Bob's apparatus behaves. The bunker of Section \ref{sec:TheBunkerGame} is designed to isolate Alice's experiment from Bob's in all of these respects, while still allowing anything they shared beforehand, such as an entangled pair.

Developing a complete taxonomy of the ways general experiments can influence one another is beyond our current scope. The core implications need only one aspect of post-preparation independence, which the following postulate captures. It says that whenever the joint preparation is a product state, Bob's outcome probabilities depend neither on the prepared state of Alice's system nor on the experiment she chooses to perform.

\begin{postulatecustom}{IP}[Independence of one-sided experimental outcome probabilities for product states]\label{IP} A composite HPMM $(\mathcal{J};\mathcal{E}_A,\mathcal{E}_B)$ has \emph{Property \ref*{IP}} if for any $\psi\in S(\mathcal{H}_{\mathcal{E}_A})$, $\phi\in S(\mathcal{H}_{\mathcal{E}_B})$, and $O_B\in \Sigma_{\mathcal{E}_B}$, $\mathbb{P}_{\mathcal{J}}(\Omega_{\mathcal{E}_A}\times O_B\,;\psi\otimes\phi) = \mathbb{P}_{\mathcal{E}_B}(O_B\,;\phi)$. Under \emph{Postulate \ref*{IP}}, every available composite HPMM that describes a post-preparation independent composite experiment has Property \ref*{IP}.
\end{postulatecustom}
In words, Property \ref{IP} says that, when the joint preparation is a product state $\psi\otimes\phi$, Bob's marginal probabilities are those his single-system model assigns to $\phi$. This postulate holds in textbook NRQM when Alice and Bob's experiments are mutually isolated. Indeed, if the joint state is a product state $\psi\otimes\phi$ at some time $t_0$, then, absent interaction terms in the Hamiltonian, it remains a product state at all later times, and Bob's factor evolves exactly as it would if Alice's system were absent. This holds whether a ``measurement'' on either side is modeled unitarily or by collapse, since each acts only on its own factor. Because textbook NRQM represents Bob's outcomes on the joint space by operators that act as the identity on Alice's factor, the Born rule then assigns them the same probabilities that Bob's single-system experiment assigns to $\phi$.

The following stronger version of Postulate \ref{IP} is needed only in Section \ref{sec:PMIndep}:

\begin{postulatecustom}{PM}[Product-state multiplicativity for composite experiments]\label{PM} A composite HPMM $(\mathcal{J};\mathcal{E}_A,\mathcal{E}_B)$ has \emph{Property \ref*{PM}} if for any $\psi\in S(\mathcal{H}_{\mathcal{E}_A})$, $\phi\in S(\mathcal{H}_{\mathcal{E}_B})$, $O_A\in \Sigma_{\mathcal{E}_A}$, and $O_B\in \Sigma_{\mathcal{E}_B}$, $\mathbb{P}_{\mathcal{J}}(O_A\times O_B\,;\psi\otimes\phi) = \mathbb{P}_{\mathcal{E}_A}(O_A\,;\psi)\,\mathbb{P}_{\mathcal{E}_B}(O_B\,;\phi)$. Under \emph{Postulate \ref*{PM}}, every available composite HPMM that describes a post-preparation independent composite experiment has Property \ref*{PM}.
\end{postulatecustom}

\noindent Postulate \ref{PM} implies Postulate \ref{IP}, since
\begin{align*}
\mathbb{P}_{\mathcal{J}}\big(\Omega_{\mathcal{E}_A}\!\times\!O_B\,;\psi\otimes\phi\big)
\stackrel{(\textrm{\ref{PM}})}{=}
\mathbb{P}_{\mathcal{E}_A}\big(\Omega_{\mathcal{E}_A}\,;\psi\big)\,\mathbb{P}_{\mathcal{E}_B}\big( O_B\,;\phi\big)
&=\mathbb{P}_{\mathcal{E}_B}\big( O_B\,;\phi\big)\,.
\end{align*}
Section \ref{subsec:PRmodel} shows that the converse fails, even with the other postulates. In textbook NRQM, Postulate \ref{PM} follows from the standard quantum-mechanical description of composite systems, together with the hypothesis that a joint outcome $O_A\times O_B$ of mutually isolated experiments is represented by the product $E_{\mathcal{E}_A}(O_A)\otimes E_{\mathcal{E}_B}(O_B)$ of the local operators \cite[App.~D]{eberhard1978bell}. This treats the prepared state as the only source of correlation between the two wings. If the apparatuses share an additional resource not encoded in the joint state, Alice's and Bob's outcomes can be correlated even for product preparations. For instance, both detectors might be programmed to ignore the particle and simply display the value of a pre-shared random bit. This would violate Postulate \ref{PM} without any interaction between the wings.\footnote{Such shared resources are fair game in the bunker game, since the outcome of Alice's coin flip is assumed to be independent of anything shared beforehand.}

Postulate \ref{PM} also bears on the uniqueness question raised in Section \ref{sec:composition}. Together with Postulate \ref{QF}, it implies that, for given components, at most one available composite HPMM describes a post-preparation independent composite experiment (Lemma \ref{lem:decompUniqueness}). None of the core implications in Figure \ref{Implications} need this uniqueness claim, however.

\subsection{Semiorthodox experiments}\label{subsec:semiorthodox}

When Alice and Bob conduct a joint experiment in which Alice's side is well described by an orthodox ``measurement,'' I will say that the joint experiment is \emph{semiorthodox}. Formally, a composite HPMM $(\mathcal{J};\mathcal{E}_A, \mathcal{E}_B)$ is \emph{semiorthodox} when $\mathcal{E}_A$ is an orthodox HPMM in the sense of Section \ref{sec:OrthodoxOPFs}. The semiorthodox experiments the argument relies on are those in which Alice's component is a qubit HPMM $\mathcal{E}_{\overline{u}}$ of the kind introduced in that subsection. The argument requires that such composites exist for every direction in which Alice might measure and every experiment available to Bob.

\begin{postulatecustom}{EQC}[Existence of qubit composites]\label{EQC}
For every $\overline{u}\in S^2$ and every available HPMM $\mathcal{E}_B$, there exists an available composite HPMM $(\mathcal{J};\mathcal{E}_{\overline{u}},\mathcal{E}_B)$ describing a post-preparation independent composite experiment.
\end{postulatecustom}

The signaling scenario of Section \ref{subsection:whatworry} and the bunker game of Section \ref{sec:TheBunkerGame} both involve such composites, since Alice performs a textbook spin measurement on a qubit, while Bob performs any experiment he likes on his own system. Postulate \ref{EQC} is the only existence claim the derivation of Postulate \ref{POVM} from Postulate \ref{NS} requires. It is needed because Postulates \ref{NS} and \ref{WNS} are universal claims about whatever composite experiments are available, and would be vacuously satisfied if no such composites were available.

The next postulate concerns how outcome probabilities for such joint experiments relate to outcome probabilities for Bob's single-system experiments.

\begin{postulatecustom}{SMQC}[Standard marginal probabilities for semiorthodox post-preparation independent qubit experiments]\label{SMQC} Let $(\mathcal{J};\mathcal{E}_{\overline{u}},\mathcal{E}_B)$ be a semiorthodox composite HPMM for some $u\in S(\mathbb{C}^2)$. The composite $\mathcal{J}$ has \emph{Property SMQC} if for any $u^\perp\in S(\mathbb{C}^2)$ such that $\langle u, u^\perp\rangle=0$, any $O_B\in \Sigma_{\mathcal{E}_B}$, and any $\Psi \in S\big(\mathbb{C}^2\otimes\mathcal{H}_{\mathcal{E}_B}\big)$, 
$$
\mathbb{P}_{\mathcal{J}}\big(\Omega_{\mathcal{E}_{\overline{u}}}\times O_B\,;\Psi\big)
=
|\alpha|^2\,\mathbb{P}_{\mathcal{E}_B}\left(O_B\,;\psi\right)+|\beta|^2\,\mathbb{P}_{\mathcal{E}_B}\left(O_B\,;\phi \right)\,,
$$
where $\Psi$ is expanded as $\Psi=\alpha\, u \otimes \psi + \beta\,u^\perp \otimes \phi$ with $\alpha,\beta\in \mathbb{C}$ and $\psi,\phi\in S(\mathcal{H}_{\mathcal{E}_B})$. Under \emph{Postulate SMQC}, every available semiorthodox composite HPMM
of this form that describes a post-preparation independent composite experiment has Property SMQC.
\end{postulatecustom}

Postulate \ref{SMQC} is stated operationally in terms of outcome probabilities. It makes no claim about the conditional state of Bob's system after Alice's measurement or about the temporal order of the two measurements. It constrains only Bob's marginal probabilities, relating those in the composite experiment $\mathcal{J}$, in which Alice measures in the basis $\{u,u^\perp\}$ on the joint preparation $\Psi$, to those in the single-system experiment $\mathcal{E}_B$, performed on preparations of $\psi$ and of $\phi$. For example, if $\alpha=\beta=\frac{1}{\sqrt{2}}$, Postulate \ref{SMQC} asserts that, when Alice performs her measurement, Bob's outcome statistics for the joint preparation $\frac{1}{\sqrt{2}}(u\otimes\psi+u^\perp\otimes\phi)$ coincide with those for the proper mixture in which $\psi$ and $\phi$ are each prepared with probability $\frac{1}{2}$, whether or not $\psi$ and $\phi$ are orthogonal.

Postulate \ref{SMQC} holds in textbook NRQM by the reasoning of Section \ref{sec:TheBunkerGame}. Suppose first that Alice completes her measurement before Bob's experiment is initiated, as in the protocol analyzed there. She obtains the outcome corresponding to $u$ (respectively $u^\perp$) with probability $|\alpha|^2$ (respectively $|\beta|^2$), and Bob's system is then prepared in $\psi$ (respectively $\phi$), so his outcome probabilities are those on the right-hand side of Property \ref{SMQC}. Since in textbook NRQM Bob's marginal probabilities do not depend on the order of the two measurements, the same holds whatever their temporal order. As Figure \ref{Implications} shows, Postulate \ref{SMQC} also follows from Postulates \ref{PT} and \ref{IP} (Section \ref{sec:PTtoSQ}), and hence from Postulate \ref{IP} together with POVM universality.

\subsection{Signaling via experimental choice}\label{sec:signalingViaChoice}

Next, we need to formalize \emph{signaling via experimental choice}, which we discussed in informal terms in Section \ref{sec:TheBunkerGame}. The standard formalization of the no-signaling condition \cite{eberhard1978bell, popescu1994quantum} is via an expression of the form $\sum_{x}\mathrm{Prob}(x,y~|~a,b)=\sum_{x}\mathrm{Prob}(x,y~|~a',b)$, where $x$ and $y$ denote outcomes for Alice and Bob, respectively, and $a$, $a'$, and $b$ denote their experimental settings. This amounts to the claim that the marginal probability of Bob getting a given outcome in a joint experiment does not depend on Alice's experimental choice.

The relevant situations are those in which Alice and Bob perform a post-preparation independent composite experiment drawn from a family $\{(\mathcal{J}_\alpha;\mathcal{E}_A^\alpha,\mathcal{E}_B)~|~\alpha\in \mathcal{I}\}$ of composite HPMMs, where Alice's component $\mathcal{E}_A^\alpha$ may vary with the index $\alpha\in\mathcal{I}$ but Bob's component $\mathcal{E}_B$ is fixed. I will call such a family \emph{preparation-compatible} if there is a single Hilbert space $\mathcal{H}_A$ such that $\mathcal{H}_{\mathcal{E}_A^\alpha}=\mathcal{H}_A$ for all $\alpha\in\mathcal{I}$. This ensures that the same joint state $\Psi\in S(\mathcal{H}_A\otimes\mathcal{H}_{\mathcal{E}_B})$ makes sense for every experiment in the family. The definition itself requires only this common Hilbert space, but it is intended to describe situations in which the same joint preparation is used throughout, and Alice chooses among the different members of the family by varying her procedure while Bob's stays fixed, as in the protocols examined in Section \ref{sec:TheBunkerGame}. Phrased in these terms, the standard no-signaling condition reads as follows.

\begin{postulatecustom}{NS}[No signaling via experimental choice]\label{NS} A preparation-compatible family \linebreak ${\{(\mathcal{J}_\alpha;\mathcal{E}_A^\alpha,\mathcal{E}_B)~|~\alpha\in \mathcal{I}\}}$ with common left-hand space of preparations $\mathcal{H}_A$ has \emph{Property NS} if for all $\alpha_1,\alpha_2\in\mathcal{I}$, for all $O_B\in \Sigma_{\mathcal{E}_B}$, and for all $\Psi\in S(\mathcal{H}_A\otimes\mathcal{H}_{\mathcal{E}_B})$,
$$
\mathbb{P}_{\mathcal{J}_{\alpha_1}}(\Omega_{\mathcal{E}_{A}^{\alpha_1}}\!\times O_B\,;\Psi)=\mathbb{P}_{\mathcal{J}_{\alpha_2}}(\Omega_{\mathcal{E}_{A}^{\alpha_2}}\!\times O_B\,;\Psi)\,.
$$
Under \emph{Postulate NS}, every preparation-compatible family of available composite HPMMs, each describing a post-preparation independent composite experiment, has Property NS.  
\end{postulatecustom}

\noindent A preparation-compatible family that lacks Property \ref{NS} will be said to \emph{allow for signaling via experimental choice}. To see why the name is apt, suppose that Property \ref{NS} fails for some preparation-compatible family. Then there exist $\alpha_1,\alpha_2\in\mathcal{I}$, $O_B\in\Sigma_{\mathcal{E}_B}$, and $\Psi\in S(\mathcal{H}_A\otimes\mathcal{H}_{\mathcal{E}_B})$ such that the marginal probabilities $p_i\equiv\mathbb{P}_{\mathcal{J}_{\alpha_i}}\big(\Omega_{\mathcal{E}_{A}^{\alpha_i}}\!\times O_B\,;\Psi\big)$ for $i\in \{1,2\}$ differ, i.e., $\delta\equiv|p_1-p_2|>0$. Suppose further that $\Psi$ can be reliably prepared on as many independent trials as Alice and Bob like, that both know $\Psi$ and the OPFs $\mathbb{P}_{\mathcal{J}_{\alpha_1}}$ and $\mathbb{P}_{\mathcal{J}_{\alpha_2}}$ (e.g., because they have run these experiments many times before), and that Alice can control the parameter $\alpha$.\footnote{The argument given here relies on the operational assumption discussed in Section \ref{sec:composition}, that the joint experiment is fixed by the procedures chosen in each wing, and on the intended reading of preparation-compatibility described above. These assumptions are used only to motivate Postulates \ref{NS} and \ref{WNS} and play no role in the formal mathematical arguments below.} Then Alice can send Bob a one-bit signal with arbitrarily high fidelity as follows. First, they agree that $\mathcal{J}_{\alpha_1}$ corresponds to the bit ``$\mathrm{0}$'' and $\mathcal{J}_{\alpha_2}$ to the bit ``$\mathrm{1}$,'' and pick a (perhaps quite large) number $N$ of trials. Alice picks the bit she intends to communicate and selects $\alpha$ accordingly, keeping it fixed across the $N$ trials. Bob records the number $n(O_B)$ of trials on which his outcome lies in $O_B$, and guesses ``$\mathrm{0}$'' if $n(O_B)/N$ is closer to $p_1$ than to $p_2$, and ``$\mathrm{1}$'' otherwise. By the law of large numbers, for any $\eta>0$ there exists $N$ such that if the joint HPMM is $\mathcal{J}_{\alpha_i}$, then $n(O_B)/N$ lies within $\delta/3$ of $p_i$ with probability at least $1-\eta$, so Bob's guess is wrong with probability at most $\eta$.

To derive POVM universality, we only need the following weak version of Postulate \ref{NS}, which forbids signaling via experimental choice when Alice performs a qubit measurement.

\begin{postulatecustom}{WNS}[No signaling via experimental choice for semiorthodox experiments where Alice controls a qubit]\label{WNS}
Under \emph{Postulate WNS}, every preparation-compatible family of available composite HPMMs of the form $\{(\mathcal{J}_\alpha;\mathcal{E}_{\overline{u}_\alpha},\mathcal{E}_B)~|~\alpha\in\mathcal{I}\}$, each describing a post-preparation independent composite experiment, has Property \ref{NS}.
\end{postulatecustom}

\subsection{The partial trace}\label{subsec:PT}

The final postulate asserts that the outcome probabilities for any experiment on Bob's system can be predicted by applying an affine function to the partial trace of the joint prepared state over Alice's subsystem. Recall that the partial trace over $\mathcal{H}_A$ is the linear map $\mathrm{Tr}_A$ from trace-class operators on $\mathcal{H}_A\otimes\mathcal{H}_B$ to trace-class operators on $\mathcal{H}_B$ satisfying $\mathrm{Tr}\big(\mathrm{Tr}_A(T)\,X\big)=\mathrm{Tr}\big(T\,(I_A\otimes X)\big)$ for all trace-class $T$ and bounded $X$ on $\mathcal{H}_B$ \cite[Def.~2.68]{heinosaari2011mathematical}.

\begin{postulatecustom}{PT}[Partial trace sufficiency for one-sided outcome probabilities]\label{PT} A preparation-compatible family $\{(\mathcal{J}_\alpha;\mathcal{E}_A^\alpha,\mathcal{E}_B)~|~\alpha\in\mathcal{I}\}$ has \emph{Property PT} if for any $O_B\in\Sigma_{\mathcal{E}_B}$, there exists an affine function $f_{O_B}$ from the space of density operators $\mathcal{D}(\mathcal{H}_{\mathcal{E}_B})$ to $[0,1]$ such that for any $\alpha \in \mathcal{I}$ and any $\Psi\in S(\mathcal{H}_A\otimes \mathcal{H}_{\mathcal{E}_B})$, $\mathbb{P}_{\mathcal{J}_\alpha}(\Omega_{\mathcal{E}_A^\alpha}\times O_B\,;\Psi) = f_{O_B}\big(\mathrm{Tr}_A(\ket{\Psi}\!\bra{\Psi})\big)$. Under \emph{Postulate PT}, every preparation-compatible family of available composite HPMMs, each describing a post-preparation independent composite experiment, has Property PT.
\end{postulatecustom}

\noindent Note that, unlike in Postulates \ref{WSEP} and \ref{SEP}, the affinity of $f_{O_B}$ is not automatic and must be explicitly assumed. Property \ref{PT} concerns only pure joint preparations, and nothing relates the outcome probabilities of different pure states whose reduced states are convex combinations of one another. Affinity is needed in Section \ref{sec:PTtoSQ}.

The common assumption that the partial trace correctly describes one part of a multipartite quantum system has strong consequences that are easy to miss, since the reduced density matrix is routinely treated as a complete description of a subsystem. This treatment is often the starting point of no-signaling proofs (see, e.g., \cite{ghirardi2013entanglement, simon2001no}). Yet justifying the claim that the partial trace provides a ``complete and exhaustive description of the measurement statistics'' of a subsystem \cite[p.~45]{schlosshauer2007decoherence} requires much of the standard theory of quantum measurement. As Zurek puts it in motivating his derivation of the Born rule from ``envariance,'' ``Both the physical interpretation of the trace and of the reduced density matrix are justified assuming Born's rule,'' so a derivation of the Born rule cannot rely on them without circularity \cite{zurek2005probabilities}.\footnote{For a discussion of the assumptions Zurek's derivation uses in their place, see \cite{schlosshauer2003zurek}.} The same holds for Postulate \ref{PT}, which, given Postulates \ref{EQC} and \ref{IP}, is equivalent to POVM universality (see Figure \ref{Implications}). Attempts to derive the Born rule or its generalizations should therefore not take it to be a neutral assumption.\footnote{As an example, Sebens and Carroll's derivation of the Born rule in Everettian quantum mechanics relies on a principle formulated in terms of reduced density matrices, which they call ESP-QM \cite[pp.~42--43]{sebens2018self}. Dawid and Friederich argue that ESP-QM cannot be motivated independently of the Born rule \cite{dawid2020epistemic}.}

\section{Discussion}\label{sec:discussion}

\subsection{Two readings of Figure \ref*{Implications}}

The web of implications in Figure \ref{Implications} lends partial support to the view that quantum mechanics is an ``island in theoryspace'' \cite{aaronson2004quantum}, or, at the very least, ``a tighter package than one might have first thought'' \cite{fuchs2011coming}. Given relatively uncontroversial background assumptions, many common quantum-mechanical postulates are mutually equivalent. We can therefore axiomatize the theory in a variety of ways depending on which assumptions we take to be most fundamental, and because these alternative axioms are so closely interrelated, many different choices converge on essentially the same theory. On this view, the no-signaling hypothesis is a natural candidate for an axiom. It rests on clear physical intuitions, and although it is not strong enough to recover all of textbook NRQM (see Section \ref{sec:PMIndep}), it is strong enough to derive much of its probability-generating fragment, in particular the generalized Born rule. Figure \ref{Implications} then suggests that the empirical core of NRQM is, if not inevitable, derivable from well-motivated foundations.

At the same time, Figure \ref{Implications} shows that many common assumptions are more closely tied to the measurement postulates of textbook NRQM than they appear. Since the measurement postulates are the aspect of NRQM that theorists tend to question most, one can interpret Figure \ref{Implications} as suggesting that we should be wary of the whole package. For instance, the idea that density operators ``generalize'' pure states is unproblematic if one simply means that the map $\mathcal{P}(\mathcal{H})\to \mathcal{D}(\mathcal{H})$ given by $\overline{\psi}\mapsto \ket{\psi}\!\bra{\psi}$ is injective. But the claim that \emph{probability measures} over pure states can be replaced by their associated density operators without losing ``observable information'' (Postulate \ref{SEP}) is more loaded. Given textbook NRQM's theory of ``observability,'' this claim holds. Yet in the present framework, Postulate \ref{SEP} is equivalent to POVM universality outright (see Figure \ref{Implications}), so it can be seen as a measurement postulate in disguise.

Similar commitments can also enter more subtly through the assumptions used to derive formalisms that relax NRQM's idealizations. For instance, the description of open quantum systems by the Lindblad master equation is often presented as a generalization of Schr\"odinger dynamics that removes the idealized assumption that the system under consideration is closed \cite{cuffaro2021open}. Deriving this master equation, however, typically relies on tracing out environmental degrees of freedom \cite{davies1974markovian}. Again, if one accepts NRQM's theory of observables, this is justifiable under appropriate conditions, because taking the partial trace does not destroy any ``observable information.'' Promoting this use of the partial trace to an axiom yields something like Postulate \ref{PT}, which, together with Postulates \ref{EQC} and \ref{IP}, is strong enough to derive all of the foreground postulates other than Postulate \ref{PM} (see Section \ref{subsec:PT}). So, while the open-systems approach removes the idealized assumption of perfect isolation, it also builds contested aspects of the measurement postulates into the formalism.

Both views described above are reasonable reactions to the strong interrelationships shown in Figure \ref{Implications}. What one theorist sees as rigidity, another may see as brittleness. Some or all of these postulates may well be correct. But whichever assumptions we decide to take as foundational, we should be ready to defend them on their own terms.

\subsection{The parallelogram law as a Rorschach test}

Empirical questions related to the postulates in Figure \ref{Implications} can seem either exciting or mundane, depending on how one looks at them. Consider the arrival-time problem discussed in Section \ref{subsec:observability}. On its face, this is an unremarkable scientific question about how to best describe certain empirical data. Some proposed models predict arrival-time distributions that are not given by any POVM. If these models turned out to provide the best fit to the data, would that really be a paradigm-shattering discovery? To theorists who treat the postulates in Figure \ref{Implications} as a unified package, the suggestion of non-POVM statistics can seem to call into question ``quantum mechanics itself,'' or ``causality itself,'' or some other principle that only a radical skeptic could doubt. But the stakes need not be so dramatic. Non-POVM statistics seem to threaten principles like causality only when we bundle together assumptions from different physical theories and insist that each holds exactly and universally.

Postulate \ref{PL} offers a way to make such commitments explicit. Its simplest special case, Postulate \ref{WPL}, asserts that equalities of the following form hold whenever $\psi$ and $\phi$ are orthogonal unit vectors:
\begin{equation}\label{eq:PLReprise}
\mathbb{P}_{\mathcal{E}}\left(O\,;\psi\right) +  \mathbb{P}_{\mathcal{E}}\left(O\,;\phi\right) =\mathbb{P}_{\mathcal{E}}\left(O\,;\tfrac{1}{\sqrt{2}}(\psi + \phi)\right) +  \mathbb{P}_{\mathcal{E}}\left(O\,;\tfrac{1}{\sqrt{2}}(\psi - \phi)\right)\,.
\end{equation}
This is an \emph{exact} claim about the relationship between four real numbers: the probabilities that a given experiment modeled by $\mathcal{E}$ yields an outcome in $O$ for four different preparation procedures labeled by $\psi$, $\phi$, $\tfrac{1}{\sqrt{2}}(\psi+\phi)$, and $\tfrac{1}{\sqrt{2}}(\psi-\phi)$. Equation \eqref{eq:PLReprise} applies to \emph{every} prepare-and-measure experiment, \emph{every} outcome set $O$, and \emph{every} pair of preparation procedures with orthogonal labels $\psi$ and $\phi$. Note that the four procedures can be carried out independently of one another, perhaps on different days, or even in different laboratories. Yet their outcome probabilities are supposed to satisfy this equality \emph{exactly}. This is a strikingly precise and general family of empirically falsifiable claims.

This family of claims cannot simply be dismissed as interpretation-dependent. The outcome probabilities it relates are precisely those to which the Born rule applies, so a theorist who accepts that the Born rule makes testable predictions has everything needed to unambiguously interpret Postulate \ref{PL}. And since the Born rule, or its POVM generalization, is typically taken to hold exactly, a theorist who takes it this way is thereby committed to Postulate \ref{PL} holding exactly as well.

Two attitudes toward Postulate \ref{PL} are available, and both are respectable. The first is to see Postulate \ref{PL} as an empirical regularity that we expect to hold to a good approximation in a certain regime, but that future theories may modify. This attitude is familiar enough to need no elaboration. The second is to see Postulate \ref{PL} as an exact, universal law. The tight-package reading of Figure \ref{Implications} can make this attitude seem natural, since Postulate \ref{PL} is logically equivalent to many other standard postulates. This position nevertheless needs more explanation, since promoting an empirical regularity to the status of a physical law is a much more consequential decision than saying that a certain regularity holds, to use Bell's terminology, ``for all practical purposes'' (FAPP) \cite{bell1990against}. 

One way to argue that Postulate \ref{PL} deserves law-like status would be to show that NRQM (including its measurement postulates) can be recovered as an effective limit of a deeper theory in which some sort of gauge equivalence underwrites Equation \eqref{eq:PLReprise}. This deeper theory would, perhaps, reveal that the two sides of this equation are different mathematical representations of the same physics. Relativistic principles might seem to supply such an equivalence. One might note, for instance, that when Alice and Bob are spacelike separated, different reference frames disagree about whether Alice's measurement occurs first. So, simply insisting that every inertial frame tell a consistent story in which causes precede their effects seems to force Bob's outcome statistics to be the same regardless of Alice's choice. In the case of the bunker game, however, Bob can carry out his entire experiment inside the future light cone of Alice's actions, in which case every frame agrees that Alice measures first. The bunker game may well be unwinnable, but if so, relativistic symmetries alone cannot explain why.

Whether Postulate \ref{PL} and its logically equivalent single-system relatives are exact laws of nature or merely good approximations in certain regimes should, in my view, be considered an open question. It is, in the end, an empirical question, and theorizing alone cannot settle it. What theoretical results like those in this paper can do is help us think through our commitments before the data come in. So take a moment to ask yourself the following. Do you believe that Postulate \ref{PL} holds \emph{exactly}? More precisely, to the extent the relevant states can be prepared and systematic errors can be eliminated, do you believe that any instance of Postulate \ref{PL} will be found to hold to any desired degree of precision? Suppose experimental data showed that an instance of Equation \eqref{eq:PLReprise} held to ten decimal places, but not twelve. Would that fundamentally challenge your understanding of physics?

\section*{Acknowledgments}
\phantomsection
\addcontentsline{toc}{section}{Acknowledgments}
I am grateful to Justin Clarke-Doane, Larry Frolov, and Fabian Nolte for comments on earlier drafts.

\pagebreak
\part{Proofs}\label{part:proofs}

\section{Proving the Implications in Figure \ref*{Implications}}\label{mathImp}

This section gives the proofs of the labeled implications in Figure \ref{Implications} other than Postulate \ref{PM} $\Rightarrow$ Postulate \ref{IP}, which was shown in Section \ref{sec:PPI}. Derivations of the Born rule or POVM universality from no-signaling or locality considerations, which roughly correspond to the path from Postulate \ref{NS} to Postulate \ref{POVM} in the figure, have been widely studied (e.g., \cite{zurek2003environment, barnum2003no, zurek2005probabilities, nenashev2014blackbox, nenashev2014quantum, nenashev2016state, han2016quantum, joshi2016violation, mcqueen2019defence}, \cite[Sec.~VI.B]{sekatski2015no}). No-signaling has also been used more broadly to constrain the structure of quantum theory and the limits of information-processing tasks (e.g., \cite{clifton2003characterizing, bae2011no}). Where I am aware of an argument in the literature that is particularly close to one below, I provide a reference at the relevant step.

\subsection{Postulate \ref*{WNS} implies Postulate \ref*{PL}}\label{sec:WNStoPL}

The argument below uses ingredients related to ``steering'' \cite{Schrodinger1936,hughston1993complete} that appear throughout the literature, e.g., in arguments for the linearity of quantum dynamics \cite{gisin1989stochastic, simon2001no}. The earliest close precedent for Lemma \ref{imp1} I am aware of is an argument of Svetlichny \cite{svetlichny1998quantum}, which relies on Gleason's theorem and requires dimension at least three.

\begin{lemma}\label{imp1} Assuming Postulates \ref{SMQC} and \ref{EQC}, Postulate \ref{WNS} implies Postulate \ref{PL}.
\end{lemma}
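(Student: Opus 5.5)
Fix an available HPMM $\mathcal{E}_B$, an outcome set $O_B\in\Sigma_{\mathcal{E}_B}$, unit vectors $\psi,\phi\in S(\mathcal{H}_{\mathcal{E}_B})$, and a unitary $U=\left(\begin{smallmatrix} a & b\\ c & d\end{smallmatrix}\right)$ with $a\psi+b\phi$ and $c\psi+d\phi$ non-zero. The plan is to realize both sides of the identity in Property \ref{PL} as Bob's marginal probabilities for one joint state $\Psi\in S(\mathbb{C}^2\otimes\mathcal{H}_{\mathcal{E}_B})$. The two sides will correspond to Alice measuring in two different qubit bases, so Postulate \ref{WNS} forces them to agree. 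Concretely, let $\{e_1,e_2\}$ be the standard basis of $\mathbb{C}^2$ and put
$$
\Psi=\tfrac{1}{\sqrt{2}}\big(e_1\otimes\psi+e_2\otimes\phi\big),
$$
which is a unit vector because $\|e_1\otimes\psi\|=\|e_2\otimes\phi\|=1$ and the two terms are orthogonal (orthogonality of $\psi,\phi$ is not needed).

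By Postulate \ref{EQC}, for $\overline{u}_1=\overline{e_1}$ and for a second direction $\overline{u}_2$ chosen below, there exist available post-preparation independent composites $(\mathcal{J}_i;\mathcal{E}_{\overline{u}_i},\mathcal{E}_B)$. Both have left-hand space $\mathbb{C}^2$, so $\{\mathcal{J}_1,\mathcal{J}_2\}$ is a preparation-compatible family of the form covered by Postulate \ref{WNS}, and hence $\mathbb{P}_{\mathcal{J}_1}(\Omega\times O_B;\Psi)=\mathbb{P}_{\mathcal{J}_2}(\Omega\times O_B;\Psi)$. Postulate \ref{SMQC} applied to $\mathcal{J}_1$ with $u=e_1$, $u^\perp=e_2$, $\alpha=\beta=1/\sqrt2$ gives the left side $\tfrac12\big(\mathbb{P}_{\mathcal{E}_B}(O_B;\psi)+\mathbb{P}_{\mathcal{E}_B}(O_B;\phi)\big)$.

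For $\mathcal{J}_2$, I will rewrite $\Psi$ in a rotated basis. Define $u=\bar a e_1+\bar b e_2$ and $u^\perp=\bar c e_1+\bar d e_2$. These are the conjugated rows of $U$, so they form an orthonormal basis of $\mathbb{C}^2$. Then $e_1=a u+c u^\perp$ and $e_2=b u+d u^\perp$, which one checks using $U^\dagger U=I$. Substituting gives
$$
\Psi=\tfrac{1}{\sqrt2}\big(u\otimes(a\psi+b\phi)+u^\perp\otimes(c\psi+d\phi)\big).
$$
This has the form required in Postulate \ref{SMQC}, with $\alpha=\|a\psi+b\phi\|/\sqrt2$, $\beta=\|c\psi+d\phi\|/\sqrt2$, and unit vectors $(a\psi+b\phi)/\|a\psi+b\phi\|$ and $(c\psi+d\phi)/\|c\psi+d\phi\|$. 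This is where the non-vanishing hypothesis in Property \ref{PL} is used. Take $\overline{u}_2=\overline{u}$. Postulate \ref{SMQC} for $\mathcal{J}_2$ then gives $\tfrac12$ times the right side of Property \ref{PL}. Equating the two marginals and multiplying by $2$ yields Property \ref{PL} for $\mathcal{E}_B$. Since $\mathcal{E}_B$ was an arbitrary available HPMM, Postulate \ref{PL} follows.

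The only delicate step is the basis-change bookkeeping: getting the conjugations right so that the coefficients come out as $a\psi+b\phi$ and $c\psi+d\phi$ rather than, say, $\bar a\psi+\bar c\phi$. If the convention comes out transposed, I would instead apply the argument to the unitary $U^T$ or $\bar U$, whichever fixes it. Both range over all unitaries, so generality is not lost. I would also confirm that Postulate \ref{SMQC} is stated for arbitrary $u^\perp$, so that the phase freedom of $u^\perp$ within its ray causes no trouble. It is.
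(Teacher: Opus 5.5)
Your proposal is correct and follows essentially the same route as the paper's proof. Your rotated basis $u=\bar a e_1+\bar b e_2$, $u^\perp=\bar c e_1+\bar d e_2$ is exactly the paper's $\{U^\dagger u, U^\dagger u^\perp\}$, and the identities $e_1=au+cu^\perp$, $e_2=bu+du^\perp$ do hold, so no transposed-convention fix is needed; like the paper, you then evaluate both of Bob's marginals for the same $\Psi$ via Postulate \ref{SMQC} and equate them using Postulates \ref{EQC} and \ref{WNS}.
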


\begin{proof}
Let $\mathcal{E}_B$ be an available HPMM, and let $O_B\in \Sigma_{\mathcal{E}_B}$. Let $\mathcal{H}_A=\mathbb{C}^2$, let $\{u,u^\perp\}$ be an orthonormal basis for $\mathcal{H}_A$, and let $U = \left(\begin{smallmatrix}
a & b \\
c & d
\end{smallmatrix}\right)$ be a unitary matrix, which we identify with a linear transformation $\mathcal{H}_A\to \mathcal{H}_A$ using the basis $\{u,u^\perp\}$.

Applying Postulate \ref{EQC} to the rays $\overline{u}$ and $U^\dagger\overline{u}$ (where $U^\dagger\overline{u} \equiv \overline{U^\dagger u}$), we obtain available composite HPMMs $\mathcal{J}_{\overline{u}}$ and $\mathcal{J}_{U^\dagger\overline{u}}$, each describing a post-preparation independent composite experiment, with right-hand component $\mathcal{E}_B$ and left-hand components $\mathcal{E}_{\overline{u}}$ and $\mathcal{E}_{U^\dagger\overline{u}}$, respectively. Let $\psi,\phi \in S(\mathcal{H}_{\mathcal{E}_B})$ be such that $a\psi+b\phi\ne 0$ and $c\psi+d\phi\ne0$, and let $\Psi\in S\big(\mathbb{C}^2\otimes \mathcal{H}_{\mathcal{E}_B}\big)$ be given by $\frac{1}{\sqrt{2}}\left(u\otimes \psi + u^\perp\otimes \phi\right)$. Consider the composite HPMM $\mathcal{J}_{\overline{u}}$. It is available and semiorthodox, since its left-hand component is orthodox, and by construction it describes a post-preparation independent composite experiment. Postulate \ref{SMQC} therefore applies, and we have
\begin{align}\label{uProb}
\mathbb{P}_{\mathcal{J}_{\overline{u}}}\big(\Omega
_{\mathcal{E}_{\overline{u}}}\times O_B\,;\Psi\big)
&\stackrel{(\mathrm{\ref{SMQC}})}{=}
\frac{1}{2}\,\mathbb{P}_{\mathcal{E}_B}\big(O_B\,;\psi\big)+\frac{1}{2}\,\mathbb{P}_{\mathcal{E}_B}\big(O_B\,;\phi\big)\,.
\end{align}
If, on the other hand, we consider the HPMM $\mathcal{J}_{U^\dagger \overline{u}}$, then 
Alice's orthodox component is a measurement with respect to the basis
$\big\{U^\dagger u,U^\dagger u^\perp\big\}$, which is orthonormal because $U$ is unitary. The HPMM $\mathcal{J}_{U^\dagger\overline u}$ is likewise available and semiorthodox, and by construction it describes a post-preparation independent composite experiment, so Postulate \ref{SMQC} applies to $\mathcal{J}_{U^\dagger\overline u}$ as well. Since $Uu = au + cu^\perp$ and $Uu^\perp = bu + du^\perp$, applying $U^\dagger$ gives $u = a\,U^\dagger u + c\,U^\dagger u^\perp$ and $u^\perp = b\,U^\dagger u + d\,U^\dagger u^\perp$, so
\begin{align}\label{uDag}
u\otimes \psi + u^\perp \otimes \phi = U^\dagger u \otimes (a\psi + b \phi) + U^\dagger u^\perp \otimes (c \psi + d \phi)\,.
\end{align}
We can rewrite $\Psi$ in the following form:
\begin{align*}
\Psi
&=
\frac{1}{\sqrt{2}}\left(u\otimes \psi + u^\perp\otimes \phi\right)~
\stackrel{\eqref{uDag}}{=}~\frac{1}{\sqrt{2}}\left(
U^\dagger u\otimes (a \psi + b \phi) + U^\dagger u^\perp\otimes (c \psi + d \phi)\right)\\
&=\frac{1}{\sqrt{2}}\left(\|a \psi + b \phi\|\,U^\dagger u\otimes \frac{a \psi + b \phi}{\|a \psi + b \phi\|} + \|c \psi + d \phi\|\,U^\dagger u^\perp\otimes \frac{c \psi + d \phi}{\|c \psi + d \phi\|}\right)\,.
\end{align*}
Applying Postulate \ref{SMQC} again gives us 
\begin{align}\label{UdaguProb}
\mathbb{P}_{\mathcal{J}_{U^\dagger \overline{u}}}\left(\Omega_{\mathcal{E}_{U^\dagger\overline{u}}}\!\times O_B\,;\Psi\right)
&\stackrel{(\mathrm{\ref{SMQC}})}{=}
\frac{\|a \psi + b \phi\|^2}{2}\,\mathbb{P}_{\mathcal{E}_B}\left(O_B\,;\frac{a \psi + b \phi}{\|a \psi + b \phi\|}\right)\nonumber\\
&\qquad\qquad+\frac{\|c \psi + d \phi\|^2}{2}\,\mathbb{P}_{\mathcal{E}_B}\left(O_B\,;\frac{c \psi + d \phi}{\|c \psi + d \phi\|}\right)\,.
\end{align}
The pair $\{\mathcal{J}_{\overline u},\mathcal{J}_{U^\dagger\overline u}\}$ is preparation-compatible, since both left-hand components have preparation space $\mathbb{C}^2$, and it satisfies the remaining hypotheses of Postulate \ref{WNS} by construction, so the left-hand sides of \eqref{uProb} and \eqref{UdaguProb} are equal. Equating the right-hand sides and multiplying through by $2$ yields
\begin{align*}
\mathbb{P}_{\mathcal{E}_B}\!&\left(O_B\,;\psi\right)+\mathbb{P}_{\mathcal{E}_B}\!\left(O_B\,;\phi\right)\\
&=\|a \psi + b \phi\|^2\,\mathbb{P}_{\mathcal{E}_B}\!\left(O_B\,;\frac{a \psi + b \phi}{\|a \psi + b \phi\|}\right)+\|c \psi + d \phi\|^2\,\mathbb{P}_{\mathcal{E}_B}\!\left(O_B\,;\frac{c \psi + d \phi}{\|c \psi + d \phi\|}\right)\,.~\qedhere
\end{align*}
\end{proof}

\subsection{Postulate \ref*{PL} implies Postulate \ref*{QF}}\label{sec:PLtoQF}

The central ingredients of the proof of this implication can be phrased in terms of abstract functions $f:S(\mathcal{H})\to \mathbb{R}$ on the unit sphere of a Hilbert space $\mathcal{H}$. Following our terminology in Section \ref{OPFPosts}, we will say that such a function \emph{satisfies the parallelogram law} if, for any unitary matrix $U  = \left(\begin{smallmatrix} a & b \\ c & d \end{smallmatrix}\right)$ and any vectors $\psi$ and $\phi$ in $S(\mathcal{H})$ such that $a\psi+b\phi$ and $c\psi+d\phi$ are non-zero,
\begin{align*}
f\left(\psi\right) &+ f\left(\phi\right)=
\| a \psi + b \phi\|^2 \, f\left(\frac{a\psi + b\phi}{\|a\psi+b\phi\|}\right) + \|c \psi + d \phi\|^2 \,f\left(\frac{c\psi + d\phi}{\|c \psi + d\phi\|}\right).
\end{align*}
The strategy in this subsection is as follows. We first extend $f$ from the unit sphere to a function $Q_f$ on all of $\mathcal{H}$ that is homogeneous of degree $2$, and show that $Q_f$ satisfies the familiar parallelogram identity $Q_f(\psi+\phi)+Q_f(\psi-\phi)=2\,Q_f(\psi)+2\,Q_f(\phi)$. An argument of Jordan and von Neumann then shows that the form obtained from $Q_f$ by polarization is sesquilinear over $\mathbb{Q}[i]$, and boundedness, via Ostrowski's theorem on additive functions, extends sesquilinearity to all of $\mathbb{C}$.

\begin{lemma}\label{PLtoPhaseInvt} Any function $f:S(\mathcal{H})\to \mathbb{R}$ satisfying the parallelogram law is phase invariant.
\end{lemma}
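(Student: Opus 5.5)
Fix $\psi\in S(\mathcal{H})$ and $\theta\in\mathbb{R}$; the goal is $f(e^{i\theta}\psi)=f(\psi)$. The plan is to apply the parallelogram law with $\phi$ taken equal to $\psi$ itself, using a unitary chosen so that all the weight lands on one term. The law requires only $\psi,\phi\in S(\mathcal{H})$ and $a\psi+b\phi$, $c\psi+d\phi$ non-zero. It does not require $\psi\ne\phi$ or orthogonality, so this degenerate choice is allowed.

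With $\phi=\psi$, the combinations become $a\psi+b\psi=(a+b)\psi$ and $(c+d)\psi$. Their squared norms are $|a+b|^2$ and $|c+d|^2$, and their normalizations are $\frac{a+b}{|a+b|}\psi$ and $\frac{c+d}{|c+d|}\psi$. The law then reads
\[
2f(\psi)=|a+b|^2 f\!\left(\tfrac{a+b}{|a+b|}\psi\right)+|c+d|^2 f\!\left(\tfrac{c+d}{|c+d|}\psi\right),
\]
provided both $a+b$ and $c+d$ are non-zero. Since $U$ is unitary, $U(1,1)^T=(a+b,c+d)^T$ has squared norm $2$, so $|a+b|^2+|c+d|^2=2$.

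The natural first try is $U=\mathrm{diag}(e^{i\theta},e^{i\theta})$. It gives $a+b=c+d=e^{i\theta}$, so both terms equal $f(e^{i\theta}\psi)$ with weight $1$. The identity becomes $2f(\psi)=2f(e^{i\theta}\psi)$, which is exactly the claim. Both combinations equal $e^{i\theta}\psi\neq 0$, so the non-vanishing hypothesis is satisfied.

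I expect no real obstacle. The only point needing care is checking that the statement of the parallelogram law for functions on $S(\mathcal{H})$ permits $\phi=\psi$ and a diagonal $U$. It does, since it quantifies over any $\psi,\phi\in S(\mathcal{H})$ and any unitary $U$. If one preferred to avoid the degenerate choice $\phi=\psi$, a fallback would be an orthogonal $\phi$ with $U=e^{i\theta}I$. This gives $f(\psi)+f(\phi)=f(e^{i\theta}\psi)+f(e^{i\theta}\phi)$, but that only yields the claim for the sum, and separating the terms would need more work (and a dimension at least $2$). So I would use the direct $\phi=\psi$ argument.
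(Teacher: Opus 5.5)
Your argument is correct and is essentially the paper's: a single application of the parallelogram law with a diagonal unitary. The paper takes $U=\mathrm{diag}(e^{i\theta},1)$ with an arbitrary second unit vector $\phi$ and cancels $f(\phi)$, which makes your orthogonal-$\phi$ fallback worry moot; you instead take $\phi=\psi$ with $U=e^{i\theta}I$ and divide by $2$, and both choices are legitimate because the law imposes nothing on $\phi$ beyond the two non-vanishing conditions.
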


\begin{proof}
Let $\theta\in\mathbb{R}$ and $\psi,\phi\in S(\mathcal{H})$. For $U=\mathrm{diag}(e^{i\theta},1)$, we have $a\psi+b\phi=e^{i\theta}\psi$ and $c\psi+d\phi=\phi$, both unit vectors, so the parallelogram law reads $f(\psi)+f(\phi)=f(e^{i\theta}\psi)+f(\phi)$. Subtracting $f(\phi)$ gives $f(\psi)=f(e^{i\theta}\psi)$.
\end{proof}

Let $Q_f:\mathcal{H}\to \mathbb{R}$ be defined by $Q_f(\psi)=\|\psi\|^2 \, f\left(\frac{\psi}{\|\psi\|}\right)$ when $\psi\ne 0$, and $0$ otherwise.

\begin{lemma}\label{lem:homog} If $f$ satisfies the parallelogram law, then $Q_f$ is absolutely homogeneous of degree $2$, i.e., for all $\lambda \in \mathbb{C}$, $Q_f(\lambda\,\psi)=|\lambda|^2 Q_f(\psi)$.
\end{lemma}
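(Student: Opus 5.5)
The proof is a short case analysis on $\lambda$ and $\psi$, using only the definition of $Q_f$ and the phase invariance from Lemma \ref{PLtoPhaseInvt}.

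First, the trivial cases. If $\psi=0$, or if $\lambda=0$, then $\lambda\psi=0$. Both sides of the identity vanish, since $Q_f(0)=0$ by definition and $|\lambda|^2Q_f(\psi)=0$ whenever $\lambda=0$ or $\psi=0$. No property of $f$ is needed here; the convention $Q_f(0)=0$ was chosen precisely to make this work.

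Now let $\psi\neq 0$ and $\lambda\neq 0$, and write $\lambda=|\lambda|e^{i\theta}$. Then $\lambda\psi\neq 0$ and $\|\lambda\psi\|=|\lambda|\,\|\psi\|$. The normalized vector is
$$
\frac{\lambda\psi}{\|\lambda\psi\|}=e^{i\theta}\,\frac{\psi}{\|\psi\|}.
$$
By definition of $Q_f$,
$$
Q_f(\lambda\psi)=|\lambda|^2\|\psi\|^2\, f\!\left(e^{i\theta}\frac{\psi}{\|\psi\|}\right).
$$
Lemma \ref{PLtoPhaseInvt} applies because $f$ satisfies the parallelogram law, so $f$ is phase invariant. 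The phase $e^{i\theta}$ can therefore be dropped, giving
$$
Q_f(\lambda\psi)=|\lambda|^2\|\psi\|^2 f\!\left(\frac{\psi}{\|\psi\|}\right)=|\lambda|^2 Q_f(\psi).
$$

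There is no real obstacle. The only substantive input is phase invariance, which has already been established, and the argument reduces to checking that the zero vector is handled consistently by the definition of $Q_f$.
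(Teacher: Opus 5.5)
Your proof is correct and follows the paper's argument exactly. The paper likewise dismisses the cases $\psi=0$ or $\lambda=0$ via the convention $Q_f(0)=0$, factors out $|\lambda|^2\|\psi\|^2$, and drops the unit phase $\lambda/|\lambda|$ using the phase invariance from Lemma \ref{PLtoPhaseInvt}.
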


\begin{proof} Let $\lambda \in \mathbb{C}$. The cases $\psi=0$ or $\lambda=0$ are immediate, so suppose $\psi\neq 0$ and $\lambda\neq 0$. Since $f$ is phase invariant by Lemma \ref{PLtoPhaseInvt},
\[
Q_f(\lambda \psi) = |\lambda|^2 \|\psi\|^2\, f\!\left(\frac{\lambda}{|\lambda|}\frac{\psi}{\|\psi\|}\right) = |\lambda|^2 \|\psi\|^2\, f\!\left(\frac{\psi}{\|\psi\|}\right)=|\lambda|^2 Q_f(\psi)\,.\qedhere
\]
\end{proof}

\begin{lemma}\label{lem:unnormPL} If $f$ satisfies the parallelogram law, then for any unitary matrix $U =  \left(\begin{smallmatrix}
a & b \\
c & d
\end{smallmatrix}\right)$, and $\psi,\phi\in\mathcal{H}$ such that $a\psi+b\phi\ne0$ and $c\psi+d\phi\ne0$, 
$$
Q_f(\psi)+Q_f(\phi)= Q_f(a\psi+b\phi)+ Q_f(c\psi+d\phi)\,.
$$
\end{lemma}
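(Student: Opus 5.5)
The identity will follow from the parallelogram law for $f$ together with the homogeneity of Lemma \ref{lem:homog}. The idea is to view the statement as invariance of the function $S(\psi,\phi)\equiv Q_f(\psi)+Q_f(\phi)$ under the natural action of $U(2)$ on pairs $(\psi,\phi)\in\mathcal{H}\times\mathcal{H}$, given by $U\cdot(\psi,\phi)=(a\psi+b\phi,\,c\psi+d\phi)$. This is a group action, since it is the action of $U\otimes I$ on $\mathbb{C}^2\otimes\mathcal{H}$. The parallelogram law for $f$ gives this invariance only for pairs of unit vectors. Homogeneity upgrades it to pairs of equal norm. A change-of-basis trick then handles pairs of unequal norm.

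\textbf{Degenerate case.} Suppose $\phi=0$. Then $a\psi+b\phi=a\psi$ and $c\psi+d\phi=c\psi$. Lemma \ref{lem:homog}, together with the convention $Q_f(0)=0$, gives
\[
Q_f(a\psi)+Q_f(c\psi)=(|a|^2+|c|^2)\,Q_f(\psi)=Q_f(\psi)=Q_f(\psi)+Q_f(0),
\]
using that the columns of $U$ are unit vectors. The case $\psi=0$ is symmetric, using $|b|^2+|d|^2=1$.

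\textbf{Equal norms.} Suppose $\|\psi\|=\|\phi\|=t>0$, and let $U$ have nonzero outputs. Write $\psi=t\hat\psi$ and $\phi=t\hat\phi$ with $\hat\psi,\hat\phi$ unit vectors. By linearity, $a\hat\psi+b\hat\phi$ and $c\hat\psi+d\hat\phi$ are nonzero. The parallelogram law for $f$, rewritten through the definition of $Q_f$, gives $S(\hat\psi,\hat\phi)=S(U\cdot(\hat\psi,\hat\phi))$. Multiplying by $t^2$ and using Lemma \ref{lem:homog} on each term yields $S(\psi,\phi)=S(U\cdot(\psi,\phi))$.

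\textbf{General case.} Now let $\psi,\phi$ both be nonzero, so that $p=(\psi,\phi)$ has both components nonzero. Consider the real rotations
\[
W_\theta=\begin{pmatrix}\cos\theta&\sin\theta\\-\sin\theta&\cos\theta\end{pmatrix}.
\]
The quantity $\|\cos\theta\,\psi+\sin\theta\,\phi\|^2-\|{-\sin\theta}\,\psi+\cos\theta\,\phi\|^2$ is continuous in $\theta$. It equals $\|\psi\|^2-\|\phi\|^2$ at $\theta=0$ and its negative at $\theta=\pi/2$. By the intermediate value theorem there is a $\theta_0$ for which $q\equiv W_{\theta_0}\cdot p$ has components of equal norm. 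Since unitaries preserve $\|\psi\|^2+\|\phi\|^2>0$, this common norm is positive.

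We then apply the equal-norm step to $q$ twice. First take $V=W_{\theta_0}^{-1}$. The output $V\cdot q=p$ has nonzero components, so $S(p)=S(q)$. Next take $V=UW_{\theta_0}^{-1}$. Its output is $V\cdot q=U\cdot p$, whose components are nonzero by hypothesis, so $S(U\cdot p)=S(q)$. Combining the two gives $S(p)=S(U\cdot p)$, which is the claim.

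The main obstacle is the unequal-norm case. The parallelogram law for $f$ cannot be applied there directly, because rescaling $\psi$ and $\phi$ separately does not commute with mixing them by $U$. The rotation to an equal-norm pair, used as an intermediate point, circumvents this. The only care needed is to check the nonvanishing hypotheses at each application, which is why the cases with a zero component are treated separately.
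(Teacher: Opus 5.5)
Your proof is correct and follows the paper's strategy. You pass through an intermediate equal-norm pair (unit vectors after the common rescaling by $\ell=\sqrt{(\|\psi\|^2+\|\phi\|^2)/2}$) obtained from $(\psi,\phi)$ by a $2\times 2$ unitary $W$, and apply the parallelogram law twice, with $W^{-1}$ and with $UW^{-1}$. The only difference is that the paper chooses $W$ explicitly as a phase-adjusted Hadamard matrix that kills the cross term, whereas you obtain a real rotation from the intermediate value theorem.
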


\begin{proof} If $\psi=0$, then $\phi\ne 0$ by hypothesis. Lemma \ref{lem:homog}, together with the fact that the columns of $U$ are orthonormal so $|b|^2+|d|^2=1$, gives
$$
Q_f(a\psi+b\phi)+Q_f(c\psi+d\phi)=Q_f(b\phi)+Q_f(d\phi)=\left(|b|^2+|d|^2\right)Q_f(\phi)=Q_f(\psi)+Q_f(\phi)\,.
$$
The case $\phi=0$ holds by an identical argument. 

Suppose now that $\psi\ne0$ and $\phi\ne 0$. The idea from here is to rescale $\psi$ and $\phi$ and then mix them into a pair of unit vectors $\eta_\pm$, to which the parallelogram law for $f$ applies directly. Applying the parallelogram law twice relates both sides of the desired identity to $f(\eta_+)+f(\eta_-)$. The rest of the proof carries out this strategy. 

Let $\ell = \tfrac{1}{\sqrt{2}}\sqrt{\|\psi\|^2+\|\phi\|^2}$, $\psi'=\frac{\psi}{\ell}$, and $\phi'=\frac{\phi}{\ell}$, so that $\|\psi'\|^2+\|\phi'\|^2=2$. Choose $\theta\in\mathbb{R}$ such that $e^{i\theta}\langle\psi',\phi'\rangle$ is purely imaginary, let $W$ be the unitary matrix $\tfrac{1}{\sqrt2}\left(\begin{smallmatrix} 1 & e^{i\theta} \\ 1 & -e^{i\theta}\end{smallmatrix}\right)$, and let $w_{ij}$ be the coefficients of $W$. Define $\eta_+$ and $\eta_-$ by $W \big(\begin{smallmatrix}\psi' \\ \phi'\end{smallmatrix}\big) =\big(\begin{smallmatrix}\eta_+\phantom{'} \\ \eta_-\phantom{'}\end{smallmatrix}\!\big)$. Then 
\begin{equation}\label{eq:unitvec}
\|\eta_\pm\|^2=
\frac{\|\psi'\|^2+\|\phi'\|^2}{2}\pm \mathrm{Re}\big(e^{i\theta}\langle\psi',\phi'\rangle\big)=1\,,
\end{equation}
so $\eta_+$ and $\eta_-$ are unit vectors (but need not be orthogonal). From the unitarity of $W^\dagger$, we have the identity
\begin{equation}\label{eq:Winv}
\begin{pmatrix}w_{11}^\ast \eta_+ + w_{21}^\ast \eta_- \\ w_{12}^\ast \eta_+ + w_{22}^\ast \eta_- \end{pmatrix}=W^\dagger \begin{pmatrix}\eta_+ \\ \eta_-\end{pmatrix}=W^\dagger W\begin{pmatrix}\psi' \\ \phi'\end{pmatrix}=\begin{pmatrix}\psi' \\ \phi'\end{pmatrix}\,.
\end{equation}
It follows that $w_{11}^\ast \eta_+ + w_{21}^\ast\eta_-$ and $w_{12}^\ast \eta_+ + w_{22}^\ast\eta_-$ equal $\psi'$ and $\phi'$ respectively, which are non-zero because $\psi$ and $\phi$ are non-zero. Applying the parallelogram law for $f$, we have
\begin{align*}
f\left(\eta_+\right)+f\left(\eta_-\right)&~\stackrel{\eqref{eq:unitvec}}=~\|\eta_+\|^2f\left(\frac{\eta_+}{\|\eta_+\|}\right)+\|\eta_-\|^2f\left(\frac{\eta_-}{\|\eta_-\|}\right)\\
&\stackrel{(\mathrm{PL})}{=} \|w_{11}^\ast \eta_++w_{21}^\ast\eta_-\|^2f\left(\frac{w_{11}^\ast \eta_++w_{21}^\ast\eta_-}{\|w_{11}^\ast \eta_++w_{21}^\ast\eta_-\|}\right)\\
&\qquad\qquad+\|w_{12}^\ast \eta_++w_{22}^\ast\eta_-\|^2f\left(\frac{w_{12}^\ast \eta_++w_{22}^\ast\eta_-}{\|w_{12}^\ast \eta_++w_{22}^\ast\eta_-\|}\right)\\
&~\stackrel{\eqref{eq:Winv}}{=}~ \|\psi'\|^2f\left(\frac{\psi'}{\|\psi'\|}\right)+\|\phi'\|^2f\left(\frac{\phi'}{\|\phi'\|}\right)=Q_f( \psi')+Q_f(\phi')\,.\numberthis \label{eq:psiphieta}
\end{align*}
The matrix $V\equiv UW^{-1}$ is unitary. Writing $v_{ij}$ for the coefficients of $V$,
$$
\begin{pmatrix}v_{11} \eta_+ + v_{12} \eta_-\\ v_{21} \eta_+ + v_{22} \eta_-\end{pmatrix} = V\begin{pmatrix}\eta_+ \\ \eta_-\end{pmatrix}=UW^{-1}W\begin{pmatrix}\psi'\\ \phi'\end{pmatrix}=U\begin{pmatrix}\psi'\\ \phi'\end{pmatrix}=\begin{pmatrix}a\psi'+b\phi'\\ c\psi'+d\phi'\end{pmatrix}\,.
$$
By hypothesis, $a\psi+b\phi \ne 0$ and $c\psi+d\phi \ne 0$. Since $\psi' = \psi/\ell$ and $\phi' = \phi/\ell$ for a non-zero scalar $\ell$, it follows that $a\psi'+b\phi'$ and $c\psi'+d\phi'$ are also non-zero. 

Now, applying the parallelogram law for $f$ once again to the unit vectors $\eta_+$ and $\eta_-$ using the unitary matrix $V$, we have
\begin{align*}
f\left(\eta_+\right)&+f\left(\eta_-\right)\\
&=
\|v_{11} \eta_+ + v_{12} \eta_- \|^2 f\left(\frac{v_{11} \eta_+ + v_{12} \eta_-}{\|v_{11} \eta_+ + v_{12} \eta_-\|}\right) +\|v_{21} \eta_+ + v_{22} \eta_- \|^2 f\left(\frac{v_{21} \eta_+ + v_{22} \eta_-}{\|v_{21} \eta_+ + v_{22} \eta_-\|}\right)\\
&= Q_f(v_{11} \eta_+ + v_{12} \eta_- )+ Q_f(v_{21} \eta_+ + v_{22} \eta_-)\,.
\end{align*}
Plugging the identities for the $V$-combinations into this equation and using Equation \eqref{eq:psiphieta}, we have $Q_f(\psi')+Q_f(\phi') = f(\eta_+)+f(\eta_-) = Q_f(a\psi'+b\phi' )+ Q_f(c\psi'+d\phi')$. Since $Q_f$ is homogeneous by Lemma \ref{lem:homog}, multiplying through by $\ell^2$ and simplifying yields
\[
Q_f(\psi)+Q_f(\phi) =Q_f(a\psi+b\phi )+ Q_f(c\psi+d\phi)\,.\qedhere
\]
\end{proof}

\begin{lemma}\label{PLSum} If $f: S(\mathcal{H})\to \mathbb{R}$ satisfies the parallelogram law, then for all $\psi$ and $\phi$ in $\mathcal{H}$, $Q_f(\psi+\phi)+ Q_f(\psi-\phi)= 2\,Q_f(\psi)+ 2\,Q_f(\phi)$.
\end{lemma}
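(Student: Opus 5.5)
The plan is to apply Lemma \ref{lem:unnormPL} with the Hadamard-type unitary $U=\tfrac{1}{\sqrt{2}}\left(\begin{smallmatrix}1&1\\1&-1\end{smallmatrix}\right)$ and then rescale using Lemma \ref{lem:homog}. For arbitrary $\psi,\phi\in\mathcal{H}$, Lemma \ref{lem:unnormPL} gives
\[
Q_f(\psi)+Q_f(\phi)=Q_f\!\left(\tfrac{\psi+\phi}{\sqrt2}\right)+Q_f\!\left(\tfrac{\psi-\phi}{\sqrt2}\right),
\]
provided both $\psi+\phi\neq 0$ and $\psi-\phi\neq 0$. Lemma \ref{lem:homog} with $\lambda=1/\sqrt2$ turns the right-hand side into $\tfrac12\big(Q_f(\psi+\phi)+Q_f(\psi-\phi)\big)$. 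Multiplying through by $2$ gives the claimed identity in the generic case.

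It remains to handle the degenerate cases excluded by the hypotheses of Lemma \ref{lem:unnormPL}. If $\psi+\phi=0$, then $\phi=-\psi$, and the left-hand side is $Q_f(0)+Q_f(2\psi)=4Q_f(\psi)$ by the definition $Q_f(0)=0$ and homogeneity. The right-hand side is $2Q_f(\psi)+2Q_f(-\psi)=4Q_f(\psi)$, again by Lemma \ref{lem:homog} with $\lambda=-1$. The case $\psi-\phi=0$ is symmetric: the left-hand side is $Q_f(2\psi)+0=4Q_f(\psi)$, and the right-hand side is $4Q_f(\psi)$. When both vanish, $\psi=\phi=0$, and both sides are zero.

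The only step needing any care is this bookkeeping of the degenerate cases, since the substance was already done in Lemma \ref{lem:unnormPL}. I expect no real obstacle.
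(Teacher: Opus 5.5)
Your proposal is correct and follows essentially the same route as the paper: apply Lemma \ref{lem:unnormPL} with the Hadamard matrix when $\psi\pm\phi\neq 0$, rescale by Lemma \ref{lem:homog}, and check the degenerate cases $\phi=\pm\psi$ directly, where both sides equal $4\,Q_f(\psi)$.
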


\begin{proof} If $\phi=\pm\psi$, both sides equal $4Q_f(\psi)$ by Lemma~\ref{lem:homog}. Otherwise, $\psi+\phi\ne0$ and $\psi-\phi\ne0$, so applying Lemma~\ref{lem:unnormPL} with the Hadamard matrix $U = \frac{1}{\sqrt{2}}
\left(\begin{smallmatrix}
1 & \phantom{-}1 \\
1 & -1
\end{smallmatrix}\right)$ and using the homogeneity of $Q_f$, we get
$$
Q_f(\psi) + Q_f(\phi) =Q_f\left(\frac{\psi+\phi}{\sqrt{2}}\right)  +Q_f\left(\frac{\psi-\phi}{\sqrt{2}}\right)= \frac{1}{2}Q_f(\psi+\phi) + \frac{1}{2}Q_f(\psi-\phi)\,.
$$
Multiplying by 2 gives the standard parallelogram identity 
\[
2Q_f(\psi) + 2Q_f(\phi) = Q_f(\psi+\phi) + Q_f(\psi-\phi)\,.\qedhere
\]
\end{proof}

\noindent In the following lemmas, we will let $B_f: \mathcal{H} \times \mathcal{H} \to \mathbb{C}$ be defined by
$$
B_f(\psi,\phi)= \frac{1}{4}\Big(Q_f\!\left(\psi+\phi\right) - i\, Q_f(\psi+ i\phi) -Q_f\!\left(\psi-\phi\right) + i\, Q_f(\psi- i\phi)\Big)\,.
$$

The proof of the next lemma follows a well-known argument of Jordan and von Neumann in \cite{jordan1935inner}.

\begin{lemma}\label{sesqOverQ} If $f$ satisfies the parallelogram law, then $B_f$ is sesquilinear over $\mathbb{Q}[i]$.
\end{lemma}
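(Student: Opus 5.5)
The plan is to run the Jordan--von Neumann argument, using as its only input the parallelogram identity of Lemma \ref{PLSum} together with the homogeneity of Lemma \ref{lem:homog}. First, split $B_f$ into real and imaginary parts. Set $R(\psi,\phi)=\tfrac14\big(Q_f(\psi+\phi)-Q_f(\psi-\phi)\big)$, so that $B_f(\psi,\phi)=R(\psi,\phi)-i\,R(\psi,i\phi)$. It is then enough to show two things:
\begin{itemize}
\item $R$ is symmetric and additive in its second argument over $\mathbb{Q}$;
\item $R$ has the correct behavior under multiplication by $i$.
\end{itemize}
Symmetry, $R(\psi,\phi)=R(\phi,\psi)$, follows from $Q_f(\psi-\phi)=Q_f(\phi-\psi)$, which is Lemma \ref{lem:homog} with $\lambda=-1$. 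Also $R(\psi,0)=0$.

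The key step is additivity, $R(\psi,\phi_1+\phi_2)=R(\psi,\phi_1)+R(\psi,\phi_2)$. I would apply Lemma \ref{PLSum} to the pairs $(\psi+\phi_1,\phi_2)$ and $(\psi-\phi_1,\phi_2)$, and subtract the results. This gives
\[
R(\psi,\phi_1+\phi_2)+R(\psi,\phi_1-\phi_2)=2R(\psi,\phi_1).
\]
Setting $\phi_1=\phi_2$ and using $R(\psi,0)=0$ gives $R(\psi,2\phi)=2R(\psi,\phi)$. Substituting $\phi_1=(x+y)/2$ and $\phi_2=(x-y)/2$ then gives $R(\psi,x)+R(\psi,y)=2R(\psi,(x+y)/2)=R(\psi,x+y)$. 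This is the standard Cauchy-equation route. Additivity then gives $R(\psi,n\phi)=nR(\psi,\phi)$ for $n\in\mathbb{Z}$, and dividing gives the same identity for all rational multiples. Note that $R(\psi,-\phi)=-R(\psi,\phi)$ follows directly from the definition.

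Next comes the imaginary unit. From Lemma \ref{lem:homog} with $\lambda=i$, $Q_f(i\psi+i\phi)=Q_f(\psi+\phi)$, so $R(i\psi,i\phi)=R(\psi,\phi)$. Combined with symmetry, this gives
\[
R(\psi,i\phi)=R(i\psi,-\phi)=-R(i\psi,\phi)=-R(\phi,i\psi).
\]
With these facts, additivity and $\mathbb{Q}$-homogeneity of $B_f$ in each argument are inherited from $R$. It remains to check $B_f(\psi,i\phi)=iB_f(\psi,\phi)$:
\[
B_f(\psi,i\phi)=R(\psi,i\phi)-iR(\psi,-\phi)=R(\psi,i\phi)+iR(\psi,\phi)=i\big(R(\psi,\phi)-iR(\psi,i\phi)\big).
\]
Similarly, $B_f(i\psi,\phi)=-iB_f(\psi,\phi)$, using the identity $R(i\psi,\phi)=-R(\psi,i\phi)$ derived above. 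Linearity in the second argument and conjugate-linearity in the first over $\mathbb{Q}[i]$ then follow, since every element of $\mathbb{Q}[i]$ is $p+iq$ with $p,q\in\mathbb{Q}$.

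The main obstacle is bookkeeping rather than ideas. The paper's convention, linear in the second slot, must match the sign choices in the definition of $B_f$. I would verify the final identity $B_f(\psi,\psi)=Q_f(\psi)$ as a sanity check, using $R(\psi,\psi)=Q_f(\psi)$ and $R(\psi,i\psi)=\tfrac14\big(Q_f((1+i)\psi)-Q_f((1-i)\psi)\big)=0$. Since no continuity is available here, nothing beyond $\mathbb{Q}[i]$ should be claimed; extending to $\mathbb{C}$ is deferred to the boundedness argument via Ostrowski's theorem.
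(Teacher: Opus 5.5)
Your proposal is correct and takes essentially the same Jordan--von Neumann route as the paper. It uses the same auxiliary form $R_f$, additivity from two applications of Lemma~\ref{PLSum}, rational homogeneity, symmetry, and the identities $R_f(i\psi,i\phi)=R_f(\psi,\phi)$ and $R_f(i\psi,\phi)=-R_f(\psi,i\phi)$; the differences are cosmetic, since you work in the second slot and check the action of $i$ in both slots directly, whereas the paper works in the first slot and gets linearity in the second argument from the conjugate symmetry $B_f(\psi,\phi)=B_f(\phi,\psi)^\ast$.
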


\begin{proof} Let $R_f(\psi,\phi) = \frac{1}{4}\Big( Q_f(\psi+\phi) - Q_f(\psi-\phi)\Big)$, so $B_f(\psi,\phi)= R_f(\psi,\phi) - i\, R_f(\psi,i\phi)$. By Lemma \ref{lem:homog}, $Q_f(\lambda v)=Q_f(v)$ whenever $|\lambda|=1$, and we will use this below with $\lambda=-1$ and $\lambda=i$. Applying Lemma \ref{PLSum} twice, we have
\begin{align*}
R_f&(\psi_1+\psi_2,\phi) + R_f(\psi_1-\psi_2,\phi)\\
&= \frac{1}{4}\Big(Q_f(\psi_1+\psi_2+\phi) - Q_f(\psi_1+\psi_2-\phi)+ Q_f(\psi_1-\psi_2+\phi) -  Q_f(\psi_1-\psi_2-\phi)\Big)\\ 
&=  \frac{1}{4}\Big( 2\,Q_f(\psi_1+\phi) + 2\,Q_f(\psi_2) - 2\,Q_f(\psi_1-\phi) -  2\,Q_f(\psi_2) \Big) = 2\,R_f(\psi_1,\phi)\,.\numberthis \label{eq:sumFirst}
\end{align*}
We have $R_f(0,\phi)=\frac{1}{4}\big(Q_f(\phi)-Q_f(-\phi)\big)=0$. It follows that
\begin{align*}
2R_f(\psi,\phi)
\stackrel{\eqref{eq:sumFirst}}{=}
R_f(\psi+\psi,\phi) +R_f(\psi-\psi,\phi)
=R_f(\psi+\psi,\phi)+ R_f(0,\phi)
=
R_f(2\psi,\phi)\,,\numberthis \label{eq:mulTwo}
\end{align*}
and
\begin{align*}
R_f(\psi_1,\phi) + R_f(\psi_2,\phi)&= R_f\left(\frac{\psi_1+\psi_2}{2} + \frac{\psi_1-\psi_2}{2},\phi \right)+ R_f\left(\frac{\psi_1+\psi_2}{2} - \frac{\psi_1-\psi_2}{2},\phi \right)\\ 
&\stackrel{\eqref{eq:sumFirst}}{=}
2\,R_f\left(\frac{\psi_1+\psi_2}{2} ,\phi\right)\stackrel{\eqref{eq:mulTwo}}{=}R_f(\psi_1+\psi_2,\phi)\,.
\end{align*}
This shows that $R_f$ is additive in its first argument. By additivity, $R_f(k\psi,\phi)=k\,R_f(\psi,\phi)$ for every positive integer $k$ and every $\psi\in\mathcal{H}$. Hence, for positive integers $n$ and $m$,
$$
\frac{m}{n}\,R_f\Big(\frac{n}{m}\psi,\phi\Big)=\frac{1}{n}R_f(n\psi,\phi)=\frac{1}{n}\big(n\,R_f(\psi,\phi)\big)=R_f(\psi,\phi)\,,
$$
where the first equality applies the integer case with $k=m$ to $\frac{n}{m}\psi$. Together with $R_f(0,\phi)=0$, this shows that $R_f$ is homogeneous over the non-negative rationals in its first argument. By additivity, $R_f(\psi,\phi)+R_f(-\psi,\phi)=R_f(0,\phi)=0$, so $R_f(-q\psi,\phi)=-R_f(q\psi,\phi)$ for every non-negative rational $q$, and homogeneity of $R_f$ extends to all of $\mathbb{Q}$. Finally, $R_f$ is symmetric, since
\begin{align*}
R_f(\psi,\phi)= \frac{1}{4}\Big( Q_f(\psi+\phi) - Q_f(\psi-\phi)\Big)=\frac{1}{4}\Big( Q_f(\psi+\phi) - Q_f(\phi-\psi)\Big)= R_f(\phi,\psi)\,.\numberthis \label{eq:RSym}
\end{align*}

We now show the sesquilinearity of $B_f$ in the first argument. We have $R_f(i\psi,i\phi)=R_f(\psi,\phi)$ and
\begin{align*}
R_f(i\psi,\phi) = \frac{1}{4}\Big(Q_f(i\psi+\phi)-Q_f(i\psi-\phi)\Big)
=\frac{1}{4}\Big(Q_f(\psi-i\phi)-Q_f(\psi+i\phi)\Big)
=-R_f(\psi,i\phi)\,.
\end{align*}
Putting these identities together, we have
\begin{align*}
B_f(i\psi,\phi) &= R_f(i\psi,\phi)-iR_f(i\psi, i\phi)= -i\Big(R_f(\psi, \phi) - iR_f(\psi,i\phi)\Big)= -i B_f(\psi,\phi)\,.
\end{align*}
Furthermore, since $B_f(\psi,\phi) = R_f(\psi,\phi) - iR_f(\psi,i\phi)$, $B_f$ inherits additivity and homogeneity over $\mathbb{Q}$ in its first argument from $R_f$. We therefore have that for $r_1,r_2\in\mathbb{Q}$,
$$
B_f\big((r_1+ir_2)\psi,\phi\big)=r_1B_f(\psi,\phi)+r_2B_f(i\psi,\phi)=(r_1-ir_2)B_f(\psi,\phi)\,,
$$
so $B_f$ is conjugate linear over $\mathbb{Q}[i]$ in the first argument. Linearity over $\mathbb{Q}[i]$ for the second argument follows from the identity
\begin{align*}
B_f(\psi,\phi) &= R_f(\psi,\phi)-iR_f(\psi,i\phi)= R_f(\psi,\phi) + i R_f(i\psi,\phi)\\
&= R_f(\phi,\psi) + i R_f(\phi,i\psi) = B_f(\phi,\psi)^\ast\,,
\end{align*}
which shows that $B_f$ is conjugate-symmetric. By conjugate symmetry, $B_f$ is additive in its second argument, and
\begin{align*}
B_f(\psi,(r_1+ir_2)\phi)&=B_f((r_1+ir_2)\phi,\psi)^\ast
=\Big(r_1 B_f(\phi,\psi)-ir_2 B_f(\phi,\psi)\Big)^\ast \\
&=(r_1+ir_2) B_f(\psi,\phi)\,.\qedhere
\end{align*}
\end{proof}

The next ingredient we will need is the following special case of a theorem of Ostrowski \cite{ostrowski1929mathematische} on \emph{additive} functions, which are functions $g$ such that ${g(u+v)=g(u)+g(v)}$.

\begin{lemma}[Ostrowski's Theorem on Additive Functions]\label{ostrowski} Let $g:\mathbb{R}\to\mathbb{R}$ be an additive function that is bounded on some open interval $(a,b)$. Then $g$ is linear.
\end{lemma}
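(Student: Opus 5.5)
This is the classical argument: boundedness on an interval gives boundedness near $0$, additivity propagates that to continuity at $0$, and continuity together with $\mathbb{Q}$-linearity gives linearity.

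The first step uses additivity to get $\mathbb{Q}$-homogeneity. From $g(0)=g(0+0)=2g(0)$ we get $g(0)=0$. Then $g(-x)=-g(x)$, and by induction $g(nx)=ng(x)$ for $n\in\mathbb{N}$. It follows that $g(qx)=q\,g(x)$ for all $q\in\mathbb{Q}$, exactly as in the proof of Lemma~\ref{sesqOverQ} for $R_f$. Set $c\equiv g(1)$, so that $g(q)=cq$ for every rational $q$. It remains to extend this identity to real arguments.

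The second step moves the boundedness to a neighbourhood of the origin. Suppose $|g|\le M$ on $(a,b)$, and let $x_0=\tfrac{a+b}{2}$ and $\delta=\tfrac{b-a}{2}$. For $|t|<\delta$ we have $g(t)=g(x_0+t)-g(x_0)$. Both terms are bounded by $M$, so $|g(t)|\le 2M$ on $(-\delta,\delta)$.

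The third step proves continuity at $0$, and this is where care is needed. Given $\varepsilon>0$, choose $n\in\mathbb{N}$ with $2M/n<\varepsilon$. If $|t|<\delta/n$, then $nt\in(-\delta,\delta)$, so
\[
|g(t)|=\tfrac{1}{n}|g(nt)|\le 2M/n<\varepsilon .
\]
Additivity then gives continuity everywhere, since $g(x+t)-g(x)=g(t)$.

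The last step handles an arbitrary real $x$. Take rationals $q_k\to x$. Continuity gives $g(x)=\lim_k g(q_k)=\lim_k c\,q_k=cx$. Hence $g(x)=g(1)\,x$ for all $x\in\mathbb{R}$, so $g$ is linear.

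The only step needing real attention is the third, where the rescaling $t\mapsto nt$ turns a mere bound near $0$ into arbitrarily small values. Everything else is routine.
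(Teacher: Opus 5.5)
Your proof is correct, but it follows the classical continuity route rather than the paper's argument. You split the proof into two standard facts:
\begin{itemize}
\item an additive function bounded near a point is continuous (you shift the bound to $(-\delta,\delta)$ at the cost of a factor $2$, then rescale via $g(t)=\tfrac{1}{n}g(nt)$);
\item a continuous $\mathbb{Q}$-homogeneous function is linear (by sequential density of $\mathbb{Q}$).
\end{itemize}

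The paper never establishes continuity. It subtracts the expected linear part, setting $r(x)=g(x)-x\,g(1)$. It notes that $r$ is additive and vanishes on $\mathbb{Q}$, so every rational is a period of $r$, and that $r$ is bounded on $(a,b)$ by $M'=M+\max(|a|,|b|)\,|g(1)|$. It then argues by contradiction. If $r(x_0)\neq 0$, then $r(nx_0)=n\,r(x_0)$ exceeds $M'$ for large $n$. A rational translate $nx_0+q$ then lands in $(a,b)$ without changing the value of $r$. In short, the paper scales up and then translates into the interval, whereas you translate to the origin and then scale down.

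The paper's version is shorter and needs no limits or $\varepsilon$--$\delta$ bookkeeping; density of rational translates is its only topological input. Yours gives the stronger intermediate conclusion that $g$ is uniformly continuous with an explicit modulus ($|g(t)|\le 2M/n$ whenever $|t|<\delta/n$). It also isolates the reusable principle ``locally bounded and additive implies continuous,'' which carries over unchanged to additive maps on normed spaces.
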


\begin{proof} Let $(a,b)$ be an interval on which $|g|\le M$. As in the proof of Lemma~\ref{sesqOverQ}, additivity (which gives $g(0)=0$) implies $g(qx)=qg(x)$ for all $q\in\mathbb{Q}$ and $x\in\mathbb{R}$, and in particular $g(q)=qg(1)$. Consider the function $r(x)=g(x)-xg(1)$. The function $r$ is additive and has absolute value bounded on $(a,b)$ by $M'=M + \max(|a|,|b|)\big|g(1)\big|$. Furthermore, $r$ has every rational number as a period, since for all $q\in \mathbb{Q}$,
$$
r(x+q) = r(x)+ r(q)=r(x) + g(q) - qg(1) = r(x) + qg(1) - qg(1)=r(x)\,.
$$
Suppose by way of contradiction that $r(x_0)\ne 0$ for some $x_0$. Then there exists $n>0$ such that $|nr(x_0)|>M'$. Since the set of translates by rationals of any real number is dense in $\mathbb{R}$, there exists $q\in \mathbb{Q}$ such that $nx_0+q\in (a,b)$. It follows that $|r(nx_0+q)|=|r(nx_0)|=|nr(x_0)|>M'$, which is a contradiction. Thus $r$ is the $0$ function and $g(x)=g(1)\,x$.
\end{proof}

\noindent As the following lemma shows, Ostrowski's theorem allows us to go from sesquilinearity over $\mathbb{Q}[i]$ to sesquilinearity over $\mathbb{C}$ (cf.~\cite[Sec.~3]{caves2004gleason} and \cite{wright2019gleason}, where closely related additive-to-linear arguments are used to prove Gleason-type theorems).

\begin{lemma}\label{sesqOverC} If $f$ is bounded and satisfies the parallelogram law, then $B_f$ is sesquilinear over $\mathbb{C}$.
\end{lemma}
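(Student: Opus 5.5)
By Lemma \ref{sesqOverQ}, $B_f$ is already sesquilinear over $\mathbb{Q}[i]$ and conjugate-symmetric. So the only missing ingredient is homogeneity with respect to real scalars in the first argument. Once $B_f(t\psi,\phi)=t\,B_f(\psi,\phi)$ holds for all $t\in\mathbb{R}$, the identity $B_f(i\psi,\phi)=-iB_f(\psi,\phi)$ from Lemma \ref{sesqOverQ} gives conjugate-linearity over $\mathbb{C}$ in the first argument. Conjugate symmetry then transfers linearity over $\mathbb{C}$ to the second argument, exactly as at the end of that proof. Since $B_f=R_f(\psi,\phi)-iR_f(\psi,i\phi)$, it suffices to prove $R_f(t\psi,\phi)=tR_f(\psi,\phi)$ for all real $t$ and fixed $\psi,\phi\in\mathcal{H}$.

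Fix $\psi,\phi$ and define $g:\mathbb{R}\to\mathbb{R}$ by $g(t)=R_f(t\psi,\phi)$. Additivity of $R_f$ in its first argument, established in the proof of Lemma \ref{sesqOverQ}, gives $g(s+t)=g(s)+g(t)$, so $g$ is additive. By Lemma \ref{ostrowski}, $g$ is linear as soon as it is bounded on some open interval. Linearity gives $g(t)=tg(1)$, which is the desired real homogeneity.

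The step that needs care is this boundedness, and it is where the boundedness hypothesis on $f$ enters. Suppose $|f|\le M$ on $S(\mathcal{H})$. Then $|Q_f(v)|=\|v\|^2|f(v/\|v\|)|\le M\|v\|^2$ for every $v\in\mathcal{H}$, including $v=0$. For $t\in(0,1)$ this yields
\[
|g(t)|\le \tfrac14\big(|Q_f(t\psi+\phi)|+|Q_f(t\psi-\phi)|\big)\le \tfrac{M}{2}\big(\|\psi\|+\|\phi\|\big)^2 ,
\]
a bound uniform in $t$. Lemma \ref{ostrowski} therefore applies to $g$ on the interval $(0,1)$, and the argument in the previous paragraphs goes through.

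Besides the Ostrowski input, a small point needs checking. The real part $R_f$ must be handled separately from the imaginary part $R_f(\cdot,i\phi)$, but the same argument with $\phi$ replaced by $i\phi$ covers the latter. The bound is the only analytic ingredient; everything else is algebra inherited from Lemma \ref{sesqOverQ}.
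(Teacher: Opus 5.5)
Your proof is correct and is essentially the paper's argument. The paper applies Ostrowski's theorem to the additive function $\lambda\mapsto\mathrm{Re}\,B_f(\psi,\lambda\phi)=R_f(\psi,\lambda\phi)$ in the second argument, which by the symmetry of $R_f$ is your $g$ with the roles of $\psi$ and $\phi$ swapped, bounded on $(0,1)$ via $|Q_f(v)|\le C\|v\|^2$; it then handles the imaginary part by replacing $\phi$ with $i\phi$ and passes to the other argument by conjugate symmetry, exactly as you do.
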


\begin{proof} Consider the function $F_{\psi,\phi}:\mathbb{R}\to \mathbb{R}$ given by $F_{\psi,\phi}(\lambda) = \mathrm{Re}\big(B_f(\psi,\lambda\phi)\big)$. By the additivity of $B_f$ in its second argument (Lemma \ref{sesqOverQ}), applied to the vectors $\lambda_1\phi$ and $\lambda_2\phi$, the function $F_{\psi,\phi}$ is additive. Since $f$ is bounded, $|Q_f(v)| \le C\|v\|^2$ for all $v$, where $C=\sup|f|$. Since the arguments passed to $Q_f$ in the expansion of $B_f(\psi, \lambda\phi)$ have norms bounded by $\|\psi\| + \|\phi\|$ for $\lambda \in (0,1)$, it follows from the triangle inequality that $F_{\psi,\phi}$ restricted to this interval is bounded in absolute value by $C(\|\psi\|+\|\phi\|)^2$. By Lemma \ref{ostrowski}, $F_{\psi,\phi}(\lambda)=\lambda F_{\psi,\phi}(1)$ for all $\lambda\in\mathbb{R}$, and this holds for every pair $\psi,\phi$. Since $B_f(\psi,i\lambda\phi)=iB_f(\psi,\lambda\phi)$ by Lemma \ref{sesqOverQ}, we have $\mathrm{Im}\big(B_f(\psi,\lambda\phi)\big)=-F_{\psi,i\phi}(\lambda)$, and hence for all $\lambda\in\mathbb{R}$,
$$
B_f(\psi,\lambda\phi) = F_{\psi,\phi}(\lambda) - iF_{\psi,i\phi}(\lambda)=\lambda\big(F_{\psi,\phi}(1) - iF_{\psi,i\phi}(1)\big)=\lambda B_f(\psi,\phi)\,.
$$
Using homogeneity over $\mathbb{R}$ and sesquilinearity over $\mathbb{Q}[i]$, it follows that for all $a,b\in \mathbb{R}$
\begin{align*}
B_f(\psi,(a+ib)\phi)&=B_f(\psi,a\phi)+ B_f(\psi, ib\,\phi)
=aB_f(\psi,\phi)+ bB_f(\psi,i\,\phi)
=(a+ib) B_f(\psi,\phi)\,,
\end{align*}
so $B_f$ is homogeneous over $\mathbb{C}$ in the second argument. For the first argument, we have that for all $a,b\in\mathbb{R}$,
\[
B_f((a+ib)\psi,\phi) = B_f(\phi,(a+ib)\psi)^\ast = \big((a+ib)B_f(\phi,\psi)\big)^\ast = (a-ib)B_f(\psi,\phi)\,.\qedhere
\]
\end{proof}

We are now ready to prove the main result of this section.

\begin{lemma}\label{PLtoQF} Postulate \ref{PL} implies Postulate \ref{QF}.
\end{lemma}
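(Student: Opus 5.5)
Given an available HPMM $\mathcal{E}$ with Property \ref{PL} and an outcome set $O\in\Sigma_{\mathcal{E}}$, I will set $f(\psi)=\mathbb{P}_{\mathcal{E}}(O\,;\psi)$. This is a function $S(\mathcal{H}_{\mathcal{E}})\to[0,1]$, so it is bounded with $\sup|f|\le 1$. Property \ref{PL} for $\mathcal{E}$ at the set $O$ is, verbatim, the statement that $f$ satisfies the parallelogram law in the sense of this subsection. Lemmas \ref{PLtoPhaseInvt} through \ref{sesqOverC} therefore all apply to $f$. In particular, Lemma \ref{sesqOverC} shows that $B_f$ is a sesquilinear form on $\mathcal{H}_{\mathcal{E}}$. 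The candidate quadratic form is $Q_O(v)=B_f(v,v)$, and it remains to show $Q_O(\psi)=\mathbb{P}_{\mathcal{E}}(O\,;\psi)$ for unit $\psi$, and that $Q_O$ is real-valued.

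The key computation is $B_f(v,v)=Q_f(v)$ for every $v\in\mathcal{H}_{\mathcal{E}}$. Expanding the definition gives
\[
B_f(v,v)=\tfrac14\big(Q_f(2v)-iQ_f((1+i)v)-Q_f(0)+iQ_f((1-i)v)\big).
\]
By Lemma \ref{lem:homog}, $Q_f(2v)=4Q_f(v)$ and $Q_f((1\pm i)v)=2Q_f(v)$, and $Q_f(0)=0$ by definition. The two imaginary terms therefore cancel, leaving $B_f(v,v)=Q_f(v)$. This value is real, since $Q_f$ is real-valued by construction. For a unit vector $\psi$, the definition of $Q_f$ gives $Q_f(\psi)=\|\psi\|^2f(\psi)=\mathbb{P}_{\mathcal{E}}(O\,;\psi)$. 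So $Q_O:=Q_f$ is a real-valued quadratic form with associated sesquilinear form $B_f$, and it reproduces the OPF on $S(\mathcal{H}_{\mathcal{E}})$.

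Since $O$ and the available $\mathcal{E}$ were arbitrary, every available HPMM has Property \ref{QF}, which is Postulate \ref{QF}. The substantive work, namely sesquilinearity over $\mathbb{C}$ via Jordan--von Neumann and Ostrowski, is already contained in Lemmas \ref{sesqOverQ} and \ref{sesqOverC}. The only points needing care here are that boundedness of $f$ comes for free because its values are probabilities, and that the polarization identity indeed recovers $Q_f$ on the diagonal. I expect neither to pose a real obstacle.
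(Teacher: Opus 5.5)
Your proposal is correct and follows the paper's proof essentially step for step. You invoke Lemma \ref{sesqOverC}, using that probabilities are bounded, and then verify $B_f(v,v)=Q_f(v)$ via the homogeneity of Lemma \ref{lem:homog}, which gives both real-valuedness and agreement with the OPF on the unit sphere.
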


\begin{proof} Let $\mathcal{E}$ be an available HPMM, let $O\in \Sigma_{\mathcal{E}}$, and let $f:S(\mathcal{H}_\mathcal{E})\to \mathbb{R}$ be defined by $f(\psi)=\mathbb{P}_{\mathcal{E}}(O\,;\psi)$. By Postulate \ref{PL}, $f$ satisfies the parallelogram law. Since $f(\psi)$ is a probability, it is bounded above by $1$ and below by $0$. It follows from Lemma \ref{sesqOverC} that $B_f$ is a sesquilinear form over $\mathbb{C}$. Letting $Q_O$ be defined by $Q_O(\psi)=B_f(\psi,\psi)$, we have that for all $\psi\in \mathcal{H}_{\mathcal{E}}$,
$$
Q_O(\psi)=B_f(\psi,\psi)=\tfrac{1}{4}\Big(Q_f(2\psi)-Q_f(0)+i\big[Q_f\big((1-i)\psi\big)-Q_f\big((1+i)\psi\big)\big]\Big)=Q_f(\psi)\,,
$$
since $Q_f(2\psi)=4Q_f(\psi)$, $Q_f(0)=0$, and $Q_f\big((1\pm i)\psi\big)=2Q_f(\psi)$ by Lemma~\ref{lem:homog}. Hence $Q_O=Q_f$ is real-valued on all of $\mathcal{H}_{\mathcal{E}}$, and for $\psi\in S(\mathcal{H}_\mathcal{E})$, $Q_O(\psi)=f(\psi)=\mathbb{P}_{\mathcal{E}}(O\,;\psi)$. 
Since $B_f$ is sesquilinear, $Q_O$ is a quadratic form, so Property \ref{QF} holds.
\end{proof}

\subsection{Postulate \ref*{POVM} is equivalent to Postulate \ref*{QF}}\label{sec:POVM}

In this subsection, we show that POVM universality is, under the standard Kolmogorov axioms for probability, mathematically equivalent to Postulate \ref{QF}. The proof is a variant of the argument given in \cite[Sec.~4.4]{durr2004quantum} for the correspondence between POVMs and normalized measure-valued quadratic maps, with the details adapted to the present framework.

\begin{lemma}\label{lem:POVMeqQF} Postulate \ref{POVM} is equivalent to Postulate \ref{QF}.
\end{lemma}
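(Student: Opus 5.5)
The direction POVM $\Rightarrow$ QF is immediate. Given a normalized POVM $E_{\mathcal{E}}$, set $q_O(\phi,\psi)=\langle\phi,E_{\mathcal{E}}(O)\psi\rangle$. This is a bounded sesquilinear form, and $Q_O(\psi)=q_O(\psi,\psi)$ is real because $E_{\mathcal{E}}(O)$ is positive, hence self-adjoint. It agrees with $\mathbb{P}_{\mathcal{E}}(O\,;\psi)$ on $S(\mathcal{H}_{\mathcal{E}})$.

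For QF $\Rightarrow$ POVM, fix an available $\mathcal{E}$ and, for each $O$, the quadratic form $Q_O$ given by Property~\ref{QF}. The plan is to build a bounded positive operator $E(O)$ with $Q_O(\psi)=\langle\psi,E(O)\psi\rangle$, and then check the measure axioms.

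The first step is uniqueness and boundedness. A quadratic form determines its sesquilinear form $q_O$ via the polarization identity, so $q_O$ is uniquely fixed by $Q_O$. Since $Q_O$ is real-valued, $q_O$ is Hermitian. By homogeneity, $0\le Q_O(\psi)\le\|\psi\|^2$ for all $\psi\in\mathcal{H}_{\mathcal{E}}$, because on unit vectors $Q_O$ is a probability. Polarization then gives $|q_O(\phi,\psi)|\le C\|\phi\|\|\psi\|$: the terms $Q_O(\phi\pm\psi)$ and $Q_O(\phi\pm i\psi)$ are bounded by $(\|\phi\|+\|\psi\|)^2$, and we apply this to unit vectors and rescale. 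So $q_O$ is a bounded sesquilinear form, and the Riesz representation theorem yields a bounded operator $E(O)$ with $q_O(\phi,\psi)=\langle\phi,E(O)\psi\rangle$. From $0\le\langle\psi,E(O)\psi\rangle\le\|\psi\|^2$ we get that both $E(O)$ and $I-E(O)$ are positive.

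The second step is the measure properties. Since $\mathbb{P}_{\mathcal{E}}(\Omega\,;\psi)=1$ on unit vectors, $\langle\psi,(E(\Omega)-I)\psi\rangle=0$ for all $\psi$. For a complex Hilbert space a self-adjoint operator with vanishing quadratic form is zero, so $E(\Omega)=I$. For countable additivity, take pairwise disjoint $\{O_i\}$. Countable additivity of $O\mapsto\mathbb{P}_{\mathcal{E}}(O\,;\psi)$ gives $\langle\psi,\sum_{i\le N}E(O_i)\psi\rangle\to\langle\psi,E(\bigcup O_i)\psi\rangle$ for every unit $\psi$, and by scaling for every $\psi$. Polarization then upgrades this diagonal convergence to convergence of $\langle\phi,\sum_{i\le N}E(O_i)\psi\rangle$ for all $\phi,\psi$, which is exactly weak operator convergence.

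The main point needing care is the boundedness step. The quadratic forms in Postulate~\ref{QF} are not assumed bounded, and the bound must be extracted purely from $0\le\mathbb{P}\le1$ together with homogeneity and polarization before Riesz can be applied. One also needs uniqueness of $q_O$ given $Q_O$, which holds because we are over $\mathbb{C}$. Everything else is routine.
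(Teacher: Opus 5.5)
Your proof is correct. It follows the paper's argument step for step, except at the countable additivity step: the bound $0\le Q_O(\psi)\le\|\psi\|^2$ from homogeneity, boundedness of $q_O$ by polarization, the Riesz representation, positivity of $E(O)$ and $I-E(O)$, and $E(\Omega)=I$ by polarization all match.

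At the countable additivity step the paper argues as follows. The partial sums $S_N=\sum_{i\le N}E(O_i)$ form a monotone, uniformly bounded sequence of positive operators. Vigier's theorem then gives a strong limit $S_\infty$, which the paper identifies as $E(\bigcup_i O_i)$ by comparing expectation values and polarizing, and strong convergence implies weak convergence.

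You instead polarize the scalar convergence $\langle\psi,S_N\psi\rangle\to\langle\psi,E(\bigcup_i O_i)\psi\rangle$ directly. This gives convergence of every matrix element $\langle\phi,S_N\psi\rangle$, which is exactly the weak-operator convergence the paper's definition of a POVM requires.

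Your route is more elementary, since it needs no operator monotone convergence theorem. The paper's route yields the formally stronger strong-operator convergence. For an increasing sequence of positive operators bounded by $I$, however, this follows from weak convergence anyway, since $\|(S-S_N)\psi\|^2\le\|S-S_N\|\,\langle\psi,(S-S_N)\psi\rangle$.
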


\begin{proof} The claim that POVM universality implies Postulate \ref{QF} is almost immediate. Indeed, let $\mathcal{E}$ be an available HPMM. Under POVM universality, $\mathcal{E}$ is associated with a POVM $E_{\mathcal{E}}$ such that $\mathbb{P}_{\mathcal{E}}(O\,;\psi)=\langle \psi, E_{\mathcal{E}}(O)\psi\rangle$ for all $O\in\Sigma_{\mathcal{E}}$ and $\psi\in S(\mathcal{H}_{\mathcal{E}})$. Since each $E_{\mathcal{E}}(O)$ is positive, $Q_O(\psi)=\langle\psi,E_{\mathcal{E}}(O)\psi\rangle$ is a real-valued quadratic form on $\mathcal{H}_{\mathcal{E}}$, namely the one associated with the bounded sesquilinear form $(v,w)\mapsto \langle v,E_{\mathcal{E}}(O)w\rangle$.

Conversely, let $\mathcal{E}$ be an available HPMM. By Postulate \ref{QF}, for any $O\in\Sigma_{\mathcal{E}}$, there exists a quadratic form $Q_O(v)=q_O(v,v)$ associated with a sesquilinear form $q_O$ on $\mathcal{H}_{\mathcal{E}}$ such that $\mathbb{P}_{\mathcal{E}}(O\,;\psi)=Q_O(\psi)$ for all $\psi \in S(\mathcal{H}_{\mathcal{E}})$. Since probabilities take values in $[0,1]$, $|Q_O(\psi)| \le 1$ for $\psi\in S(\mathcal{H}_{\mathcal{E}})$. By the homogeneity of $Q_O$, $|Q_O(v)| \le \|v\|^2$ for all $v \in \mathcal{H}_{\mathcal{E}}$, and by polarization, $q_O$ is therefore a bounded sesquilinear form. By the Riesz representation theorem for bounded sesquilinear forms \cite[Thm.~II.2.2]{conway}, there exists a bounded operator $E_{\mathcal{E}}(O)$ such that $Q_O(\psi)=\langle \psi,E_{\mathcal{E}}(O)\psi\rangle$.

By Kolmogorov's first axiom, probabilities must be non-negative. It follows that for all $\psi\in S(\mathcal{H}_\mathcal{E})$, $\langle \psi,E_{\mathcal{E}}(O)\psi\rangle =\mathbb{P}_\mathcal{E}(O\,;\psi)\ge 0$. By the homogeneity of $\psi\mapsto\langle \psi,E_{\mathcal{E}}(O)\psi\rangle$, this holds for all $\psi\in \mathcal{H}_{\mathcal{E}}$, so $E_{\mathcal{E}}(O)$ is a positive operator. By Kolmogorov's second axiom, for any $\psi$, $\mathbb{P}_{\mathcal{E}}(\Omega_\mathcal{E}\,;\psi)=1$. It follows that for any $\psi\in S(\mathcal{H}_\mathcal{E})$, $\langle \psi,E_{\mathcal{E}}(\Omega_{\mathcal{E}})\psi\rangle = 1=\langle \psi,I\psi\rangle$. This shows that $E_{\mathcal{E}}(\Omega_\mathcal{E})$ agrees with the identity operator $I$ on $S(\mathcal{H}_{\mathcal{E}})$, so $E_{\mathcal{E}}(\Omega_{\mathcal{E}})=I$ by polarization, and $E_{\mathcal{E}}$ is therefore normalized. Furthermore, for any $\psi \in S(\mathcal{H}_\mathcal{E})$, 
$$
\langle \psi,\big(I-E_{\mathcal{E}}(O)\big)\psi\rangle = \langle \psi ,\psi\rangle-\langle \psi,E_{\mathcal{E}}(O)\psi\rangle\stackrel{(\mathrm{\ref{QF}})}{=} 1- \mathbb{P}_{\mathcal{E}}(O\,;\psi)\ge 0\,,
$$
so $I-E_{\mathcal{E}}(O)$ is a positive operator. 

The Kolmogorov axioms immediately give $\mathbb{P}_\mathcal{E}(\varnothing\,;\psi)=0$, so $\langle \psi ,E_{\mathcal{E}}(\varnothing)\psi\rangle = 0$ for all $\psi\in S(\mathcal{H}_\mathcal{E})$, from which it follows that $E_{\mathcal{E}}(\varnothing)$ is the zero operator.

Let $\{O_i~|~i\in\mathbb{N}\}$ be a countable collection of elements of $\Sigma_{\mathcal{E}}$ such that $O_i\cap O_j=\varnothing$ for $i\ne j$, and note that $\bigcup_{i=1}^\infty O_i \in \Sigma_{\mathcal{E}}$ since $\Sigma_{\mathcal{E}}$ is a $\sigma$-algebra. Letting $\mathrm{K3}$ denote Kolmogorov's third axiom, we have that for all $\psi\in S(\mathcal{H}_\mathcal{E})$ and $M\le N$,
\begin{equation}\label{eq:Kol1}
\left\langle \psi, \sum_{i=M}^N E_{\mathcal{E}}(O_i)\psi\right\rangle
=
 \sum_{i=M}^N\langle \psi,E_{\mathcal{E}}(O_i)\psi\rangle
\stackrel{(\mathrm{\ref{QF}})}{=}
\sum_{i=M}^N\mathbb{P}_{\mathcal{E}}(O_i\,;\psi)
\stackrel{(\mathrm{K3})}{=}
\mathbb{P}_{\mathcal{E}}\big( {\textstyle\bigcup_{i=M}^N} O_i\,;\psi\big)\,.
\end{equation}
Let $S_0=0$ and let $S_N= \sum_{i=1}^N E_{\mathcal{E}}(O_i)$ for $N>0$. By Equation \eqref{eq:Kol1}, $\langle\psi,(S_N-S_{M-1})\psi\rangle=\mathbb{P}_{\mathcal{E}}\big(\bigcup_{i=M}^N O_i\,;\psi\big)\in[0,1]$ for all $\psi\in S(\mathcal{H}_{\mathcal{E}})$ and $M\le N$, so the sequence $\{S_N\}$ is positive, monotone, and uniformly bounded. It follows from Vigier's theorem (the monotone convergence theorem for bounded operators \cite[Thm.~4.1.1]{murphy2014c}) that $S_N$ converges strongly to an operator $S_\infty$. By the definition of strong convergence and the continuity of the inner product, for all $\psi\in S(\mathcal{H}_{\mathcal{E}})$,
\begin{align}\label{eq:QFImp1}
\left\langle \psi,S_\infty\psi\right\rangle
&~=
\left\langle \psi,\lim_{N\to\infty}\left(\sum_{i=1}^N E_{\mathcal{E}}(O_i)\psi\right)\right\rangle = 
\lim_{N\to \infty}\sum_{i=1}^N\langle \psi,E_{\mathcal{E}}(O_i)\psi\rangle\nonumber\\
&\stackrel{(\mathrm{\ref{QF}})}{=}
\lim_{N\to \infty}
\sum_{i=1}^N\mathbb{P}_{\mathcal{E}}(O_i\,;\psi)\stackrel{(\mathrm{K3})}{=}
\mathbb{P}_{\mathcal{E}}\big({\textstyle\bigcup_{i=1}^\infty} O_i\,;\psi\big) \stackrel{(\mathrm{\ref{QF}})}{=}
\langle \psi,E_{\mathcal{E}}\left({\textstyle\bigcup_{i=1}^\infty} O_i\right)\psi\rangle\,.
\end{align}
Since Equation \eqref{eq:QFImp1} holds for any $\psi\in S(\mathcal{H}_{\mathcal{E}})$, $S_\infty = E_{\mathcal{E}}\left(\bigcup_{i=1}^\infty O_i\right)$ by polarization, so $\sum_{i=1}^N E_{\mathcal{E}}(O_i)$ converges strongly to $E_{\mathcal{E}}\left(\bigcup_{i=1}^\infty O_i\right)$. Since strong convergence implies weak convergence, this shows that $E_{\mathcal{E}}$ satisfies the countable additivity property for a POVM.
\end{proof}

\subsection{Postulate \ref*{QF} implies Postulate \ref*{WSEP}}\label{sec:QFtoWSEP}

As can be seen from Figure \ref{Implications}, the result in this subsection is not strictly necessary, since it follows easily from the implication \ref{QF}$\Rightarrow$\ref{SEP} and the fact that Postulate \ref{WSEP} is a special case of Postulate \ref{SEP}. However, the following short argument avoids Bochner integration, and so offers an easier route to the conclusion for those primarily interested in the countably supported case.

\begin{lemma}\label{densityReduction} Postulate \ref{QF} implies Postulate \ref{WSEP}. 
\end{lemma}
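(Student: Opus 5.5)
The plan is to pass through Lemma \ref{lem:POVMeqQF}: under Postulate \ref{QF}, every available HPMM $\mathcal{E}$ has Property \ref{POVM}. So for each $O\in\Sigma_{\mathcal{E}}$ there is a bounded positive operator $E_{\mathcal{E}}(O)$ with $0\le E_{\mathcal{E}}(O)\le I$ and $\mathbb{P}_{\mathcal{E}}(O\,;\psi)=\langle\psi,E_{\mathcal{E}}(O)\psi\rangle$ for every unit vector $\psi$. I will then define the candidate function directly:
$$
f_O(\rho)\equiv\mathrm{Tr}\big(\rho\,E_{\mathcal{E}}(O)\big)\qquad(\rho\in\mathcal{D}(\mathcal{H}_{\mathcal{E}}))\,.
$$
This is well defined because $\rho$ is trace-class and $E_{\mathcal{E}}(O)$ is bounded. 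It takes values in $[0,1]$, since $0\le E_{\mathcal{E}}(O)\le I$ and $\mathrm{Tr}\,\rho=1$ give $0\le\mathrm{Tr}(\rho E_{\mathcal{E}}(O))\le\mathrm{Tr}(\rho)=1$.

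Next, take a countably supported proper mixture $\mu=\{(p_i,\psi_i)\}$. By definition,
$$
\widetilde{\mathbb{P}}_{\mathcal{E}}(O\,;\mu)=\sum_i p_i\langle\psi_i,E_{\mathcal{E}}(O)\psi_i\rangle=\sum_i p_i\,\mathrm{Tr}\big(\ket{\psi_i}\!\bra{\psi_i}E_{\mathcal{E}}(O)\big)\,.
$$
I need this to equal $\mathrm{Tr}\big(\big(\sum_ip_i\ket{\psi_i}\!\bra{\psi_i}\big)E_{\mathcal{E}}(O)\big)$. This is the only step with content, and only when the support is infinite, because then the trace must be exchanged with the infinite sum.

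To justify the exchange, I will note that $\|\ket{\psi_i}\!\bra{\psi_i}\|_1=1$ and $\sum_ip_i=1$. Hence the series $\sum_i p_i\ket{\psi_i}\!\bra{\psi_i}$ converges absolutely in the trace norm $\mathrm{B}_1(\mathcal{H}_{\mathcal{E}})$. Its limit is the density operator named in Postulate \ref{WSEP}, and it is positive with trace $1$. The map $T\mapsto\mathrm{Tr}(T\,E_{\mathcal{E}}(O))$ is a continuous linear functional on $\mathrm{B}_1$, bounded by $\|E_{\mathcal{E}}(O)\|\,\|T\|_1$. It therefore commutes with the trace-norm-convergent sum, which gives the identity and hence Property \ref{WSEP}.

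I expect no serious obstacle. The only care needed is the trace-norm convergence just described, together with the remark that the $\psi_i$ need not be distinct or orthogonal, which does not affect the argument. This route uses the already-established equivalence of Postulates \ref{QF} and \ref{POVM} and avoids Bochner integration, as the surrounding text promises.
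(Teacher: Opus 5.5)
Your proposal is correct and follows essentially the same route as the paper. Both obtain the POVM from Lemma \ref{lem:POVMeqQF}, set $f_O(\rho)=\mathrm{Tr}(E_{\mathcal{E}}(O)\rho)$, and pass the countable sum through the trace using trace-norm convergence and continuity of the trace pairing. The one minor difference is how $f_O(\rho)\in[0,1]$ is shown. You get it directly from $0\le E_{\mathcal{E}}(O)\le I$ and $\mathrm{Tr}\,\rho=1$, while the paper deduces it from the fact that every density operator is the image under $\mathcal{M}$ of a countably supported mixture; either works.
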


\begin{proof} Let $\mathcal{E}$ be an available HPMM and $O\in \Sigma_{\mathcal{E}}$. We want to construct $f_O:\mathcal{D}(\mathcal{H}_{\mathcal{E}})\to [0,1]$ such that for all $\mu\in\mathrm{M}_1\big(S(\mathcal{H}_\mathcal{E})\big)$ with countable support, $f_O\circ\mathcal{M}(\mu) =\widetilde{\mathbb{P}}_\mathcal{E}(O\,;\mu)$.

By Lemma \ref{lem:POVMeqQF}, Postulate \ref{QF} implies that $\mathcal{E}$ is associated with a POVM $E_{\mathcal{E}}$ such that $\mathbb{P}_{\mathcal{E}}(O\,;\psi) = \langle \psi , E_{\mathcal{E}}(O) \psi \rangle$ for all $\psi \in S(\mathcal{H}_{\mathcal{E}})$. Let $\mu=\{(p_i,\psi_i)\}$ be a countably supported element of $\mathrm{M}_1\big(S(\mathcal{H}_\mathcal{E})\big)$. By the definition of $\widetilde{\mathbb{P}}_{\mathcal{E}}$,  
$$
\widetilde{\mathbb{P}}_{\mathcal{E}}(O\,;\mu)=\sum_i p_i\,\mathbb{P}_{\mathcal{E}}(O\,;\psi_i)
=\sum_i p_i\langle \psi_i,E_{\mathcal{E}}(O)\psi_i\rangle\,.
$$
For each $i$, $\langle \psi_i,E_{\mathcal{E}}(O)\psi_i\rangle=\mathrm{Tr}\big(E_{\mathcal{E}}(O)\ket{\psi_i}\!\bra{\psi_i}\big)$. Since $\sum_i p_i=1$ and each $\ket{\psi_i}\!\bra{\psi_i}$ has trace norm $1$, the series $\sum_i p_i\ket{\psi_i}\!\bra{\psi_i}$ converges in trace norm. Since $E_{\mathcal{E}}(O)$ is bounded, the trace pairing $\rho\mapsto \mathrm{Tr}(E_{\mathcal{E}}(O)\rho)$ is continuous in the trace norm. It follows that we may pass the countable sum through the trace, so for all such $\mu$,
$$
\widetilde{\mathbb{P}}_{\mathcal{E}}(O\,;\mu)=\sum_ip_i \mathrm{Tr}\big(E_{\mathcal{E}}(O)\ket{\psi_i}\!\bra{\psi_i}\big)
=\mathrm{Tr}\Big(E_{\mathcal{E}}(O)\!\sum_i p_i \ket{\psi_i}\!\bra{\psi_i}\Big)\,.
$$
Let $f_O(\rho) \equiv \mathrm{Tr}(E_{\mathcal{E}}(O)\rho)$ for $\rho\in\mathcal{D}(\mathcal{H}_{\mathcal{E}})$. The display above shows that $f_O\circ \mathcal{M}(\mu)=\widetilde{\mathbb{P}}_{\mathcal{E}}(O\,;\mu)$ for every countably supported $\mu$. Since every $\rho\in\mathcal{D}(\mathcal{H}_{\mathcal{E}})$ is of the form $\mathcal{M}(\mu)$ for such a $\mu$ (see Section \ref{subsec:propmix}), it follows that $f_O$ takes values in $[0,1]$, as required.
\end{proof}

\subsection{Postulate \ref*{QF} implies Postulate \ref*{SEP}}\label{sec:QFtoSEP}

We will now prove that Postulate \ref{QF} implies the impossibility of DEIDO for general proper mixtures. The argument is similar to the one in Section \ref{sec:QFtoWSEP}, but requires more care with measurability and convergence. We will need the following preliminary lemma.

\begin{lemma}\label{tcintegral} Given a bounded operator $A$ on a separable Hilbert space $\mathcal{H}$ and a Borel probability measure $\mu$ on $S(\mathcal{H})$, $\int_{S(\mathcal{H})} A\ket{\psi}\!\bra{\psi}\,d\mu(\psi)$ defines a trace-class operator. When $A=I$, this operator lies in $\mathcal{D}(\mathcal{H})$.
\end{lemma}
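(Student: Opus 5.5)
The plan is to realize the integral as a Bochner integral in the Banach space $\mathrm{B}_1(\mathcal{H})$ of trace-class operators with the trace norm, as in the definition of the mixture map $\mathcal{M}$. Set $F(\psi)=A\ket{\psi}\!\bra{\psi}$ for $\psi\in S(\mathcal{H})$. By the standard criterion for Bochner integrability, two things must be checked. First, $F$ must be strongly $\mu$-measurable, meaning Borel measurable with essentially separable range. Second, $\int\|F(\psi)\|_1\,d\mu(\psi)<\infty$. The second point is immediate: $\|A\ket{\psi}\!\bra{\psi}\|_1=\|A\psi\|\,\|\psi\|\le\|A\|$. Since $\mu$ is a probability measure, the integral is bounded by $\|A\|$. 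The Bochner integral then exists, lies in $\mathrm{B}_1(\mathcal{H})$, and satisfies $\|\int F\,d\mu\|_1\le\|A\|$.

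For measurability, the key observation is that $F$ is continuous from $S(\mathcal{H})$, with the norm topology, into $\mathrm{B}_1(\mathcal{H})$. Indeed,
\[
\ket{\psi}\!\bra{\psi}-\ket{\phi}\!\bra{\phi}=\ket{\psi-\phi}\!\bra{\psi}+\ket{\phi}\!\bra{\psi-\phi}\,,
\]
and each rank-one term has trace norm at most $\|\psi-\phi\|$. Hence $\|F(\psi)-F(\phi)\|_1\le 2\|A\|\,\|\psi-\phi\|$. A continuous map is Borel measurable. Because $\mathcal{H}$ is separable, $S(\mathcal{H})$ is a separable metric space, so the continuous image $F(S(\mathcal{H}))$ is separable in $\mathrm{B}_1(\mathcal{H})$. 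Pettis's measurability theorem then gives strong measurability. I expect this step to be the main obstacle, though it mostly consists of verifying that the hypotheses of Pettis's theorem are met. Separability of $\mathcal{H}$ is exactly what is used here.

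For the case $A=I$, let $\rho=\int\ket{\psi}\!\bra{\psi}\,d\mu$. The plan is to use the fact that bounded linear functionals commute with Bochner integrals.

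1. For any $v\in\mathcal{H}$, the map $T\mapsto\langle v,Tv\rangle$ is a bounded functional on $\mathrm{B}_1$. Applying it gives $\langle v,\rho v\rangle=\int|\langle v,\psi\rangle|^2\,d\mu\ge0$, so $\rho$ is positive. Self-adjointness follows similarly, using the continuous conjugate-linear map $T\mapsto T^\dagger$, or directly from the fact that $\langle v,\rho v\rangle$ is real for every $v$.
2. The trace is a bounded functional on $\mathrm{B}_1$. Therefore $\mathrm{Tr}\,\rho=\int\mathrm{Tr}\ket{\psi}\!\bra{\psi}\,d\mu=\int 1\,d\mu=1$.

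Together these show $\rho\in\mathcal{D}(\mathcal{H})$.
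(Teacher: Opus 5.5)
Your proposal is correct and follows essentially the same route as the paper. Both arguments use continuity of $\psi\mapsto A\ket{\psi}\!\bra{\psi}$ into $\mathrm{B}_1(\mathcal{H})$, Pettis's theorem for strong measurability, the bound $\|A\ket{\psi}\!\bra{\psi}\|_1\le\|A\|$ for Bochner integrability, and the commuting of the trace and of $T\mapsto\langle v,Tv\rangle$ with the integral when $A=I$; the only minor differences are that you obtain separable range from separability of $S(\mathcal{H})$ plus continuity (the paper cites separability of $\mathrm{B}_1(\mathcal{H})$) and that you give an explicit Lipschitz estimate where the paper simply asserts continuity.
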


\begin{proof}
The map $\psi\mapsto \ket{\psi}\!\bra{\psi}$ is continuous from $S(\mathcal{H})$ equipped with the norm topology into the trace class $\mathrm{B}_1(\mathcal{H})$ equipped with the topology induced by the trace-norm $\|\cdot\|_1$. Since $A$ is bounded, $T\mapsto AT$ is continuous on the trace class \cite[Sec.~IX.2, Exercise~20(g)]{conway}. It follows that $\psi\mapsto A\ket{\psi}\!\bra{\psi}$ is a composition of continuous maps, so it is continuous, Borel measurable, and weakly measurable. Since we have assumed that $\mathcal{H}$ is separable, the trace class $\mathrm{B}_1(\mathcal{H})$ is also
separable, so this map is separably valued. It follows that the map $\psi\mapsto A\ket{\psi}\!\bra{\psi}$ is strongly measurable by the Pettis measurability theorem \cite[Thm.~1.1.6]{hytonen2016analysis}.

To show that the map is Bochner integrable, it suffices by \cite[Prop.~1.2.2]{hytonen2016analysis} to show $\int_{S(\mathcal{H})} \big\|A\ket{\psi}\!\bra{\psi}\big\|_1\,d\mu(\psi)$ is finite. Since $A\ket{\psi}\!\bra{\psi}=\ket{A\psi}\!\bra{\psi}$ is rank one and $\|\psi\|=1$, $\big\|A\ket{\psi}\!\bra{\psi}\big\|_1=\|A\psi\|\,\|\psi\|\le\|A\|$. Because $\mu$ is a probability measure on $S(\mathcal{H})$, we find that $\int_{S(\mathcal{H})} \big\|A\ket{\psi}\!\bra{\psi}\big\|_1\,d\mu(\psi)$ is bounded by  $\int_{S(\mathcal{H})}\|A\|\,d\mu(\psi) = \|A\| < \infty$. Since the Bochner integral is defined as a trace-norm limit of finite sums of elements of $\mathrm{B}_1(\mathcal{H})$, and $\mathrm{B}_1(\mathcal{H})$ is complete in the trace norm, the integral is itself an element of $\mathrm{B}_1(\mathcal{H})$, i.e., a trace-class operator.

When $A=I$, since the trace \cite[Prop.~D.2.5]{hytonen2016analysis} and the maps $T\mapsto\langle\phi,T\phi\rangle$ are bounded linear functionals on $\mathrm{B}_1(\mathcal{H})$, they commute with the Bochner integral \cite[Eq.~(1.2)]{hytonen2016analysis}, so $\mathrm{Tr}\int\ket{\psi}\!\bra{\psi}\,d\mu(\psi)=\int 1\,d\mu(\psi)=1$ and for all $\phi\in \mathcal{H}$, 
$$
\left\langle\phi,\int\ket{\psi}\!\bra{\psi}\,d\mu(\psi)\,\phi\right\rangle=\int|\langle\phi,\psi\rangle|^2\,d\mu(\psi)\ge 0\,,
$$
so the operator $\int\ket{\psi}\!\bra{\psi}\,d\mu(\psi)$ lies in $\mathcal{D}(\mathcal{H})$.
\end{proof}

\noindent We are now ready to prove the general form of Lemma \ref{densityReduction}.

\begin{lemma}\label{densityReductionGeneral} Postulate \ref{QF} implies Postulate \ref{SEP}. 
\end{lemma}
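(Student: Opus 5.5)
The plan follows Lemma \ref{densityReduction}, with sums replaced by Bochner integrals. Fix an available HPMM $\mathcal{E}$ and $O\in\Sigma_{\mathcal{E}}$. By Lemma \ref{lem:POVMeqQF}, Postulate \ref{QF} gives a normalized POVM $E_{\mathcal{E}}$ with $\mathbb{P}_{\mathcal{E}}(O\,;\psi)=\langle\psi,E_{\mathcal{E}}(O)\psi\rangle$. I will take $f_O(\rho)\equiv\mathrm{Tr}\big(E_{\mathcal{E}}(O)\rho\big)$ on $\mathcal{D}(\mathcal{H}_{\mathcal{E}})$. It takes values in $[0,1]$, since $0\le E_{\mathcal{E}}(O)\le I$ and every $\rho$ has a countable spectral decomposition, exactly as in Lemma \ref{densityReduction}. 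It then remains to show that $\widetilde{\mathbb{P}}_{\mathcal{E}}(O\,;-)$ is defined on all of $\mathrm{M}_1(S(\mathcal{H}_{\mathcal{E}}))$ and that $f_O\circ\mathcal{M}=\widetilde{\mathbb{P}}_{\mathcal{E}}(O\,;-)$.

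First I will establish measurability, which Property \ref{SEP} requires as part of its assertion. The map $\psi\mapsto\langle\psi,E_{\mathcal{E}}(O)\psi\rangle$ is continuous in the norm topology because $E_{\mathcal{E}}(O)$ is bounded, via the estimate $|\langle\psi,A\psi\rangle-\langle\phi,A\phi\rangle|\le 2\|A\|\,\|\psi-\phi\|$ on the unit sphere. It is therefore Borel measurable and bounded, so the integral in Equation \eqref{eq:affinity} is well defined for every Borel probability measure $\mu$.

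Next I will move the trace through the integral. The functional $T\mapsto\mathrm{Tr}(E_{\mathcal{E}}(O)T)$ is bounded and linear on $\mathrm{B}_1(\mathcal{H}_{\mathcal{E}})$, since $|\mathrm{Tr}(AT)|\le\|A\|\,\|T\|_1$. Lemma \ref{tcintegral} shows that $\psi\mapsto\ket{\psi}\!\bra{\psi}$ is Bochner integrable with integral $\mathcal{M}(\mu)\in\mathcal{D}(\mathcal{H}_{\mathcal{E}})$. Bounded linear functionals commute with Bochner integrals \cite[Eq.~(1.2)]{hytonen2016analysis}, so
$$
f_O\big(\mathcal{M}(\mu)\big)=\mathrm{Tr}\Big(E_{\mathcal{E}}(O)\int d\mu(\psi)\ket{\psi}\!\bra{\psi}\Big)=\int d\mu(\psi)\,\mathrm{Tr}\big(E_{\mathcal{E}}(O)\ket{\psi}\!\bra{\psi}\big)=\int d\mu(\psi)\,\mathbb{P}_{\mathcal{E}}(O\,;\psi),
$$
and the last expression is $\widetilde{\mathbb{P}}_{\mathcal{E}}(O\,;\mu)$. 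Alternatively, one can apply Lemma \ref{tcintegral} with $A=E_{\mathcal{E}}(O)$ and take the trace of that integral.

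The main obstacle is a technical check rather than a conceptual one. The functional-commutes-with-Bochner-integral step needs the Bochner integral from Lemma \ref{tcintegral} and the scalar Lebesgue integral in Equation \eqref{eq:affinity} to use the same measure. Concretely, $\mu$ must be a Borel measure on the norm topology of $S(\mathcal{H}_{\mathcal{E}})$, and the scalar integrand must be the pullback of the vector-valued one along the bounded functional. I would verify this explicitly. Everything else is bookkeeping already carried out in Lemmas \ref{lem:POVMeqQF}, \ref{densityReduction}, and \ref{tcintegral}.
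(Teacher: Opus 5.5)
Your proposal is correct and follows essentially the same route as the paper. Both arguments take the POVM from Lemma~\ref{lem:POVMeqQF}, set $f_O(\rho)=\mathrm{Tr}(E_{\mathcal{E}}(O)\rho)$, note that the continuous, bounded scalar integrand makes $\widetilde{\mathbb{P}}_{\mathcal{E}}(O\,;-)$ defined on all of $\mathrm{M}_1(S(\mathcal{H}_{\mathcal{E}}))$, and pass a bounded linear map through the Bochner integral of Lemma~\ref{tcintegral}. The differences are cosmetic: you treat $T\mapsto\mathrm{Tr}(E_{\mathcal{E}}(O)T)$ as a single bounded functional, so you only need the $A=I$ case of Lemma~\ref{tcintegral}, whereas the paper commutes left multiplication by $E_{\mathcal{E}}(O)$ and the trace with the integral separately, and bounds $f_O$ directly via $0\le E_{\mathcal{E}}(O)\le I$ rather than through a spectral decomposition.
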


\begin{proof} Let $\mathcal{E}$ be an available HPMM and $O\in\Sigma_{\mathcal{E}}$. By Lemma \ref{lem:POVMeqQF}, Postulate \ref{QF} implies that $\mathcal{E}$ is associated with a POVM $E_{\mathcal{E}}$, so that $0\le E_{\mathcal{E}}(O)\le I$ and $\mathbb{P}_{\mathcal{E}}(O\,;\psi) =\langle\psi,E_{\mathcal{E}}(O)\psi\rangle$ for all $\psi \in S(\mathcal{H}_{\mathcal{E}})$. By the trace identity $\langle\psi,E_{\mathcal{E}}(O)\psi\rangle=\mathrm{Tr}\big(E_{\mathcal{E}}(O)\ket{\psi}\!\bra{\psi}\big)$, it follows that given $\mu\in \mathrm{M}_1(S(\mathcal{H}_{\mathcal{E}}))$, we have
$$
\widetilde{\mathbb{P}}_{\mathcal{E}}(O\,;\mu)
=
\int_{S(\mathcal{H}_{\mathcal{E}})}\langle\psi,E_{\mathcal{E}}(O)\psi\rangle\, d\mu(\psi)
=
\int_{S(\mathcal{H}_{\mathcal{E}})}\mathrm{Tr}\big(E_{\mathcal{E}}(O)\ket{\psi}\!\bra{\psi}\big)\,d\mu(\psi)\,.
$$
Because the scalar integrand $\mathrm{Tr}\big(E_{\mathcal{E}}(O) \ket{\psi}\!\bra{\psi}\big) = \mathbb{P}_{\mathcal{E}}(O\,;\psi)$ is continuous (hence Borel measurable) and takes values in $[0,1]$, it is bounded and therefore $\mu$-integrable. In particular, $\widetilde{\mathbb{P}}_{\mathcal{E}}(O\,;-)$ is defined on all of $\mathrm{M}_1(S(\mathcal{H}_{\mathcal{E}}))$, as Property \ref{SEP} requires. Furthermore, by Lemma \ref{tcintegral}, the relevant operator-valued Bochner integrals exist and take values in the trace class. Since $\mathrm{Tr}$ is a bounded linear functional on the trace class \cite[Prop.~D.2.5]{hytonen2016analysis}, and since left multiplication by the bounded operator $E_{\mathcal E}(O)$ defines a bounded linear map on the trace class, both commute with the Bochner integral \cite[Eq.~(1.2)]{hytonen2016analysis}. By Lemma \ref{tcintegral} with \(A=I\), $\mathcal M(\mu)
=
\int_{S(\mathcal H_{\mathcal{E}})}
\ket{\psi}\!\bra{\psi}\,d\mu(\psi)$ lies in $\mathcal D(\mathcal H_{\mathcal{E}})$. Hence
\begin{align*}
\widetilde{\mathbb P}_{\mathcal E}(O\,;\mu)
&=
\int_{S(\mathcal H_{\mathcal{E}})}
\mathrm{Tr} \left(E_{\mathcal E}(O)\ket{\psi}\!\bra{\psi}\right)
d\mu(\psi)
=
\mathrm{Tr} \bigg(
\int_{S(\mathcal H_{\mathcal{E}})}
E_{\mathcal E}(O)\ket{\psi}\!\bra{\psi}\,d\mu(\psi)
\bigg)\\
&=
\mathrm{Tr} \bigg(
E_{\mathcal E}(O)
\int_{S(\mathcal H_{\mathcal{E}})}
\ket{\psi}\!\bra{\psi}\,d\mu(\psi)
\bigg)
=
\mathrm{Tr} \big(E_{\mathcal E}(O)\mathcal M(\mu)\big).
\end{align*}
Let $f_O$ be defined by $f_O(\rho)\equiv \mathrm{Tr}(E_{\mathcal{E}}(O)\rho)$. Since $0\le E_{\mathcal{E}}(O)\le I$, for every $\rho\in\mathcal{D}(\mathcal{H}_{\mathcal{E}})$ we have $0\le f_O(\rho)\le \mathrm{Tr}(\rho)=1$, so $f_O$ takes values in $[0,1]$. The preceding calculation therefore shows that for all $\mu$ in $\mathrm{M}_1(S(\mathcal{H}_{\mathcal{E}}))$, $\widetilde{\mathbb{P}}_{\mathcal{E}}(O\,;\mu) = f_O\circ \mathcal{M}(\mu)$, as required.
\end{proof}

\subsection{Postulate \ref*{WSEP} implies Postulate \ref*{PL}}\label{sec:WSEPtoPL}

In this subsection, we show that the impossibility of DEIDO directly implies PL. This conclusion is closely related to a standard result in operational treatments of quantum theory, where states are identified with density operators from the outset and measurements are defined as affine maps from states to probability distributions. In that setting, every such map is given by a POVM (see, e.g., \cite[Thm.~2.2.1]{holevo2011probabilistic}).

The argument below does not assume affinity, and uses Postulate \ref{WSEP} only to compare pairs of two-element ensembles with the same density operator, whose unnormalized vectors are related by a $2\times 2$ unitary matrix (cf.\ \cite[Sec.~2]{hughston1993complete}). The implication \ref{WSEP}$\,\Rightarrow\,$\ref{PL} will also be useful because, combined with Lemma \ref{PLtoQF}, it allows us to use the algebraic consequences of Postulate \ref{QF} when proving the implication \ref{WSEP}$\,\Rightarrow\,$\ref{PT}.

\begin{lemma}\label{DIPL} Postulate \ref{WSEP} implies Postulate \ref{PL}.
\end{lemma}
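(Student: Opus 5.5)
Fix an available HPMM $\mathcal{E}$ with Property \ref{WSEP}, an outcome set $O\in\Sigma_{\mathcal{E}}$, unit vectors $\psi,\phi\in S(\mathcal{H}_{\mathcal{E}})$, and a unitary $U=\left(\begin{smallmatrix} a & b\\ c & d\end{smallmatrix}\right)$ such that $a\psi+b\phi\neq 0$ and $c\psi+d\phi\neq 0$. The plan is to compare two finitely supported proper mixtures with the same density operator. The first is
$$
\mu_1=\big\{(\tfrac12,\psi),(\tfrac12,\phi)\big\}.
$$
For the second, write $\chi_1=a\psi+b\phi$ and $\chi_2=c\psi+d\phi$, and set
$$
\mu_2=\big\{(\tfrac12\|\chi_1\|^2,\chi_1/\|\chi_1\|),(\tfrac12\|\chi_2\|^2,\chi_2/\|\chi_2\|)\big\}.
$$

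The first check is that $\mu_2$ is a probability measure. Since the columns of $U$ are orthonormal, $|a|^2+|c|^2=|b|^2+|d|^2=1$ and $\bar a b+\bar c d=0$. Expanding the norms gives
$$
\|\chi_1\|^2+\|\chi_2\|^2=(|a|^2+|c|^2)\|\psi\|^2+(|b|^2+|d|^2)\|\phi\|^2+2\,\mathrm{Re}\big((\bar a b+\bar c d)\langle\psi,\phi\rangle\big)=2,
$$
so the weights of $\mu_2$ sum to $1$.

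The key step is to show $\mathcal{M}(\mu_1)=\mathcal{M}(\mu_2)$. Up to normalization, this means
$$
\ket{\chi_1}\!\bra{\chi_1}+\ket{\chi_2}\!\bra{\chi_2}=\ket{\psi}\!\bra{\psi}+\ket{\phi}\!\bra{\phi}.
$$
To see it, expand the left side. The coefficient of $\ket{\psi}\!\bra{\psi}$ is $|a|^2+|c|^2=1$, and the coefficient of $\ket{\phi}\!\bra{\phi}$ is $|b|^2+|d|^2=1$. The coefficient of $\ket{\psi}\!\bra{\phi}$ is $a\bar b+c\bar d$, which vanishes because $U^\dagger U=I$ implies $UU^\dagger=I$, and the latter's off-diagonal-type relation for the rows is exactly $a\bar b+c\bar d=0$ after rearrangement. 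The cross term $\ket{\phi}\!\bra{\psi}$ is the conjugate of this and also vanishes. This is the step where one must be careful to use the correct unitarity relation: in the finite-dimensional setting unitarity is two-sided, so both the column relations and the row relations hold. Equivalently, the identity states that $U^\dagger U=I$ as a Gram-type identity.

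Property \ref{WSEP} supplies a function $f_O$ with $\widetilde{\mathbb{P}}_{\mathcal{E}}(O\,;\mu_1)=f_O(\mathcal{M}(\mu_1))=f_O(\mathcal{M}(\mu_2))=\widetilde{\mathbb{P}}_{\mathcal{E}}(O\,;\mu_2)$. Writing out both sides by the countably supported definition of $\widetilde{\mathbb{P}}_{\mathcal{E}}$ gives
$$
\tfrac12\,\mathbb{P}_{\mathcal{E}}(O\,;\psi)+\tfrac12\,\mathbb{P}_{\mathcal{E}}(O\,;\phi)=\tfrac12\|\chi_1\|^2\,\mathbb{P}_{\mathcal{E}}\big(O\,;\chi_1/\|\chi_1\|\big)+\tfrac12\|\chi_2\|^2\,\mathbb{P}_{\mathcal{E}}\big(O\,;\chi_2/\|\chi_2\|\big).
$$
Multiplying by $2$ yields Property \ref{PL}. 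No affinity of $f_O$, measurability, or phase invariance is needed; only the fact that $f_O$ is a function is used. The only potential obstacle is the density-operator identity, which is a direct algebraic computation.
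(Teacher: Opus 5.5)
Your proof is correct and is essentially the paper's argument: the same two ensembles $\mu_1$ and $\mu_2$, the same weight and density-operator computations, and the same direct use of the function $f_O$ from Postulate \ref{WSEP} without any appeal to affinity. One slip in your justification of the vanishing cross terms: $a\bar b + c\bar d = 0$ is the complex conjugate of the column relation $\bar a b + \bar c d = 0$ that you already used for the weights (an off-diagonal entry of $U^\dagger U = I$), whereas the off-diagonal row relation from $UU^\dagger = I$ is the different identity $a\bar c + b\bar d = 0$, so the detour through two-sided unitarity is unnecessary and cites the wrong relation.
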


\begin{proof} Let $\mathcal{E}$ be an available HPMM, let $U=  \left(\begin{smallmatrix}
a & b \\
c & d
\end{smallmatrix}\right)$ be unitary, and let $\psi$ and $\phi$ be elements of $S(\mathcal{H}_{\mathcal{E}})$ such that $a\psi+b\phi$ and $c\psi+d\phi$ are non-zero. Let $\mu_1$ be the proper mixture given by $\mu_1=\left\{\left(\frac{1}{2},\psi\right),\left(\frac{1}{2},\phi\right)\right\}$ and let 
$$
\mu_2=\left\{\left(\frac{\|a\psi+b\phi\|^2}{2},\frac{a\psi+b\phi}{\|a\psi+b\phi\|}\right),\left(\frac{\|c\psi+d\phi\|^2}{2},\frac{c\psi+d\phi}{\|c\psi+d\phi\|}\right)\right\}\,.
$$
The weights in $\mu_2$ sum to $1$, since unitarity of $U$ gives
$$
\|a\psi+b\phi\|^2+\|c\psi+d\phi\|^2
=
\|\psi\|^2+\|\phi\|^2
=
2.
$$
Expanding the mixture map, we find that $\mathcal{M}(\mu_2)$ equals
\begin{align*}
&\frac{\|a\psi+b\phi\|^2}{2} \left|\frac{a\psi+b\phi}{\|a\psi+b\phi\|}\right\rangle\!\left\langle \frac{a\psi+b\phi}{\|a\psi+b\phi\|}\right|+ \frac{\|c\psi+d\phi\|^2}{2} \left|\frac{c\psi+d\phi}{\|c\psi+d\phi\|} \right\rangle\!\left\langle \frac{c\psi+d\phi}{\|c\psi+d\phi\|}\right|\\
&=\frac{1}{2}\left(|a\psi+b\phi\rangle\langle a\psi+b\phi|+|c\psi+d\phi\rangle\langle c\psi+d\phi|\right)\\
&=\frac{1}{2}\big(\left(|a|^2+|c|^2\right)\ket{\psi}\!\bra{\psi}+\left(|b|^2+|d|^2\right)\ket{\phi}\!\bra{\phi} +(a b^\ast + c d^\ast)\ket{\psi}\!\bra{\phi} + (b a^\ast + d c^\ast)\ket{\phi}\!\bra{\psi}\big)\\
&= \tfrac{1}{2}(\ket{\psi}\!\bra{\psi}+ \ket{\phi}\!\bra{\phi}) 
=\mathcal{M}(\mu_1)\,,\numberthis \label{eq:mixtureEq}
\end{align*}
where the cross-terms vanish because $a b^\ast + c d^\ast = \big(a^\ast b + c^\ast d\big)^\ast = 0$ by the orthonormality of the columns of $U$. This shows that $\mu_1$ and $\mu_2$ are associated with the same density operator.

By Postulate \ref{WSEP}, for any $O\in \Sigma_{\mathcal{E}}$, there exists ${f_O: \mathcal{D}(\mathcal{H}_{\mathcal{E}}) \to [0,1]}$ (not depending on $\mu_i$) such that
\begin{align*}
 &\frac{\|a\psi+b\phi\|^2}{2}\mathbb{P}_{\mathcal{E}}\!\left(O\,;\frac{a\psi+b\phi}{\|a\psi+b\phi\|}\right)+  \frac{\|c\psi+d\phi\|^2}{2}\mathbb{P}_{\mathcal{E}}\!\left(O\,;\frac{c\psi+d\phi}{\|c\psi+d\phi\|}\right)=\widetilde{\mathbb{P}}_{\mathcal{E}}(O\,;\mu_2)\\
 &\qquad\stackrel{(\mathrm{\ref{WSEP}})}{=}
 f_O(\mathcal{M}(\mu_2))
 \stackrel{\eqref{eq:mixtureEq}}{=} 
 f_O(\mathcal{M}(\mu_1))
  \stackrel{(\mathrm{\ref{WSEP}})}{=}
 \widetilde{\mathbb{P}}_{\mathcal{E}}(O\,;\mu_1)= \frac{1}{2}\mathbb{P}_{\mathcal{E}}\left(O\,;\psi\right)+\frac{1}{2}\mathbb{P}_{\mathcal{E}}\left(O\,;\phi\right)\,.
\end{align*}
Multiplying both sides by $2$ gives the identity required by Postulate \ref{PL}.
\end{proof}

\subsection{Postulates \ref*{IP} and \ref*{WSEP} imply Postulate \ref*{PT}}\label{sec:WSEPtoPT}

We showed in Section \ref{sec:WSEPtoPL} that Postulate \ref{WSEP} implies Postulate \ref{PL}, in Section \ref{sec:PLtoQF} that Postulate \ref{PL} implies Postulate \ref{QF}, and in Lemma \ref{densityReduction} that Postulate \ref{QF} implies Postulate \ref{WSEP}. These three postulates are therefore logically equivalent, so we can use Postulate \ref{QF} in this subsection. We will need a preliminary lemma about quadratic forms defined on tensor products of Hilbert spaces.

\begin{lemma}\label{simpleTensors} Let $Q_1$ and $Q_2$ be bounded quadratic forms on $\mathcal{H}_A\otimes \mathcal{H}_B$ such that $Q_1(\psi\otimes\phi)=Q_2(\psi\otimes\phi)$ for all $\psi\in S(\mathcal{H}_A)$ and $\phi\in S(\mathcal{H}_B)$. Then $Q_1=Q_2$.
\end{lemma}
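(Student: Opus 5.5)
Since both quadratic forms are bounded, I will pass to the difference $Q=Q_1-Q_2$. It is again a bounded quadratic form, associated with a bounded sesquilinear form $q$. By the Riesz representation theorem for bounded sesquilinear forms (as used in the proof of Lemma \ref{lem:POVMeqQF}), there is a bounded operator $T$ on $\mathcal{H}_A\otimes\mathcal{H}_B$ with $q(v,w)=\langle v,Tw\rangle$. The goal is therefore to show that $T=0$, given only that $\langle\psi\otimes\phi,T\,\psi\otimes\phi\rangle=0$ for all unit $\psi,\phi$. Homogeneity removes the unit-norm restriction, so this vanishing holds for all $\psi\in\mathcal{H}_A$ and $\phi\in\mathcal{H}_B$.

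The key step is a double polarization, done one tensor factor at a time. First fix $\phi$ and consider the map $\psi\mapsto\langle\psi\otimes\phi,T\,\psi\otimes\phi\rangle$. This is a quadratic form in $\psi$, coming from the sesquilinear form $(\psi,\psi')\mapsto\langle\psi\otimes\phi,T\,\psi'\otimes\phi\rangle$. Since it vanishes identically, the complex polarization identity (the same one defining $B_f$ above) shows that $\langle\psi\otimes\phi,T\,\psi'\otimes\phi\rangle=0$ for all $\psi,\psi'$.

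Next fix $\psi,\psi'$ and regard this expression as a function of $\phi$. It is not yet a quadratic form, so I will polarize in $\phi$ using a symmetrization trick. Consider the sesquilinear form
\[
s(\phi,\phi')=\langle\psi\otimes\phi,T\,\psi'\otimes\phi'\rangle .
\]
Its diagonal $s(\phi,\phi)$ vanishes for all $\phi$, so polarization again gives $s\equiv 0$. Hence $\langle\psi\otimes\phi,T\,\psi'\otimes\phi'\rangle=0$ for all simple tensors.

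To conclude, I use that finite linear combinations of simple tensors are dense in $\mathcal{H}_A\otimes\mathcal{H}_B$. By sesquilinearity, $\langle v,Tw\rangle=0$ for all $v,w$ in the algebraic tensor product. Since $T$ is bounded, the map $(v,w)\mapsto\langle v,Tw\rangle$ is jointly continuous, so it vanishes everywhere. Thus $T=0$, and therefore $Q_1=Q_2$.

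The only delicate point is this last density-and-continuity step. It is the reason boundedness is assumed: without it, the quadratic forms could disagree off the algebraic tensor product. The polarization steps themselves are routine.
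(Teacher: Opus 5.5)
Your proof is correct and follows essentially the same route as the paper: pass to the difference of the bounded sesquilinear forms, polarize first in the $\mathcal{H}_A$ factor with $\phi$ fixed, then polarize in the $\mathcal{H}_B$ factor with $\psi,\psi'$ fixed, and finish using density of finite sums of product vectors together with continuity. The differences are cosmetic. The paper works directly with the form $B=B_1-B_2$ and does not invoke Riesz to obtain an operator $T$, and your second step needs no ``symmetrization trick,'' since it is just the same complex polarization identity applied to $s$.
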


\begin{proof} Let $B_1$ and $B_2$ be the bounded sesquilinear forms associated with $Q_1$ and $Q_2$, and let $B=B_1-B_2$. By homogeneity, $B(x\otimes y,x\otimes y)=0$ for all $x\in\mathcal{H}_A$ and $y\in\mathcal{H}_B$. Recall that, by the complex polarization identity $s(x,x')=\frac{1}{4}\sum_{m=1}^4 i^{-m}s(x+i^mx',x+i^mx')$, a sesquilinear form $s$ that vanishes on the diagonal vanishes identically. First, fix $y\in\mathcal{H}_B$. The form $(x,x')\mapsto B(x\otimes y,x'\otimes y)$ is sesquilinear on $\mathcal{H}_A$ and vanishes on the diagonal, so $B(x\otimes y,x'\otimes y)=0$ for all $x,x'\in\mathcal{H}_A$. Next, fix $x,x'\in\mathcal{H}_A$. The form $(y,y')\mapsto B(x\otimes y,x'\otimes y')$ is sesquilinear on $\mathcal{H}_B$ and, by the previous step, vanishes on the diagonal, so $B(x\otimes y,x'\otimes y')=0$ for all $y,y'\in\mathcal{H}_B$. By sesquilinearity, $B$ vanishes on all pairs of finite sums of product vectors. Since such sums are dense in $\mathcal{H}_A\otimes\mathcal{H}_B$ and $B$ is bounded, and hence continuous, $B=0$. It follows that $Q_1=Q_2$, as required.
\end{proof}

The proof of the following lemma shows that, under Postulates \ref{WSEP} and \ref{IP}, Bob's one-sided outcomes on any post-preparation independent composite are represented by operators of the form $I_A\otimes E$, where $E$ is an operator on Bob's Hilbert space that is independent of Alice's experiment. The partial trace description then follows (cf.\ \cite[Box~2.6]{nielsen2010quantum}, where the representation $M\otimes I$ for measurements on one subsystem is inferred from their behavior on product states, and the partial trace is shown to be the unique map reproducing the resulting statistics).

\begin{lemma}\label{lem:WSEPandIPtoPT} Postulate \ref{WSEP} and Postulate \ref{IP} imply Postulate \ref{PT}.
\end{lemma}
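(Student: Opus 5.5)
Fix a preparation-compatible family $\{(\mathcal{J}_\alpha;\mathcal{E}_A^\alpha,\mathcal{E}_B)~|~\alpha\in\mathcal{I}\}$ of available, post-preparation independent composites, with common space $\mathcal{H}_A$, and fix $O_B\in\Sigma_{\mathcal{E}_B}$. Since Postulate \ref{WSEP} is equivalent to Postulate \ref{QF} (Lemmas \ref{DIPL}, \ref{PLtoQF}, \ref{densityReduction}), Lemma \ref{lem:POVMeqQF} applies both to the available composite $\mathcal{J}_\alpha$ and to the available $\mathcal{E}_B$. We therefore obtain normalized POVMs $E_{\mathcal{J}_\alpha}$ on $\mathcal{H}_A\otimes\mathcal{H}_{\mathcal{E}_B}$ and $E_{\mathcal{E}_B}$ on $\mathcal{H}_{\mathcal{E}_B}$ that reproduce the respective OPFs. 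Write $F_\alpha\equiv E_{\mathcal{J}_\alpha}(\Omega_{\mathcal{E}_A^\alpha}\times O_B)$ and $E\equiv E_{\mathcal{E}_B}(O_B)$. Both are bounded, with $0\le F_\alpha\le I$ and $0\le E\le I$.

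The key step is to show $F_\alpha=I_A\otimes E$ for every $\alpha$. I will compare the two bounded quadratic forms $Q_1(\Psi)=\langle\Psi,F_\alpha\Psi\rangle$ and $Q_2(\Psi)=\langle\Psi,(I_A\otimes E)\Psi\rangle$. For unit product vectors, Property \ref{IP} gives
$$
Q_1(\psi\otimes\phi)=\mathbb{P}_{\mathcal{J}_\alpha}(\Omega_{\mathcal{E}_A^\alpha}\times O_B\,;\psi\otimes\phi)=\mathbb{P}_{\mathcal{E}_B}(O_B\,;\phi)=\langle\phi,E\phi\rangle=Q_2(\psi\otimes\phi).
$$
Lemma \ref{simpleTensors} then gives $Q_1=Q_2$, and polarization gives $F_\alpha=I_A\otimes E$. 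The operator $E$ does not depend on $\alpha$, which is what produces a single $f_{O_B}$ for the whole family.

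Next, define $f_{O_B}(\rho)\equiv\mathrm{Tr}(E\rho)$ on $\mathcal{D}(\mathcal{H}_{\mathcal{E}_B})$. This map is affine, indeed the restriction of a linear map, and it takes values in $[0,1]$ because $0\le E\le I$ and $\mathrm{Tr}\,\rho=1$. For $\Psi\in S(\mathcal{H}_A\otimes\mathcal{H}_{\mathcal{E}_B})$, the defining property of the partial trace applied to $T=\ket{\Psi}\!\bra{\Psi}$ and $X=E$ gives
$$
\mathbb{P}_{\mathcal{J}_\alpha}(\Omega_{\mathcal{E}_A^\alpha}\times O_B\,;\Psi)=\langle\Psi,(I_A\otimes E)\Psi\rangle=\mathrm{Tr}\big(\ket{\Psi}\!\bra{\Psi}(I_A\otimes E)\big)=\mathrm{Tr}\big(\mathrm{Tr}_A(\ket{\Psi}\!\bra{\Psi})\,E\big).
$$
It remains to observe that $\mathrm{Tr}_A(\ket{\Psi}\!\bra{\Psi})$ lies in $\mathcal{D}(\mathcal{H}_{\mathcal{E}_B})$. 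Positivity follows from taking $X=\ket{\phi}\!\bra{\phi}$, and unit trace from taking $X=I$. Hence Property \ref{PT} holds.

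The main obstacle is the bookkeeping that keeps the invocation of Postulate \ref{QF} legitimate. Postulate \ref{QF}, via Postulate \ref{WSEP}, is asserted for available HPMMs, so I must check that $\mathcal{J}_\alpha$ itself is available; this holds because availability of a triple is defined as availability of $\mathcal{J}$. I must also check that Lemma \ref{simpleTensors} requires boundedness, which Lemma \ref{lem:POVMeqQF} supplies. Everything else is routine.
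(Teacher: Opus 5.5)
\textbf{Gap: availability of $\mathcal{E}_B$.} Your argument breaks at one step: applying Lemma \ref{lem:POVMeqQF} to ``the available $\mathcal{E}_B$.'' Nothing in the hypotheses makes $\mathcal{E}_B$ available.

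In this framework availability is a primitive predicate. A triple $(\mathcal{J};\mathcal{E}_A,\mathcal{E}_B)$ counts as available exactly when $\mathcal{J}$ is, and Postulate \ref{PT} quantifies over families of available composites without assuming anything about the components. The paper even notes that $\mathcal{J}$ is a composite of \emph{any} pair of HPMMs with the right outcome spaces and Hilbert spaces, whatever their OPFs.

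Postulates \ref{WSEP} and \ref{QF} constrain only available HPMMs. So at that point you are not entitled to a bounded effect $E$ with $\mathbb{P}_{\mathcal{E}_B}(O_B\,;\phi)=\langle\phi,E\phi\rangle$, and your product-state comparison relies on exactly this. Your own bookkeeping remark, that availability of the triple means availability of $\mathcal{J}$, is the reason the step fails. Compare Lemma \ref{lem:opProd}, where the paper derives, rather than assumes, that the components have Property \ref{POVM}.

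\textbf{Repair.} Extract $E$ from a composite by using Postulate \ref{IP} a second time; this is how the paper proceeds. If $\mathcal{I}=\varnothing$, Property \ref{PT} holds trivially (take $f_{O_B}\equiv 0$). Otherwise fix $\alpha_0\in\mathcal{I}$ and $\psi_0\in S(\mathcal{H}_A)$, let $V\phi=\psi_0\otimes\phi$, and set $E\equiv V^\dagger F_{\alpha_0}V$. Then $0\le E\le I_B$, and Property \ref{IP} gives
\[
\langle\phi,E\phi\rangle=\mathbb{P}_{\mathcal{J}_{\alpha_0}}\big(\Omega_{\mathcal{E}_A^{\alpha_0}}\times O_B\,;\psi_0\otimes\phi\big)=\mathbb{P}_{\mathcal{E}_B}(O_B\,;\phi)
\]
for every unit $\phi$. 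A bounded operator is fixed by its quadratic form on the unit sphere, so $E$ depends only on $\mathcal{E}_B$ and $O_B$, not on $\alpha_0$ or $\psi_0$.

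With this $E$ in place of $E_{\mathcal{E}_B}(O_B)$, the rest of your proof goes through unchanged and matches the paper's argument. That includes the comparison on product vectors, Lemma \ref{simpleTensors}, polarization, and the partial-trace computation, together with your check that $\mathrm{Tr}_A(\ket{\Psi}\!\bra{\Psi})$ is a density operator.
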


\begin{proof}
Let $\{\mathcal{J}_\alpha~|~\alpha\in\mathcal{I}\}$ be a
preparation-compatible family of available composite HPMMs,
each describing a post-preparation independent composite
experiment, with common left-hand preparation space
$\mathcal{H}_A$ and fixed right-hand component $\mathcal{E}_B$. If $\mathcal{I}$ is empty, Property \ref{PT} holds by taking
each $f_{O_B}$ to be the zero function. Assume $\mathcal{I}\ne\varnothing$,
and fix $O_B\in\Sigma_{\mathcal{E}_B}$.

By Lemmas \ref{DIPL}, \ref{PLtoQF}, and \ref{lem:POVMeqQF}, Postulate \ref{WSEP} implies that each available HPMM is associated with a POVM. In particular, for each $\alpha\in\mathcal{I}$, the operator $M^\alpha_{O_B}\equiv E_{\mathcal{J}_\alpha}(\Omega_{\mathcal{E}_A^\alpha}\times O_B)$ on $\mathcal{H}_A\otimes\mathcal{H}_{\mathcal{E}_B}$ satisfies $0\le M^\alpha_{O_B}\le I$ and $\mathbb{P}_{\mathcal{J}_\alpha}\big(\Omega_{\mathcal{E}_A^\alpha}\times O_B\,;\Psi\big)=\langle\Psi,M^\alpha_{O_B}\Psi\rangle$ for every $\Psi\in S(\mathcal{H}_A\otimes\mathcal{H}_{\mathcal{E}_B})$. Choose $\alpha_0\in\mathcal{I}$ and
$\psi_0\in S(\mathcal{H}_A)$, and let
$V:\mathcal{H}_{\mathcal{E}_B}\to\mathcal{H}_A\otimes\mathcal{H}_{\mathcal{E}_B}$
be the isometry $V(\phi)=\psi_0\otimes\phi$. Define $E_{O_B}=V^\dagger M^{\alpha_0}_{O_B}V$. Then $0\le E_{O_B}\le I_B$, where $I_B$ is the identity on $\mathcal{H}_{\mathcal{E}_B}$, and Postulate \ref{IP} gives $\langle\phi,E_{O_B}\phi\rangle = \mathbb{P}_{\mathcal{J}_{\alpha_0}}
\big(\Omega_{\mathcal{E}_A^{\alpha_0}}\times O_B
\,;\psi_0\otimes\phi\big) = \mathbb{P}_{\mathcal{E}_B}(O_B\,;\phi)$ for every $\phi\in S(\mathcal{H}_{\mathcal{E}_B})$. Since a bounded
self-adjoint operator is determined by the values of its associated quadratic form
restricted to the unit sphere, $E_{O_B}$ depends only on
$\mathcal{E}_B$ and $O_B$.

For any $\alpha\in\mathcal{I}$ and unit vectors
$\psi\in\mathcal{H}_A$ and $\phi\in\mathcal{H}_{\mathcal{E}_B}$,
Postulate \ref{IP} now yields
$$
\langle\psi\otimes\phi,
M^\alpha_{O_B}(\psi\otimes\phi)\rangle
=
\mathbb{P}_{\mathcal{E}_B}(O_B\,;\phi)
=
\langle\psi\otimes\phi,
(I_A\otimes E_{O_B})(\psi\otimes\phi)\rangle.
$$
By Lemma \ref{simpleTensors}, these two bounded quadratic forms therefore agree on all of $\mathcal{H}_A\otimes\mathcal{H}_{\mathcal{E}_B}$, and the polarization identity gives $M^\alpha_{O_B}=I_A\otimes E_{O_B}$.

Define $f_{O_B}(\rho)=\mathrm{Tr}(E_{O_B}\rho)$.
Since $0\le E_{O_B}\le I_B$, $f_{O_B}:\mathcal{D}(\mathcal{H}_{\mathcal{E}_B})\to[0,1]$ is an affine function. For every
$\alpha\in\mathcal{I}$ and
$\Psi\in S(\mathcal{H}_A\otimes\mathcal{H}_{\mathcal{E}_B})$,
\begin{align*}
\mathbb{P}_{\mathcal{J}_\alpha}
\big(\Omega_{\mathcal{E}_A^\alpha}\times O_B\,;\Psi\big)
&=\langle\Psi,(I_A\otimes E_{O_B})\Psi\rangle=\mathrm{Tr}\!\left(
E_{O_B}\,\mathrm{Tr}_A(\ket{\Psi}\!\bra{\Psi})
\right)=f_{O_B}\!\left(
\mathrm{Tr}_A(\ket{\Psi}\!\bra{\Psi})
\right),
\end{align*}
where the second equality follows from the defining
identity of the partial trace. Since $O_B$ was arbitrary
and $f_{O_B}$ is independent of $\alpha$, the family has
Property \ref{PT}.
\end{proof}

\subsection{Postulate \ref*{PT} implies Postulate \ref*{NS}}\label{sec:PTtoNS}

This subsection shows that, under the partial trace formalism, signaling via experimental choice for post-preparation independent experiments is impossible. This argument shares the core features of other partial-trace-based arguments for ``no-signaling'' 
(see \cite{ghirardi1980general, ghirardi1989cp, kennedy1995empirical, genovese2010photonic}).

\begin{lemma}\label{lem:PTtoNS} Postulate \ref{PT} implies Postulate \ref{NS}.
\end{lemma}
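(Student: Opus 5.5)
The plan is to unwind the definitions and observe that Property \ref{PT} already contains Property \ref{NS} as a special case. Let $\{(\mathcal{J}_\alpha;\mathcal{E}_A^\alpha,\mathcal{E}_B)~|~\alpha\in\mathcal{I}\}$ be a preparation-compatible family of available composite HPMMs, each describing a post-preparation independent composite experiment. Let $\mathcal{H}_A$ be the common left-hand preparation space. These are exactly the hypotheses under which Postulate \ref{PT} asserts Property \ref{PT}, so the postulate applies directly to this family. No restriction to qubit components or to semiorthodox composites is needed. If $\mathcal{I}$ is empty, Property \ref{NS} holds vacuously, so I assume $\mathcal{I}\neq\varnothing$.

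Next, fix $\alpha_1,\alpha_2\in\mathcal{I}$, $O_B\in\Sigma_{\mathcal{E}_B}$, and $\Psi\in S(\mathcal{H}_A\otimes\mathcal{H}_{\mathcal{E}_B})$. Property \ref{PT} supplies a single function $f_{O_B}:\mathcal{D}(\mathcal{H}_{\mathcal{E}_B})\to[0,1]$ that is chosen \emph{before} quantifying over $\alpha$, i.e., it is independent of $\alpha$. It satisfies $\mathbb{P}_{\mathcal{J}_{\alpha}}(\Omega_{\mathcal{E}_A^{\alpha}}\times O_B\,;\Psi)=f_{O_B}\big(\mathrm{Tr}_A(\ket{\Psi}\!\bra{\Psi})\big)$ for every $\alpha$. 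The reduced operator $\mathrm{Tr}_A(\ket{\Psi}\!\bra{\Psi})$ depends only on $\Psi$. Preparation-compatibility is what guarantees that the same $\Psi$ and the same partial trace over $\mathcal{H}_A$ make sense for both $\alpha_1$ and $\alpha_2$. Evaluating at $\alpha_1$ and $\alpha_2$ therefore gives the same number, which is exactly the equation required by Property \ref{NS}.

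Two small points should be checked along the way, and I do not expect a real obstacle. First, $\mathrm{Tr}_A(\ket{\Psi}\!\bra{\Psi})$ must lie in $\mathcal{D}(\mathcal{H}_{\mathcal{E}_B})$, so that $f_{O_B}$ can be applied. This follows from the defining identity of the partial trace: taking $X=I_B$ gives trace one, and taking $X=\ket{\phi}\!\bra{\phi}$ gives positivity. Second, the order of quantifiers in Postulate \ref{PT} must be read with $f_{O_B}$ uniform across the family. That uniformity is where all the content lies. Affinity of $f_{O_B}$ is not needed for this implication.
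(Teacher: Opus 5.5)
Your proposal is correct and matches the paper's proof: both apply Postulate \ref{PT} to the given family and use the fact that the single function $f_{O_B}$ does not depend on $\alpha$, so both marginals equal $f_{O_B}\big(\mathrm{Tr}_A(\ket{\Psi}\!\bra{\Psi})\big)$. Your extra checks (the empty index set, and that $\mathrm{Tr}_A(\ket{\Psi}\!\bra{\Psi})\in\mathcal{D}(\mathcal{H}_{\mathcal{E}_B})$) are sound details that the paper leaves implicit.
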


\begin{proof} Let $\{(\mathcal{J}_\alpha;\mathcal{E}_A^\alpha,\mathcal{E}_B)~|~\alpha\in\mathcal{I}\}$ be a preparation-compatible family of available composite HPMMs with common left-hand space of preparations $\mathcal{H}_A$, each of which describes a post-preparation independent composite experiment. Let $\alpha_1,\alpha_2\in\mathcal{I}$ and $O_B\in \Sigma_{\mathcal{E}_B}$. By Postulate \ref{PT}, there exists a single affine function $f_{O_B}$ (the same for every $\alpha \in \mathcal{I}$) such that for any $\Psi\in S(\mathcal{H}_A\otimes\mathcal{H}_{\mathcal{E}_B})$,
\[
\mathbb{P}_{\mathcal{J}_{\alpha_1}}(\Omega_{\mathcal{E}_A^{\alpha_1}}\!\times O_B\,;\Psi)=f_{O_B}\big(\mathrm{Tr}_{A}(\ket{\Psi}\!\bra{\Psi})\big)=\mathbb{P}_{\mathcal{J}_{\alpha_2}}(\Omega_{\mathcal{E}_A^{\alpha_2}}\!\times O_B\,;\Psi)\,.\qedhere
\]
\end{proof}

\subsection{Postulates \ref*{PT} and \ref*{IP} imply Postulate \ref*{SMQC}}\label{sec:PTtoSQ}

\begin{lemma}\label{lem:PTtoSQ} Assuming Postulate \ref{IP}, Postulate \ref{PT} implies Postulate \ref{SMQC}.
\end{lemma}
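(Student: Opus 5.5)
The plan is to compute Bob's marginal for the semiorthodox composite directly through the partial trace and compare it with product states. Let $(\mathcal{J};\mathcal{E}_{\overline{u}},\mathcal{E}_B)$ be an available semiorthodox composite describing a post-preparation independent composite experiment. Take the one-element family $\{\mathcal{J}\}$, which is trivially preparation-compatible, and fix $O_B\in\Sigma_{\mathcal{E}_B}$. Postulate \ref{PT} supplies an affine $f_{O_B}:\mathcal{D}(\mathcal{H}_{\mathcal{E}_B})\to[0,1]$ with $\mathbb{P}_{\mathcal{J}}(\Omega_{\mathcal{E}_{\overline u}}\times O_B\,;\Psi)=f_{O_B}\big(\mathrm{Tr}_A(\ket{\Psi}\!\bra{\Psi})\big)$ for every unit $\Psi$.

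The first step is to identify $f_{O_B}$ on pure states. For a product preparation $\chi\otimes\phi$ we have $\mathrm{Tr}_A(\ket{\chi\otimes\phi}\!\bra{\chi\otimes\phi})=\ket{\phi}\!\bra{\phi}$. Postulate \ref{IP} then gives
$$f_{O_B}(\ket{\phi}\!\bra{\phi})=\mathbb{P}_{\mathcal{E}_B}(O_B\,;\phi)$$
for all $\phi\in S(\mathcal{H}_{\mathcal{E}_B})$.

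The second step is to compute the reduced state of $\Psi=\alpha\,u\otimes\psi+\beta\,u^\perp\otimes\phi$. Orthonormality of $\{u,u^\perp\}$ kills the cross terms. Concretely, I would check $\mathrm{Tr}\big(\mathrm{Tr}_A(\ket\Psi\!\bra\Psi)X\big)=\langle\Psi,(I_A\otimes X)\Psi\rangle$ for bounded $X$ using the defining identity of the partial trace. This yields
$$\mathrm{Tr}_A(\ket\Psi\!\bra\Psi)=|\alpha|^2\ket\psi\!\bra\psi+|\beta|^2\ket\phi\!\bra\phi,$$
with $|\alpha|^2+|\beta|^2=1$ because $\Psi$ is a unit vector and $\psi,\phi$ are unit vectors. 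Note that $\psi$ and $\phi$ need not be orthogonal; no orthogonality is used here.

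The final step applies affinity of $f_{O_B}$ to this convex combination of the two pure density operators:
$$f_{O_B}\big(|\alpha|^2\ket\psi\!\bra\psi+|\beta|^2\ket\phi\!\bra\phi\big)=|\alpha|^2 f_{O_B}(\ket\psi\!\bra\psi)+|\beta|^2 f_{O_B}(\ket\phi\!\bra\phi).$$
By the first step, the right-hand side equals $|\alpha|^2\,\mathbb{P}_{\mathcal{E}_B}(O_B\,;\psi)+|\beta|^2\,\mathbb{P}_{\mathcal{E}_B}(O_B\,;\phi)$, which is Property \ref{SMQC}. The edge cases $\alpha=0$ or $\beta=0$ are covered by the same formula.

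The only delicate point is this use of affinity. It is exactly why Postulate \ref{PT} explicitly assumes $f_{O_B}$ affine, as the remark after that postulate anticipates. Apart from that, the argument is a direct partial-trace computation plus one appeal to Postulate \ref{IP}.
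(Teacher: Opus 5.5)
Your proof is correct and follows essentially the same route as the paper. It applies Postulate PT to the one-element family $\{\mathcal{J}\}$ and uses Postulate IP on product states to get $f_{O_B}(\ket{\phi}\!\bra{\phi})=\mathbb{P}_{\mathcal{E}_B}(O_B\,;\phi)$; it then computes $\mathrm{Tr}_A(\ket{\Psi}\!\bra{\Psi})=|\alpha|^2\ket{\psi}\!\bra{\psi}+|\beta|^2\ket{\phi}\!\bra{\phi}$ and concludes by the affinity of $f_{O_B}$, which is valid because $|\alpha|^2+|\beta|^2=1$.
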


\begin{proof} Let $(\mathcal{J};\mathcal{E}_{\overline{u}},\mathcal{E}_B)$ be an available semiorthodox composite HPMM describing a post-preparation independent composite experiment, where $u\in S(\mathbb{C}^2)$. Let $u^\perp\in S(\mathbb{C}^2)$ with $\langle u,u^\perp\rangle=0$, let $O_B\in \Sigma_{\mathcal{E}_B}$, and let $f_{O_B}$ be the affine function whose existence is guaranteed by Postulate \ref{PT} for the (trivially preparation-compatible) one-element family $\{\mathcal{J}\}$. Let $\Psi\in S(\mathbb{C}^2\otimes\mathcal{H}_{\mathcal{E}_B})$ be expanded as $\Psi = \alpha\,u\otimes\psi + \beta\,u^\perp\otimes\phi$ with $\alpha,\beta\in\mathbb{C}$ and $\psi,\phi\in S(\mathcal{H}_{\mathcal{E}_B})$.

By Postulate \ref{IP}, $\mathbb{P}_{\mathcal{J}}\left(\Omega_{\mathcal{E}_{\overline{u}}}\!\times O_B\,;u\otimes \psi\right)=\mathbb{P}_{\mathcal{E}_B}\left(O_B\,; \psi\right)$, and by Postulate \ref{PT},
\begin{equation*}
\mathbb{P}_{\mathcal{J}}\big(\Omega_{\mathcal{E}_{\overline{u}}}\!\times O_B\,;u\otimes \psi\big)
=
f_{O_B}\big(\mathrm{Tr}_A(|u\otimes\psi\rangle\langle u\otimes \psi|)\big)
=
f_{O_B}(\ket{\psi}\!\bra{\psi})\,.
\end{equation*}
Putting the above two equations together, we have $f_{O_B}(\ket{\psi}\!\bra{\psi})=\mathbb{P}_{\mathcal{E}_B}\big(O_B\,; \psi\big)$. Applying an identical argument to $u^\perp\otimes \phi$ yields $f_{O_B}(\ket{\phi}\!\bra{\phi})=\mathbb{P}_{\mathcal{E}_B}\left( O_B\,; \phi\right)$. Applying Postulate \ref{PT} to the state $\Psi$, we have
\begin{align*}
\mathbb{P}_{\mathcal{J}}\left(\Omega_{\mathcal{E}_{\overline{u}}}\!\times O_B\,;\Psi\right)
&=
f_{O_B}\big(\mathrm{Tr}_A(\ket{\Psi}\!\bra{\Psi})\big)\\
&=f_{O_B}\big(\,\big(\bra{u}\otimes I\big)\ket{\Psi}\!\bra{\Psi}\big(\ket{u}\otimes I\big)+\big(\bra{u^\perp}\otimes I\big)\ket{\Psi}\!\bra{\Psi}\big(\ket{u^\perp}\otimes I\big)\,\big)\\
&=f_{O_B}\left(|\alpha|^2\,\ket{\psi}\!\bra{\psi}+|\beta|^2\,\ket{\phi}\!\bra{\phi}\right)
=|\alpha|^2\,f_{O_B}(\ket{\psi}\!\bra{\psi})+|\beta|^2\,f_{O_B}(\ket{\phi}\!\bra{\phi})\\
&=|\alpha|^2\,\mathbb{P}_{\mathcal{E}_B}(O_B\,;\psi)+|\beta|^2\,\mathbb{P}_{\mathcal{E}_B}(O_B\,;\phi)\,,
\end{align*}
as required, where the application of affinity is valid because $|\alpha|^2 + |\beta|^2 = \|\Psi\|^2 = 1$.
\end{proof}

\subsection{Postulate \ref*{WPL} implies Postulate \ref*{QF} in dimension at least 3}\label{sec:WPLtoQF}

In this subsection, we show that Postulate \ref{QF} can be proved using Postulate \ref{WPL} and techniques from Gleason's celebrated paper \cite{gleason1957}. The central tool we will need is Gleason's definition of a ``frame function'' on a Hilbert space $\mathcal{H}$. A \emph{frame function} is a function $f:S(\mathcal{H})\to \mathbb{R}$ such that for any orthonormal basis $\{e_k\}$ of $\mathcal{H}$, $\sum_k f(e_k) = w$ for some fixed real number $w$ called the \emph{weight} of $f$. A function $f:S(\mathcal{H})\to \mathbb{R}$ is \emph{regular} if there exists a self-adjoint operator $A$ such that $f(v) = \langle v, A v  \rangle$ for all unit vectors $v\in S(\mathcal{H})$. Gleason proved that any non-negative frame function on a Hilbert space of dimension at least 3 is regular \cite[Thm.~3.5]{gleason1957}. This is false in dimension 2, as illustrated by Gleason's quadrant construction \cite[p.~886]{gleason1957}.

In Section \ref{sec:PLtoQF}, we defined the parallelogram law for a function $f: S(\mathcal{H}) \to \mathbb{R}$. We will say that $f:S(\mathcal{H}) \to \mathbb{R}$ satisfies the \emph{weak} parallelogram law if the parallelogram law holds in the special case where $U$ is the Hadamard matrix and $\psi$ and $\phi$ are orthonormal, i.e., if $f\left(\psi \right) +  f\left(\phi \right)= f\left(\tfrac{1}{\sqrt{2}}(\psi + \phi)\right) + f\left(\tfrac{1}{\sqrt{2}}(\psi - \phi)\right)$ for all orthogonal $\psi, \phi\in S(\mathcal{H})$. It is straightforward to check that any regular function satisfies the (strong) parallelogram law. The lemmas below give a partial converse.

\begin{lemma}\label{frame} Let $\mathcal{H}$ be a finite-dimensional Hilbert space. If $f: S(\mathcal{H}) \to \mathbb{R}$ is phase invariant and satisfies the weak parallelogram law, then $f$ is a frame function.
\end{lemma}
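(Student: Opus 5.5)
The plan is to show that any two orthonormal bases of $\mathcal{H}$ can be connected by a finite chain of elementary moves, each of which visibly preserves $\sum_k f(e_k)$. Two kinds of move are available.
\begin{itemize}
\item[(i)] \emph{Phase moves.} Replace a basis vector $e_j$ by $e^{i\theta}e_j$. This leaves every summand unchanged, by phase invariance.
\item[(ii)] \emph{Hadamard moves.} Replace a pair $e_j,e_k$ of basis vectors by $\tfrac{1}{\sqrt2}(e_j+e_k)$ and $\tfrac{1}{\sqrt2}(e_j-e_k)$. This leaves $f(e_j)+f(e_k)$ unchanged by the weak parallelogram law, since $e_j,e_k$ are orthonormal. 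It also leaves the remaining summands untouched.
\end{itemize}
Both moves send orthonormal bases to orthonormal bases. Permutations of the basis trivially preserve the sum. Hence it suffices to show that every orthonormal basis $\{b_k\}$ is reachable from a fixed one $\{e_k\}$ by such moves. Since $\dim\mathcal{H}=n<\infty$, the sum is then a fixed number $w$ and $f$ is a frame function.

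\textbf{Step 1 (the two-dimensional case).} I will show that every $2\times 2$ unitary $U$ can be written as a finite word in the Hadamard matrix $H$ and diagonal phase matrices $\mathrm{diag}(e^{i\alpha},e^{i\beta})$. I will use the Euler decomposition
\[
U=e^{i\delta}R_z(\alpha)R_x(\beta)R_z(\gamma),
\]
where the $R_z$ factors and the global phase are diagonal. The remaining factor satisfies $R_x(\beta)=H\,R_z(\beta)\,H$. This gives an exact word in $H$ and diagonal phases, not merely an approximation.

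\textbf{Step 2 (realizing a word as moves).} Let $W$ be the current unitary, so the current basis is $\{We_k\}$. Let $U$ be a two-level unitary acting on $\mathrm{span}\{e_j,e_k\}$. Then $WUe_j$ and $WUe_k$ are linear combinations of $We_j$ and $We_k$ with the coefficients of $U$, and all other vectors $We_l$ are unchanged. Accordingly, a diagonal two-level factor is a pair of phase moves on $We_j,We_k$, and a factor $H$ is a Hadamard move on that pair. So by Step 1, passing from $\{We_k\}$ to $\{WUe_k\}$ is a finite sequence of moves, and each intermediate family is again an orthonormal basis.

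\textbf{Step 3 (the general case).} Let $V$ be the unitary with $Ve_k=b_k$. I will invoke the standard decomposition of a unitary on $\mathbb{C}^n$ into two-level unitaries (Givens-type eliminations, followed by a final diagonal phase matrix), writing $V=U_1U_2\cdots U_m$. I then apply Step 2 successively with $W_r=U_1\cdots U_r$, passing from $\{W_{r-1}e_k\}$ to $\{W_re_k\}$ for $r=1,\dots,m$. This moves from $\{e_k\}$ to $\{b_k\}$ while preserving the sum at every stage. Consequently $\sum_k f(b_k)=\sum_k f(e_k)$.

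\textbf{Main obstacle.} The step I expect to need the most care is Step 1 together with its bookkeeping in Step 2. The moves are only guaranteed to preserve the sum when applied to genuinely orthonormal pairs of the \emph{current} basis. One therefore has to check that the word in $H$ and phases is applied in the correct order relative to the current basis $W$, rather than relative to the original fixed basis, and that the decomposition is exact rather than a density argument. Continuity of $f$ is not available, so an approximation would not suffice. Finite-dimensionality is used only to make the two-level decomposition finite.
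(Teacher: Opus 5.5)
Your proposal is correct and follows essentially the same route as the paper's proof. Both arguments use the same steps: invariance of $\sum_k f(e_k)$ under diagonal phases and under two-level Hadamard moves, the exact Euler factorization $e^{i\delta}R_z R_x R_z$ with $R_x = H R_z H$, the decomposition of $\mathrm{U}(n)$ into two-level unitaries, and a telescoping argument in which each factor acts relative to the current basis $\{W_{k-1}e_j\}$ (the paper phrases this through the conjugated operator $T_k=W_{k-1}Y_kW_{k-1}^\dagger$). The only cosmetic difference is that your single-vector phase moves cover the case $\dim\mathcal{H}=1$ directly, whereas the paper treats it separately.
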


\begin{proof} The proof proceeds by showing that phase changes and Hadamard transformations acting on a pair of basis vectors preserve the sum of $f$ over an orthonormal basis. Since every unitary change of orthonormal basis is a product of such transformations, the sum is then the same for every orthonormal basis.

Let $n=\dim(\mathcal{H})$. If $n=1$, then for any fixed unit vector $v$, every other unit vector is of the form $e^{i\theta}v$. It follows that $f$ is constant by phase invariance and hence trivially a frame function. We may therefore assume $n\ge 2$. Let $\{e_1,e_2,\dots, e_n\}$ be an orthonormal basis for $\mathcal{H}$, and let $U = \left(\begin{smallmatrix} a & b \\ c & d \end{smallmatrix}\right)$ be an arbitrary $2\times 2$ unitary matrix. For distinct $k,l\in\{1,\dots,n\}$, let $\widetilde{U}^{k,l}$ denote the linear transformation of $\mathcal{H}$ defined by $e_k\mapsto a e_k + c e_l$, $e_l \mapsto b e_k + d e_l$, and $e_j \mapsto e_j$ for $j\notin \{k,l\}$. Since the operator $\widetilde{U}^{k,l}$ maps the orthonormal basis $\{e_1,\dots,e_n\}$ to another orthonormal basis, $\widetilde{U}^{k,l}$ is unitary. If $U$ is of the form $\left(\begin{smallmatrix} e^{i\alpha} & 0 \\ 0 & e^{i\beta} \end{smallmatrix}\right)$ for $\alpha, \beta\in\mathbb{R}$, then 
$$
\sum_{j=1}^n f(e_j) = \sum_{j=1}^n f\big(\widetilde{U}^{k,l}(e_j)\big)\,,
$$
since for all $j$, there is a $\gamma_j \in\mathbb{R}$ such that $\widetilde{U}^{k,l}(e_j) = e^{i\gamma_j} e_j$, and $f$ is phase invariant. If $U$ is the Hadamard matrix $H=\frac{1}{\sqrt{2}}\left(\begin{smallmatrix} 1 & \phantom{-}1 \\ 1 & -1 \end{smallmatrix}\right)$, then by the weak parallelogram law,
\begin{align*}
\sum_{j=1}^n f(e_j)
&=
f(e_k) + f(e_l) +\!\sum_{j\notin \{ k,l\}} \! f(e_j) =  f\left( \frac{ e_k + e_l}{\sqrt{2}} \right) +
f\left( \frac{ e_k - e_l}{\sqrt{2}} \right)
+\!\sum_{j\notin \{ k,l\}} \!f(e_j)\\
&= f\big( \widetilde{U}^{k,l} e_k \big)
+
f\big( \widetilde{U}^{k,l} e_l \big)
+ \sum_{j\notin \{ k,l\}} f\big( \widetilde{U}^{k,l} e_j\big) =
\sum_{j=1}^n f\big(\widetilde{U}^{k,l} e_j\big).
\end{align*}
These two cases establish that for \emph{any} orthonormal basis $\mathcal{B}=\{e_1,\dots,e_n\}$, if an operator $V$ acts on $\mathcal{B}$ via a two-level transformation drawn from the set
$$
\mathcal{G}=\left\{  H \right\} \cup \left\{\left( \begin{smallmatrix} e^{i\alpha} & 0 \\ 0 & e^{i\beta} \end{smallmatrix}\right)~ \Big|~ \alpha,\beta\in\mathbb{R}\right\}\,,
$$
then $\sum_{j=1}^n f(e_j) = \sum_{j=1}^n f(V e_j)$.

Using the Euler-angle representation (cf.\ \cite[Thm.~4.1]{nielsen2010quantum}), any element of $\mathrm{U}(2)$ can be factored as
$$
e^{i\delta} \left(\begin{smallmatrix} e^{i\alpha} & 0 \\ 0 & e^{-i\alpha} \end{smallmatrix}\right) H  \left(\begin{smallmatrix} e^{i\beta} & 0 \\ 0 & e^{-i\beta} \end{smallmatrix}\right)H \left(\begin{smallmatrix} e^{i\gamma} & 0 \\ 0 & e^{-i\gamma} \end{smallmatrix}\right)\,.
$$
Since the global phase $e^{i\delta}$ acts as the diagonal matrix $\mathrm{diag}(e^{i\delta}, e^{i\delta})$, which belongs to $\mathcal{G}$, it follows that any element of $\mathrm{U}(2)$ can be written as a product of matrices in $\mathcal{G}$. Furthermore, any element of the unitary group $\mathrm{U}(n)$ is a product of two-level unitary transformations (see \cite[Sec.~4.5.1]{nielsen2010quantum}), and each such two-level factor can in turn be written as a product of two-level transformations whose non-trivial blocks lie in $\mathcal{G}$. If we identify each matrix in $\mathrm{U}(n)$ with the operator in $\mathcal{L}(\mathcal{H})$ it defines relative to $\mathcal{B}$, then any $Y \in \mathrm{U}(n)$ can be factored into a finite product $Y = Y_1 Y_2 \cdots Y_m$, where each operator $Y_k$ acts on the original basis $\mathcal{B}$ as a two-level unitary transformation whose non-trivial $2 \times 2$ block belongs to $\mathcal{G}$.

To show that the sum is invariant under the full unitary matrix $Y$, we use a telescoping sequence. For each $k \in \{1,\dots, m\}$, let $W_k = Y_1 \cdots Y_k$, with $W_0 = I$. We define the transformed basis at step $k-1$ as $\mathcal{B}_{k-1} = \{b_1^{(k-1)},\dots,b_n^{(k-1)}\}$ where $b_j^{(k-1)} = W_{k-1}e_j$. Now, define the conjugated operator $T_k = W_{k-1}Y_kW_{k-1}^\dagger$. Then $T_k b_j^{(k-1)}=(W_{k-1}Y_kW_{k-1}^\dagger) W_{k-1} e_j$ is equal to
$$
W_{k-1} Y_k e_j = W_{k-1}\sum_{i=1}^n (Y_k)_{ij}  e_i
 =\sum_{i=1}^n (Y_k)_{ij} W_{k-1} e_i = \sum_{i=1}^n (Y_k)_{ij} b_i^{(k-1)}\,.
$$
It follows that the matrix representing $T_k$ relative to the basis $\mathcal{B}_{k-1}$ is exactly $Y_k$. Since our earlier calculations apply to any orthonormal basis and any operator acting via a two-level unitary transformation whose non-trivial $2 \times 2$ block belongs to $\mathcal{G}$, we conclude that
\[
\sum_{j=1}^n f\big(b_j^{(k-1)}\big) = \sum_{j=1}^n f\big(T_k b_j^{(k-1)}\big)=\sum_{j=1}^n f\big(b_j^{(k)}\big)\,,
\]
since $T_k b_j^{(k-1)} = W_{k-1} Y_k e_j = W_k e_j = b_j^{(k)}$. By induction on $k$ with $\mathcal{B}_0=\mathcal{B}$, it follows that $\sum_{j=1}^n f(e_j) = \sum_{j=1}^n f(W_m e_j) = \sum_{j=1}^n f(Y e_j)$. Since every orthonormal basis of $\mathcal{H}$ is of the form $\{Ye_1,\dots,Ye_n\}$ for some $Y\in\mathrm{U}(n)$, the sum $\sum_j f(e_j)$ is independent of the choice of orthonormal basis, so $f$ is a frame function as required.
\end{proof}

\noindent As we will show in the proof of Lemma \ref{regularity} below, Gleason's arguments allow us to conclude that $f$ is regular, even in the infinite-dimensional case. To streamline the argument, we will use the following consequence of the proof of Gleason's Lemma 3.4 from \cite{gleason1957}.

\begin{lemma}[Strengthened form of Gleason's Lemma 3.4 \cite{gleason1957}]\label{lem:gleason} If a function $f:S(\mathcal H)\to\mathbb R$ is bounded on $S(\mathcal H)$, $\dim(\mathcal{H})\ge 2$, and $f$'s restriction to every two-dimensional subspace of $\mathcal H$ is regular, then $f$ is regular.
\end{lemma}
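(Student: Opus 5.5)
The plan is to reduce the statement to the machinery of Section \ref{sec:PLtoQF}: first show that $f$ satisfies the (strong) parallelogram law on all of $S(\mathcal{H})$, then let Lemmas \ref{sesqOverQ} and \ref{sesqOverC} produce a sesquilinear form, and finally represent that form by a bounded self-adjoint operator. The hypothesis on two-dimensional subspaces is exactly what makes the first step local.

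\emph{Step 1: $f$ satisfies the parallelogram law.} Let $\psi,\phi\in S(\mathcal{H})$ and let $U=\left(\begin{smallmatrix} a & b\\ c & d\end{smallmatrix}\right)$ be unitary with $a\psi+b\phi\neq 0$ and $c\psi+d\phi\neq 0$.
\begin{itemize}
\item Since $\dim(\mathcal{H})\ge 2$, the span of $\psi$ and $\phi$ is contained in some two-dimensional subspace $M$.
\item If $\psi$ and $\phi$ are linearly independent, take $M$ to be their span. Otherwise extend the one-dimensional span by any unit vector orthogonal to it.
\item All four vectors $\psi$, $\phi$, $a\psi+b\phi$ and $c\psi+d\phi$ lie in $M$.
\item By hypothesis there is a self-adjoint $A_M$ on $M$ with $f(v)=\langle v,A_Mv\rangle$ for all $v\in S(M)$.
\end{itemize}
As the paper notes, any regular function satisfies the parallelogram law. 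Concretely, with $q(v)=\langle v,A_Mv\rangle$ one has $q(a\psi+b\phi)+q(c\psi+d\phi)=q(\psi)+q(\phi)$, because the cross terms cancel by the orthonormality of the columns of $U$, exactly as in the computation in Lemma \ref{DIPL}. Rewriting $q(w)=\|w\|^2 f(w/\|w\|)$ for $w\neq 0$ gives the parallelogram law for $f$ at this instance. Since $\psi,\phi,U$ were arbitrary, $f$ satisfies the parallelogram law on $S(\mathcal{H})$.

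\emph{Step 2: sesquilinearity.} Because $f$ is bounded and satisfies the parallelogram law, Lemmas \ref{PLtoPhaseInvt}--\ref{sesqOverC} apply verbatim. They show that $B_f$ is sesquilinear over $\mathbb{C}$. Moreover, as in the proof of Lemma \ref{PLtoQF}, $B_f(v,v)=Q_f(v)$, which equals $f(v)$ for unit $v$. The proof of Lemma \ref{sesqOverQ} also gives conjugate symmetry, $B_f(\psi,\phi)=B_f(\phi,\psi)^\ast$.

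\emph{Step 3: representation by an operator.} Set $C=\sup|f|$. Then $|Q_f(v)|\le C\|v\|^2$, and the polarization formula defining $B_f$ gives $|B_f(\psi,\phi)|\le C'\|\psi\|\|\phi\|$. To see this, first take $\|\psi\|=\|\phi\|=1$, where each term is bounded by $C\cdot 4$, and then rescale using sesquilinearity. The Riesz representation theorem for bounded sesquilinear forms \cite[Thm.~II.2.2]{conway} then yields a bounded operator $A$ with $B_f(\psi,\phi)=\langle\psi,A\phi\rangle$. Conjugate symmetry makes $A$ self-adjoint, and $f(v)=\langle v,Av\rangle$ on $S(\mathcal{H})$. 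Hence $f$ is regular.

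The only delicate point is Step 1, namely checking that the parallelogram law truly involves only vectors in a single two-dimensional subspace. This is the reason for proving the law for $f$ itself rather than trying to show additivity of $B_f$ directly. Additivity in one argument involves three vectors and therefore a three-dimensional span, which the hypothesis does not control. Going through the parallelogram law and the Jordan--von Neumann argument of Lemma \ref{sesqOverQ} avoids this issue, and boundedness enters only through Ostrowski's theorem in Lemma \ref{sesqOverC} and through the Riesz step.
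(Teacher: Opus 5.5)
Your proof is correct, and it takes a genuinely different route from the paper.

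\textbf{What the paper does.} The paper never proves Lemma~\ref{lem:gleason} itself. It defers to Gleason's proof of his Lemma~3.4, noting that the frame-function hypothesis is never used there. In Appendix~\ref{app:regularity} it then sidesteps the lemma entirely and proves Lemma~\ref{regularity} directly. That argument works with finite-dimensional subspaces of dimension at least $3$, where Gleason's Theorem~3.5 supplies operators $A_V$. Because every three-vector span $\mathrm{sp}(u,v,w)$ is then controlled, additivity of $B(v,w)=\langle v,A_{\mathrm{sp}(v,w)}w\rangle$ follows from the compatibility of compressions. Boundedness of $B$ is checked by a separate polarization estimate.

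\textbf{What you do differently.} Your key observation is that the parallelogram law only ever involves two vectors at a time. Regularity on two-dimensional subspaces therefore already gives it globally. The Jordan--von Neumann argument of Lemma~\ref{sesqOverQ} then produces three-vector additivity with no information about three-dimensional spans.

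\textbf{What each route buys.} Yours gives a fully rigorous, self-contained proof of the strengthened lemma as stated, for arbitrary bounded $f$ not assumed to be a frame function. It uses only results already established in Section~\ref{sec:PLtoQF}, so there is no need to audit Gleason's original proof. The appendix route proves only the downstream Lemma~\ref{regularity}, and it needs Gleason's theorem on every finite-dimensional subspace of dimension at least $3$. Your argument also shows that, in the proof of Lemma~\ref{regularity}, the final appeal to Gleason's Lemma~3.4 can be replaced by the paper's own parallelogram-law machinery.

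\textbf{A small refinement.} Ostrowski's theorem is not actually needed in your Step~2. The quantity $B_f(\psi,\lambda\phi)$ involves only vectors in a two-dimensional subspace $M\supseteq\mathrm{sp}(\psi,\phi)$. On $M$ you have $Q_f(v)=\langle v,A_Mv\rangle$ exactly, so complex homogeneity follows at once from polarization. Boundedness of $f$ is then used only in the Riesz step, where it cannot be dropped. For example, $f(v)=|\ell(v)|^2$ with $\ell$ a discontinuous linear functional is regular on every two-dimensional subspace but not regular.
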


Although Gleason states Lemma \ref{lem:gleason} under the stronger assumption that $f$ is a global frame function, his proof relies only on global boundedness and regularity on 2-dimensional subspaces, never invoking the full frame-function hypothesis. For a fully self-contained proof of Lemma \ref{regularity} that does not rely on this strengthening of Gleason's lemma, see Appendix \ref{app:regularity}.

\begin{lemma}\label{regularity} Let $\mathcal{H}$ be a Hilbert space of dimension at least 3. If $f: S(\mathcal{H}) \to \mathbb{R}$ is a bounded, phase-invariant, non-negative function satisfying the weak parallelogram law, then $f$ is regular.
\end{lemma}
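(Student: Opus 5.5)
The approach is to reduce to Lemma \ref{lem:gleason}: since $f$ is bounded by hypothesis and $\dim(\mathcal{H})\ge 3\ge 2$, it suffices to show that the restriction of $f$ to every two-dimensional subspace $K\subset\mathcal{H}$ is regular. Gleason's theorem does not hold in dimension $2$, so I will not work inside $K$ alone. Instead, I will embed $K$ in a three-dimensional subspace $L\supset K$, which exists because $\dim(\mathcal{H})\ge 3$, and apply Gleason's theorem \cite[Thm.~3.5]{gleason1957} to $f|_{S(L)}$.

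The steps are as follows. First, fix a two-dimensional subspace $K$, choose a unit vector $w\perp K$ (possible since $\dim\mathcal{H}\ge 3$), and set $L=K\oplus\mathbb{C}w$. Second, note that $f|_{S(L)}$ inherits phase invariance and the weak parallelogram law from $f$. Both conditions only involve vectors in the span of the vectors under consideration, and orthogonal $\psi,\phi\in S(L)$ have $\tfrac{1}{\sqrt2}(\psi\pm\phi)\in S(L)$. Since $L$ is finite-dimensional, Lemma \ref{frame} then shows that $f|_{S(L)}$ is a frame function. It is non-negative by hypothesis, so Gleason's theorem (valid because $\dim L=3$) gives a self-adjoint operator $A_L$ on $L$ with $f(v)=\langle v,A_Lv\rangle$ for all $v\in S(L)$. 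Third, compress to $K$: with $P_K$ the orthogonal projection onto $K$, the operator $A_K=P_KA_LP_K|_K$ is self-adjoint on $K$ and satisfies $f(v)=\langle v,A_Kv\rangle$ for $v\in S(K)$. Hence $f|_{S(K)}$ is regular. Finally, since $K$ was arbitrary, Lemma \ref{lem:gleason} yields a self-adjoint (indeed bounded, by boundedness of $f$) operator $A$ on $\mathcal{H}$ with $f(v)=\langle v,Av\rangle$ for all $v\in S(\mathcal{H})$.

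There is little real difficulty once Lemma \ref{lem:gleason} is granted, so the delicate points are hypothesis checks. One is confirming that Lemma \ref{frame} applies to the restriction, which needs only finite-dimensionality of $L$ and closure of the weak parallelogram law under restriction. The other is that Gleason's Theorem 3.5 requires non-negativity, which is exactly why that hypothesis appears in the statement; boundedness is used only in the final appeal to Lemma \ref{lem:gleason}. The main obstacle is therefore the infinite-dimensional passage from local to global regularity. There I rely on the strengthened Lemma \ref{lem:gleason}, whose justification is that Gleason's proof of Lemma 3.4 uses only boundedness and two-dimensional regularity. If one prefers not to rely on that strengthening, I would instead build $A$ directly. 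I would define the sesquilinear form $B(x,y)$ by polarizing $Q(v)=\|v\|^2f(v/\|v\|)$; it is quadratic on each finite-dimensional subspace by the above, hence sesquilinear globally, since any two vectors lie in a common three-dimensional subspace. It is bounded by $\sup f$, so the Riesz representation theorem produces $A$ (cf.\ Appendix \ref{app:regularity}).
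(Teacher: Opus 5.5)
Your proposal is correct and follows the paper's proof essentially step for step: you embed each two-dimensional subspace in a three-dimensional one, apply Lemma \ref{frame} and Gleason's Theorem 3.5 there, compress back to the two-dimensional subspace, and conclude with Lemma \ref{lem:gleason}. Your fallback sketch corresponds to Appendix \ref{app:regularity}, and polarizing $Q$ directly spares you the appendix's check that $A_W$ is independent of the chosen superspace. One small correction there: additivity requires three vectors, not two, to lie in a common three-dimensional subspace, which still holds since $\dim(\mathcal{H})\ge 3$.
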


\begin{proof} Let $W$ be an arbitrary 2-dimensional subspace of $\mathcal{H}$, and let $V\subset \mathcal{H}$ be a finite-dimensional subspace of dimension at least $3$ that contains $W$ (which must exist since $\dim(\mathcal{H}) \ge 3$). Since $f$ satisfies the weak parallelogram law and is phase invariant on $S(\mathcal{H})$, the same holds for its restriction $f|_{S(V)}$. By Lemma \ref{frame}, $f|_{S(V)}$ is a frame function, and since it is non-negative and $\dim(V)\ge3$, it is regular by Gleason's theorem \cite[Thm.~3.5]{gleason1957}. Because $f|_{S(V)}$ is regular, there exists a self-adjoint operator $A_V$ on $V$ such that $f(v) = \langle v, A_V v \rangle$ for all $v \in S(V)$. For $w\in S(W)$, we can rewrite this as $f(w) = \langle w, (P_W A_V P_W) w \rangle$, where $P_W$ is the orthogonal projection onto $W$. The map $P_W A_V P_W$ is a self-adjoint operator on $W$, so it follows that $f|_{S(W)}$ is regular as well.

Thus the restriction of $f$ to any $2$-dimensional subspace is regular. Since $f$ is bounded on $S(\mathcal{H})$ and regular on every 2-dimensional subspace, $f$ is regular by Lemma \ref{lem:gleason}.
\end{proof}

\noindent We are now ready to prove Postulate \ref{QF} from the weak parallelogram law.

\begin{lemma}\label{lem:WPLtoQF} Any HPMM $\mathcal{E}$ with Property \ref{WPL} and $\dim(\mathcal{H}_{\mathcal{E}}) \ge 3$ has Property \ref{QF}. 
\end{lemma}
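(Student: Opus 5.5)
Fix $O\in\Sigma_{\mathcal{E}}$ and let $f:S(\mathcal{H}_{\mathcal{E}})\to\mathbb{R}$ be $f(\psi)=\mathbb{P}_{\mathcal{E}}(O\,;\psi)$. The plan is to check the hypotheses of Lemma \ref{regularity} and then read off a quadratic form from the resulting regular representation.

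First, $f$ takes values in $[0,1]$ because it is a probability. It is therefore bounded and non-negative. Phase invariance of $f$ is built into the definition of an HPMM, so we do not need Lemma \ref{PLtoPhaseInvt}, which in any case assumes the strong parallelogram law. Property \ref{WPL} says exactly that $f$ satisfies the weak parallelogram law for orthogonal unit vectors. Since $\dim(\mathcal{H}_{\mathcal{E}})\ge 3$, Lemma \ref{regularity} then gives a self-adjoint operator $A_O$ with $f(v)=\langle v,A_O v\rangle$ for all $v\in S(\mathcal{H}_{\mathcal{E}})$.

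Next, set $q_O(v,w)=\langle v,A_O w\rangle$. This is sesquilinear: conjugate-linear in the first argument and linear in the second, matching the paper's convention. Its diagonal $Q_O(v)=q_O(v,v)$ is real-valued because $A_O$ is self-adjoint, and $Q_O(\psi)=\mathbb{P}_{\mathcal{E}}(O\,;\psi)$ on the unit sphere. Since $O$ was arbitrary, $\mathcal{E}$ has Property \ref{QF}.

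The one point needing care is what ``regular'' supplies in infinite dimensions, namely whether $A_O$ is a bounded, everywhere-defined operator. If Lemma \ref{regularity} guarantees this, the argument above goes through directly. If it does not, I would instead define $Q_O(v)=\|v\|^2 f(v/\|v\|)$ for $v\neq 0$ and $Q_O(0)=0$. This agrees with $\langle v,A_O v\rangle$ on every finite-dimensional subspace, which is how Lemma \ref{regularity}'s proof builds $A_O$ from the two-dimensional restrictions. Polarizing $Q_O$ then gives a form that is sesquilinear on each finite-dimensional subspace, hence sesquilinear globally, because every sesquilinearity identity involves only finitely many vectors. Boundedness follows from $|Q_O(v)|\le\|v\|^2$, exactly as in the proof of Lemma \ref{lem:POVMeqQF}. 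Either route yields the quadratic form required by Postulate \ref{QF}, so I expect no serious obstacle; the real work is already done in Lemma \ref{regularity}.
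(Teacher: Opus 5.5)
Your proposal is correct and follows the paper's proof: check that $f$ is bounded, non-negative, phase invariant (by the HPMM definition) and satisfies the weak parallelogram law, apply Lemma \ref{regularity}, and read off the sesquilinear form $(v,w)\mapsto\langle v,A w\rangle$. Your hedge about boundedness is unnecessary. Lemma \ref{regularity} produces a bounded operator (via the Riesz representation in Appendix \ref{app:regularity}), and Property \ref{QF} does not require the form to be bounded in any case.
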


\begin{proof}Let $O\in \Sigma_{\mathcal{E}}$, let $f(\psi)=\mathbb{P}_{\mathcal{E}}(O\,;\psi)$, and let $Q_O\equiv Q_f$ be its homogeneous extension to $\mathcal{H}_{\mathcal{E}}$ as defined in Section~\ref{sec:PLtoQF}. Since $\mathcal{E}$ has Property \ref{WPL}, $f$ satisfies the weak parallelogram law. By the definition of an HPMM, it is phase invariant, and it is bounded and non-negative, since $0 \le \mathbb{P}_{\mathcal{E}}(O\,;\phi) \le 1$ for all $\phi$. Since $\dim(\mathcal{H}_{\mathcal{E}}) \ge 3$, we can apply Lemma \ref{regularity} to conclude that $f$ is regular, i.e., there exists a self-adjoint operator $A$ such that for all $\psi\in S(\mathcal{H}_{\mathcal{E}})$, $\mathbb{P}_{\mathcal{E}}(O\,;\psi)=\langle \psi, A \psi\rangle$. It follows that for any non-zero $\psi\in \mathcal{H}_{\mathcal{E}}$, 
$$
Q_O(\psi)
=
\|\psi\|^2\, \mathbb{P}_{\mathcal{E}}\left(O\,;\frac{\psi}{\|\psi\|}\right)
=
\|\psi\|^2 \, \left\langle \frac{\psi}{\|\psi\|}, A \frac{\psi}{\|\psi\|}\right\rangle
=
\langle \psi, A \psi \rangle.
$$
Since $Q_O(0) = 0 = \langle 0, A(0)\rangle$ as well, the functions $Q_O$ and $\psi \mapsto \langle \psi, A \psi\rangle$ agree as required. The function $(v,w)\mapsto\langle v,Aw\rangle$ is a sesquilinear form on $\mathcal{H}_{\mathcal{E}}$ whose associated quadratic form is $Q_O$, so the HPMM $\mathcal{E}$ has Property \ref{QF}.
\end{proof}

\subsection{Proof of Theorem \ref*{mainthm}}\label{sec:proofOfMainTheorem}

We have now proved all of the implications shown in Figure \ref{Implications}. From here, Theorem \ref{mainthm} can be read off by following the implication arrows and tracking their dependencies.

\begin{proof}[Proof of Theorem \ref{mainthm}] Tracing through the implications in Figure \ref{Implications}, Postulate \ref{NS} implies \ref{WNS}, and the conjunction of \ref{WNS}, \ref{EQC}, and \ref{SMQC} implies \ref{PL} (Lemma \ref{imp1}). The chain of implications \ref{PL}$\,\Rightarrow\,$\ref{QF}$\,\Rightarrow\,$\ref{POVM} (Lemmas \ref{PLtoQF} and \ref{lem:POVMeqQF}) completes one direction of the theorem. Conversely, \ref{POVM}$\,\Rightarrow\,$\ref{QF}$\,\Rightarrow\,$\ref{WSEP} (Lemmas \ref{lem:POVMeqQF} and \ref{densityReduction}). Since \ref{WSEP} together with \ref{IP} implies \ref{PT} (Lemma \ref{lem:WSEPandIPtoPT}), and \ref{PT} implies \ref{NS} (Lemma \ref{lem:PTtoNS}), the converse holds as required.
\end{proof}

Note that the two directions of Theorem \ref{mainthm} do not rely on the same set of postulates. The derivation of \ref{POVM} from \ref{NS} uses \ref{EQC} and \ref{SMQC}, but not \ref{IP}. The derivation of \ref{NS} from \ref{POVM} uses \ref{IP}, but neither \ref{EQC} nor \ref{SMQC}.

\section{Postulate \ref*{PM} is Independent of the Other Postulates}\label{sec:PMIndep}

Figure \ref{Implications} shows that Postulate \ref{IP}, combined with any single-system postulate other than \ref{WPL}, implies every other postulate in the figure except \ref{PM} and \ref{EQC}.\footnote{Postulates \ref{IP} and \ref{WPL} together fail to imply Postulate \ref{QF} for HPMMs $\mathcal{E}$ with $\dim(\mathcal{H}_{\mathcal{E}}) = 2$ (see Section \ref{sec:WPLtoQF}).} Postulate \ref{EQC} is independent for a simple reason. It is the only postulate that requires composite experiments to exist, so a model whose available HPMMs satisfy the single-system postulates, but which contains no available composites, satisfies every other postulate (the composite-system ones vacuously) while violating \ref{EQC}. Whether \ref{PM} is independent is less obvious. This section shows that it is. Since \ref{PM} holds in textbook NRQM whenever the prepared state is the only source of correlation between the two wings (see Section \ref{sec:PPI}), it follows that the other postulates in Figure \ref{Implications} do not fully characterize textbook NRQM.

I first show, in Section \ref{subsec:tsirelson}, that Postulates \ref{PM} and \ref{POVM} together imply Tsirelson's bound. Section \ref{subsec:PRmodel} then adapts Popescu and Rohrlich's example \cite{popescu1994quantum} to construct a model in which every postulate in Figure \ref{Implications} other than \ref{PM} holds. The model also violates Tsirelson's bound, so the correlations it permits are not merely non-multiplicative but stronger than any that textbook NRQM allows. Finally, Section \ref{subsec:uniqueness} shows that, given Postulate \ref{QF}, Postulate \ref{PM} ensures that available post-preparation independent composites are determined by their components, something the model shows cannot be established without it.

\subsection{Tsirelson's bound under Postulates \ref*{POVM} and \ref*{PM}}\label{subsec:tsirelson}

We begin with some notation. Suppose we have a composite HPMM $(\mathcal{J};\mathcal{E}_A,\mathcal{E}_B)$ and sets of outcomes $O_A\in \Sigma_{\mathcal{E}_A}$ and $O_B\in \Sigma_{\mathcal{E}_B}$, with complements ${O_A}^{c}\equiv\Omega_{\mathcal{E}_A}\setminus O_A$ and ${O_B}^{c}\equiv\Omega_{\mathcal{E}_B}\setminus O_B$. Let $\mathcal{O}_{\mathrm{eq}}(O_A,O_B)\equiv(O_A\times O_B) \cup  ({O_A}^{c}\times{O_B}^{c})$ and $\mathcal{O}_{\mathrm{op}}(O_A,O_B)\equiv (O_A\times {O_B}^{c}) \cup ({O_A}^{c}\times O_B)$. Define the correlation function
$$
\mathrm{C}_{\mathcal{J}}(O_A,O_B\,;\Psi) \equiv \mathbb{P}_{\mathcal{J}}\big(\mathcal{O}_{\mathrm{eq}}(O_A,O_B)\,;\Psi\big)-\mathbb{P}_{\mathcal{J}}\big(\mathcal{O}_{\mathrm{op}}(O_A,O_B)\,;\Psi\big)\,.
$$
Tsirelson's bound constrains sums of correlation functions for different post-preparation independent composite experiments. To state it precisely, let $\mathcal{E}_{A_1}$ and $\mathcal{E}_{A_2}$ be HPMMs such that $\mathcal{H}_{\mathcal{E}_{A_1}}\!=\mathcal{H}_{\mathcal{E}_{A_2}}\!\equiv\mathcal{H}_A$ and $\Sigma_{\mathcal{E}_{A_1}}=\Sigma_{\mathcal{E}_{A_2}}\equiv\Sigma_A$, and let $\mathcal{E}_{B_1}$ and $\mathcal{E}_{B_2}$ be HPMMs such that $\mathcal{H}_{\mathcal{E}_{B_1}}\!=\mathcal{H}_{\mathcal{E}_{B_2}}\!\equiv\mathcal{H}_B$ and $\Sigma_{\mathcal{E}_{B_1}}=\Sigma_{\mathcal{E}_{B_2}}\equiv \Sigma_B$. For $i,j\in\{1,2\}$, let $(\mathcal{J}_{ij};\mathcal{E}_{A_i},\mathcal{E}_{B_j})$ be a composite HPMM describing a post-preparation independent composite experiment. The \emph{Clauser--Horne--Shimony--Holt (CHSH) function} \cite{clauser1969proposed} associated with the tuple $(\mathcal{J}_{11},\mathcal{J}_{12},\mathcal{J}_{21},\mathcal{J}_{22})$ is the function $\mathrm{CHSH}_{\mathcal{J}_{11},\mathcal{J}_{12},\mathcal{J}_{21},\mathcal{J}_{22}}(O_A,O_B\,;\Psi)$ defined by
$$
\Big|\mathrm{C}_{\mathcal{J}_{11}}(O_A,O_B\,;\Psi) + \mathrm{C}_{\mathcal{J}_{12}}(O_A,O_B\,;\Psi)+
\mathrm{C}_{\mathcal{J}_{21}}(O_A,O_B\,;\Psi) -
\mathrm{C}_{\mathcal{J}_{22}}(O_A,O_B\,;\Psi)\Big|\,.
$$
Tsirelson's bound can then be stated as follows.

\begin{lemma}\label{Tsirelson} Let $\mathcal{J}_{11}, \mathcal{J}_{12},\mathcal{J}_{21}$ and $\mathcal{J}_{22}$ be available composite HPMMs, each describing a post-preparation independent composite experiment, as in the above definition of a CHSH function. Let $\Psi\in S(\mathcal{H}_A\otimes\mathcal{H}_B)$, $O_A\in \Sigma_A$, and $O_B\in \Sigma_B$. Then under Postulates \ref{POVM} and \ref{PM}, $\mathrm{CHSH}_{\mathcal{J}_{11},\mathcal{J}_{12},\mathcal{J}_{21},\mathcal{J}_{22}}(O_A,O_B\,;\Psi)\le 2\sqrt{2}$.
\end{lemma}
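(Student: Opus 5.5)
The plan is to turn each composite $\mathcal{J}_{ij}$ into an operator of the form $A_i\otimes B_j$ acting on $\mathcal{H}_A\otimes\mathcal{H}_B$, and then run the Cauchy--Schwarz form of Tsirelson's argument. This form only needs $\|A_i\|,\|B_j\|\le 1$, so no projection or commutation assumptions are required.

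\emph{Step 1: representing the local and joint experiments.} By Postulate \ref{POVM}, each available component $\mathcal{E}_{A_i}$, $\mathcal{E}_{B_j}$ has a normalized POVM, $E_{A_i}$ and $E_{B_j}$ respectively. Each composite $\mathcal{J}_{ij}$ also has a normalized POVM $E_{\mathcal{J}_{ij}}$ on $\mathcal{H}_A\otimes\mathcal{H}_B$. Fix $O_A,O_B$ and consider one of the four rectangles $R\in\{O_A\times O_B,\;O_A\times O_B^c,\;O_A^c\times O_B,\;O_A^c\times O_B^c\}$. Postulate \ref{PM} gives, for all unit $\psi,\phi$,
\[
\langle\psi\otimes\phi,E_{\mathcal{J}_{ij}}(R)\,\psi\otimes\phi\rangle=\langle\psi\otimes\phi,(E_{A_i}(\cdot)\otimes E_{B_j}(\cdot))\,\psi\otimes\phi\rangle .
\]
Both sides are bounded quadratic forms, so Lemma \ref{simpleTensors} and polarization yield $E_{\mathcal{J}_{ij}}(R)=E_{A_i}(\cdot)\otimes E_{B_j}(\cdot)$ as operators, with the factors matching the two sides of $R$. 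The union defining $\mathcal{O}_{\mathrm{eq}}$ is disjoint, and so is the one defining $\mathcal{O}_{\mathrm{op}}$. Finite additivity of the POVM (countable additivity with empty sets appended, using $E(\varnothing)=0$) then lets us expand. Writing $A_i=2E_{A_i}(O_A)-I$ and $B_j=2E_{B_j}(O_B)-I$, and using $E(O^c)=I-E(O)$, we get
\[
\mathrm{C}_{\mathcal{J}_{ij}}(O_A,O_B\,;\Psi)=\langle\Psi,(A_i\otimes B_j)\Psi\rangle .
\]
Since $0\le E\le I$, each $A_i$ and $B_j$ is self-adjoint with norm at most $1$. Crucially, $A_i$ depends only on $\mathcal{E}_{A_i}$, not on $j$, and $B_j$ depends only on $\mathcal{E}_{B_j}$; this is exactly what Postulate \ref{PM} buys us.

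\emph{Step 2: bounding the CHSH combination.} Group the four terms as
\[
\langle\Psi,(A_1\otimes(B_1+B_2))\Psi\rangle+\langle\Psi,(A_2\otimes(B_1-B_2))\Psi\rangle .
\]
Factor $A_i\otimes X=(A_i\otimes I)(I\otimes X)$. Using self-adjointness and Cauchy--Schwarz, each term is bounded in absolute value by $\|(A_i\otimes I)\Psi\|\,\|(I\otimes X)\Psi\|\le\|(I\otimes X)\Psi\|$. The total is therefore at most $\|(I\otimes(B_1+B_2))\Psi\|+\|(I\otimes(B_1-B_2))\Psi\|$. Apply $x+y\le\sqrt2\sqrt{x^2+y^2}$ together with the parallelogram identity for the norm. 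This bounds the total by
\[
\sqrt2\,\sqrt{2\|(I\otimes B_1)\Psi\|^2+2\|(I\otimes B_2)\Psi\|^2}\le\sqrt2\cdot 2=2\sqrt2 .
\]

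\emph{Expected difficulty.} Step 2 is routine. The main care is needed in Step 1, in two places. First, Postulate \ref{PM} is only stated for product sets and product states. The passage to operator identities therefore has to go through Lemma \ref{simpleTensors}, and this needs the forms to be bounded, which Postulate \ref{POVM} supplies. Second, the events $\mathcal{O}_{\mathrm{eq}}$ and $\mathcal{O}_{\mathrm{op}}$ are not rectangles, so they have to be handled via additivity over disjoint rectangles rather than by applying \ref{PM} to them directly.
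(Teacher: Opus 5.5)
Your Step 1 tracks the paper's Lemma \ref{lem:opProd}: agreement on product states, then Lemma \ref{simpleTensors} and polarization, then additivity over the disjoint rectangles forming $\mathcal{O}_{\mathrm{eq}}$ and $\mathcal{O}_{\mathrm{op}}$. However, its first move is not licensed. You apply Postulate \ref{POVM} to the components $\mathcal{E}_{A_i}$ and $\mathcal{E}_{B_j}$, but the hypotheses only make the composites $\mathcal{J}_{ij}$ available. In this framework a composite triple is available when $\mathcal{J}$ is, and nothing requires its components to be available. So Postulate \ref{POVM} says nothing directly about $\mathcal{E}_{A_i}$ or $\mathcal{E}_{B_j}$.

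The repair is the compression argument the paper uses. Fix a unit vector $\phi_0\in\mathcal{H}_B$, let $V\psi=\psi\otimes\phi_0$, and set $E_{A_i}(O)\equiv V^\dagger E_{\mathcal{J}_{i1}}(O\times\Omega_B)V$. Since $V$ is an isometry, this is a normalized POVM on $\mathcal{H}_A$. Postulate \ref{PM} with second factor $\Omega_B$ gives $\langle\psi,E_{A_i}(O)\psi\rangle=\mathbb{P}_{\mathcal{E}_{A_i}}(O\,;\psi)$, so $\mathcal{E}_{A_i}$ has Property \ref{POVM}. A bounded operator on a complex Hilbert space is determined by its expectation values on unit vectors, so this POVM is unique. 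In particular it does not depend on $\phi_0$ or on which $\mathcal{J}_{ij}$ was compressed. That uniqueness is exactly what underwrites your claim that $A_i$ does not depend on $j$. The same construction with a unit vector of $\mathcal{H}_A$ handles $B_j$, and with this patch the rest of Step 1 goes through.

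Your Step 2 is correct and takes a different route from the paper. The paper works at the operator level, with $A_i$ and $B_j$ taken as operators on $\mathcal{H}_A\otimes\mathcal{H}_B$:
\begin{itemize}
\item It proves $-2\sqrt{2}\,I\le A_1B_1+A_1B_2+A_2B_1-A_2B_2\le 2\sqrt{2}\,I$ for commuting self-adjoint contractions via a sum-of-squares identity (Lemma \ref{lem:tsirBoundLemma}).
\item It gets the lower bound by applying the upper bound to $(-A_1,-A_2,B_1,B_2)$.
\item It then takes the expectation in $\Psi$.
\end{itemize}
You work vector by vector instead. The triangle inequality handles the absolute value, Cauchy--Schwarz disposes of the $A_i$, and $x+y\le\sqrt{2}\sqrt{x^2+y^2}$ together with the parallelogram identity bounds the rest. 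Your version is shorter and avoids manipulating operator squares. The paper's version bounds the CHSH operator itself, uniformly in $\Psi$. Your remark that no commutation is needed is slightly overstated, though. Writing $A_i\otimes X=(A_i\otimes I)(I\otimes X)$ with both factors self-adjoint is exactly the commutation the paper exploits, and it is automatic in both arguments.
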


Before proving Lemma \ref{Tsirelson}, we will need two preliminary lemmas.
\begin{lemma}[Tsirelson \cite{cirel1980quantum}]\label{lem:tsirBoundLemma} Let $A_1,A_2,B_1$, and $B_2$ be self-adjoint operators defined on a Hilbert space $\mathcal{H}$ such that $-I\le A_i\le I$, $-I\le B_j\le I$, and $[A_i,B_j]=0$ for all $i,j\in\{1,2\}$, where $I$ is the identity operator on $\mathcal{H}$. Then $-2\sqrt{2}\,I \le A_1B_1+A_1B_2+A_2B_1-A_2B_2 \le 2\sqrt{2}\,I$.
\end{lemma}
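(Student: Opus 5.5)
The plan is the standard Landau / Tsirelson squaring argument, carried out with only the hypotheses stated: the $A_i,B_j$ are self-adjoint contractions, and each $A_i$ commutes with each $B_j$. We do \emph{not} assume that $A_1$ commutes with $A_2$, or that $B_1$ commutes with $B_2$.

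Set $C=A_1B_1+A_1B_2+A_2B_1-A_2B_2=A_1(B_1+B_2)+A_2(B_1-B_2)$. Because $[A_i,B_j]=0$, each product $A_iB_j$ is self-adjoint, so $C$ is self-adjoint. It therefore suffices to show $\|C\|\le 2\sqrt 2$, since this gives $-2\sqrt2\,I\le C\le 2\sqrt2\,I$.

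I would avoid expanding $C^2$ directly, because the contractions $A_i,B_j$ need not be unitaries or reflections, so $A_i^2\neq I$ in general. Instead, bound the quadratic form. Fix a unit vector $v$. Then
\[
|\langle v,Cv\rangle|\le |\langle A_1v,(B_1+B_2)v\rangle|+|\langle A_2v,(B_1-B_2)v\rangle|.
\]
By Cauchy--Schwarz and $\|A_iv\|\le1$, this is at most $\|(B_1+B_2)v\|+\|(B_1-B_2)v\|$.

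Next apply the elementary inequality $x+y\le\sqrt{2(x^2+y^2)}$ together with the parallelogram identity:
\[
\|(B_1+B_2)v\|^2+\|(B_1-B_2)v\|^2=2\|B_1v\|^2+2\|B_2v\|^2\le 4.
\]
Hence $|\langle v,Cv\rangle|\le\sqrt{8}=2\sqrt2$.

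For a self-adjoint operator, $\sup_{\|v\|=1}|\langle v,Cv\rangle|=\|C\|$, which completes the bound. Curiously, commutation is used only to guarantee that $C$ is self-adjoint, so that the numerical-range bound converts into operator inequalities. This step is the main subtlety to check, since $A_1(B_1+B_2)$ is self-adjoint only because $A_1$ commutes with $B_1+B_2$.

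There is no real obstacle. The one care point is that the $A_i,B_j$ are merely contractions and not unitaries, which is exactly why I bound the quadratic form rather than compute $C^2$.
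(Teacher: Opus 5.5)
Your proof is correct, but it takes a different route from the paper's. The paper proves an operator sum-of-squares identity,
\[
C=\tfrac{1}{\sqrt2}\big(A_1^2+A_2^2+B_1^2+B_2^2\big)-\tfrac{1}{\sqrt2}\Big[\big(A_1-\tfrac{1}{\sqrt2}(B_1+B_2)\big)^2+\big(A_2-\tfrac{1}{\sqrt2}(B_1-B_2)\big)^2\Big],
\]
which needs $[A_i,B_j]=0$ to combine the cross terms $A_iB_j+B_jA_i$ into $2A_iB_j$. It then drops the squares and uses $A_i^2,B_j^2\le I$ to get $C\le 2\sqrt2\,I$. The lower bound comes from applying the same inequality to $(-A_1,-A_2,B_1,B_2)$.

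You instead work at the level of vectors. You write $\langle v,Cv\rangle=\langle A_1v,(B_1+B_2)v\rangle+\langle A_2v,(B_1-B_2)v\rangle$ and bound $|\langle v,Cv\rangle|$ directly with Cauchy--Schwarz and the parallelogram identity. This gives both signs at once, and commutativity enters only to make $C$ self-adjoint. You could also skip the detour through $\|C\|$: for self-adjoint $C$, the bound $|\langle v,Cv\rangle|\le 2\sqrt2$ on unit vectors is literally the statement $-2\sqrt2\,I\le C\le 2\sqrt2\,I$.

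Your argument is shorter and shows slightly more, namely that the numerical radius of the CHSH expression is at most $2\sqrt2$ even without commutation. The paper's version gives an explicit positivity certificate for $2\sqrt2\,I-C$. That sum-of-squares form is the template that carries over to Tsirelson-type bounds in more general Bell scenarios.
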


\begin{proof} Using $[A_i,B_j]=0$, a direct expansion shows that
\begin{align*}
A_1B_1&+A_1B_2+A_2B_1-A_2B_2\\
&=\tfrac{1}{\sqrt{2}}\left(A_1^2+A_2^2+B_1^2+B_2^2\right) -\tfrac{1}{\sqrt{2}}\Big[\big(A_1-\tfrac{1}{\sqrt{2}}(B_1+B_2)\big)^2+\big(A_2-\tfrac{1}{\sqrt{2}}(B_1-B_2)\big)^2\Big]\,.
\end{align*}
Since each term in the bracket is the square of a self-adjoint operator and hence positive, and since $-I\le A_i,B_j\le I$ implies $A_i^2\le I$ and $B_j^2\le I$, we have
\begin{equation}\label{eq:ABUpperBound}
A_1B_1+A_1B_2+A_2B_1-A_2B_2\le\tfrac{1}{\sqrt2}\big(A_1^2+A_2^2+B_1^2+B_2^2\big)\le\tfrac{4}{\sqrt2}\,I=2\sqrt2\,I\,.
\end{equation}
The quadruple $(-A_1,-A_2,B_1,B_2)$ satisfies the same hypotheses, so applying Equation \eqref{eq:ABUpperBound} to it gives $-(A_1B_1+A_1B_2+A_2B_1-A_2B_2)\le 2\sqrt2\,I$. Together with Equation \eqref{eq:ABUpperBound}, this gives the two-sided bound in the statement of the lemma.
\end{proof}

The next lemma shows that, under Postulates \ref{PM} and \ref{QF}, the effect that any available post-preparation independent composite assigns to a product outcome $O_A\times O_B$ is the tensor product of the corresponding component effects.

\begin{lemma}\label{lem:opProd} Suppose Postulates \ref{PM} and \ref{QF} hold, and let $(\mathcal{J};\mathcal{E}_A,\mathcal{E}_B)$ be an available composite HPMM that describes a post-preparation independent composite experiment. Let $E_{\mathcal{J}}$ be the POVM associated with $\mathcal{J}$. Then $\mathcal{E}_A$ and $\mathcal{E}_B$ have Property \ref{POVM}, and their POVMs $E_{\mathcal{E}_A}$ and $E_{\mathcal{E}_B}$ satisfy $E_{\mathcal{J}}(O_A\times O_B) = E_{\mathcal{E}_A}(O_A)\otimes E_{\mathcal{E}_B}(O_B)$ for all $O_A\in\Sigma_{\mathcal{E}_A}$ and $O_B\in\Sigma_{\mathcal{E}_B}$.
\end{lemma}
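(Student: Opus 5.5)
The plan is to use Postulate \ref{QF} to attach POVMs to everything via Lemma \ref{lem:POVMeqQF}, then compare two bounded quadratic forms on product vectors and invoke Lemma \ref{simpleTensors}.

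\emph{Step 1: POVMs for the components and the composite.} Postulate \ref{QF} applies to available HPMMs, so $\mathcal{J}$ has a POVM $E_{\mathcal{J}}$. The components $\mathcal{E}_A$ and $\mathcal{E}_B$ are not assumed available, so I cannot apply \ref{QF} to them directly. Instead, I will extract their effects from $E_{\mathcal{J}}$ using \ref{PM}.

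\emph{Step 2: Defining the component effects.} Fix a unit vector $\phi_0\in S(\mathcal{H}_{\mathcal{E}_B})$ and the isometry $V_{\phi_0}:\psi\mapsto\psi\otimes\phi_0$. Set $E_{\mathcal{E}_A}(O_A)\equiv V_{\phi_0}^\dagger E_{\mathcal{J}}(O_A\times\Omega_{\mathcal{E}_B})V_{\phi_0}$. Property \ref{PM} with $O_B=\Omega_{\mathcal{E}_B}$ gives
\[
\langle\psi,E_{\mathcal{E}_A}(O_A)\psi\rangle=\mathbb{P}_{\mathcal{E}_A}(O_A\,;\psi)
\]
for every unit $\psi$. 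So $O_A\mapsto E_{\mathcal{E}_A}(O_A)$ is a compression of a POVM by an isometry. It is therefore positive, bounded by $I$, normalized (since $V^\dagger V=I$), and countably additive in the weak operator topology (compressions preserve weak convergence). Hence $\mathcal{E}_A$ has Property \ref{POVM}. Alternatively, once we know $\mathbb{P}_{\mathcal{E}_A}(O_A\,;\cdot)$ is given by a bounded quadratic form, Lemma \ref{lem:POVMeqQF}'s converse argument applies verbatim. The same construction with a fixed $\psi_0$ gives $E_{\mathcal{E}_B}$. The effects do not depend on $\phi_0$ or $\psi_0$, because a bounded self-adjoint operator is determined by its quadratic form on the unit sphere.

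\emph{Step 3: Matching the product effect.} For a product outcome $O_A\times O_B$, consider the two bounded quadratic forms
\[
Q_1(\Psi)=\langle\Psi,E_{\mathcal{J}}(O_A\times O_B)\Psi\rangle,\qquad Q_2(\Psi)=\langle\Psi,(E_{\mathcal{E}_A}(O_A)\otimes E_{\mathcal{E}_B}(O_B))\Psi\rangle .
\]
On unit product vectors $\psi\otimes\phi$, Property \ref{PM} gives
\[
Q_1(\psi\otimes\phi)=\mathbb{P}_{\mathcal{E}_A}(O_A\,;\psi)\,\mathbb{P}_{\mathcal{E}_B}(O_B\,;\phi)=\langle\psi,E_{\mathcal{E}_A}(O_A)\psi\rangle\langle\phi,E_{\mathcal{E}_B}(O_B)\phi\rangle=Q_2(\psi\otimes\phi).
\]
Lemma \ref{simpleTensors} then yields $Q_1=Q_2$, and polarization gives equality of the operators.

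The main obstacle is the subtlety that the components need not themselves be available. This is handled by the compression construction in Step 2, where the only care needed is to check countable additivity of the compressed measure. That check follows from weak-operator continuity of $T\mapsto V^\dagger TV$.
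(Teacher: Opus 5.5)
Your proposal is correct and follows essentially the same route as the paper. You compress $E_{\mathcal{J}}(O_A\times\Omega_{\mathcal{E}_B})$ by the isometry $\psi\mapsto\psi\otimes\phi_0$ and use Property \ref{PM} to obtain the component POVMs. You then match the two bounded quadratic forms on product vectors and apply Lemma \ref{simpleTensors}. Your explicit weak-operator check of countable additivity simply spells out what the paper asserts in one line (``since $V$ is an isometry'').
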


\begin{proof} Let $\mathcal{H}_A\equiv\mathcal{H}_{\mathcal{E}_A}$, $\mathcal{H}_B\equiv\mathcal{H}_{\mathcal{E}_B}$, and $\Omega_B\equiv\Omega_{\mathcal{E}_B}$. Since Postulate \ref{QF} is equivalent to Postulate \ref{POVM} (as established in Section \ref{sec:POVM}), we may assume Postulate \ref{POVM}. Postulate \ref{POVM} applies to $\mathcal{J}$, since it is available. To obtain POVMs for the components, fix a unit vector $\phi_0\in\mathcal{H}_B$, let $V:\mathcal{H}_A\to \mathcal{H}_A\otimes \mathcal{H}_B$ be given by $V(\psi)= \psi \otimes \phi_0$, and define $E_{\mathcal{E}_A}(O_A)\equiv V^\dagger E_{\mathcal{J}}(O_A\times\Omega_B)V$.    Since $V$ is an isometry, $E_{\mathcal{E}_A}$ is a normalized POVM on $\mathcal{H}_A$, and
\begin{align*}
\langle\psi,E_{\mathcal{E}_A}(O_A)&\psi\rangle
=\langle\psi,V^\dagger E_{\mathcal{J}}(O_A\times\Omega_B)V\psi\rangle 
=
\langle \psi\otimes \phi_0,E_{\mathcal{J}}(O_A\times\Omega_B) \psi\otimes \phi_0\rangle
\\
&\stackrel{(\mathrm{\ref{POVM}})}{=}\mathbb{P}_{\mathcal{J}}(O_A\times\Omega_B\,;\psi\otimes\phi_0)\stackrel{(\mathrm{\ref{PM}})}{=}\mathbb{P}_{\mathcal{E}_A}(O_A\,;\psi)\,\mathbb{P}_{\mathcal{E}_B}(\Omega_B\,;\phi_0)=\mathbb{P}_{\mathcal{E}_A}(O_A\,;\psi)
\end{align*}
for all $\psi\in S(\mathcal{H}_A)$, so $\mathcal{E}_A$ has Property \ref{POVM}. The right-hand side of this display depends only on $\mathcal{E}_A$. Since a bounded operator on a complex Hilbert space is determined by its expectation values on unit vectors, $E_{\mathcal{E}_A}$ is the unique POVM representing $\mathcal{E}_A$, and in particular it does not depend on the choice of $\phi_0$ or of $\mathcal{J}$. The same construction with a unit vector $\psi_0\in\mathcal{H}_A$ yields $E_{\mathcal{E}_B}$.

Both $E_{\mathcal{J}}(O_A\times O_B)$ and $E_{\mathcal{E}_A}(O_A)\otimes E_{\mathcal{E}_B}(O_B)$ are self-adjoint operators on $\mathcal{H}_A\otimes \mathcal{H}_B$. We first verify that they yield the same expectation values on any product state $\Psi = \psi\otimes \phi$. This follows from Postulates \ref{PM} and \ref{POVM}, since
\begin{align*}
\langle \psi\otimes \phi&,E_{\mathcal{J}}(O_A\times O_B)\psi\otimes \phi\rangle 
\stackrel{(\mathrm{\ref{POVM}})}{=} 
\mathbb{P}_{\mathcal{J}}\big(O_A\times O_B\,;\psi\otimes \phi\big) \stackrel{(\mathrm{\ref{PM}})}{=}
\mathbb{P}_{\mathcal{E}_A}\big(O_A\,;\psi\big)\,\mathbb{P}_{\mathcal{E}_B}\big( O_B\,;\phi\big)\\
&\stackrel{(\mathrm{\ref{POVM}})}{=}
\left\langle \psi,E_{\mathcal{E}_A}(O_A)\psi\right\rangle \left\langle \phi,E_{\mathcal{E}_B}(O_B) \phi\right\rangle =\left\langle \psi\otimes \phi,E_{\mathcal{E}_A}(O_A)\!\otimes\! E_{\mathcal{E}_B}(O_B)\,(\psi\otimes\phi)\right\rangle \,.
\end{align*}
Both $\Psi\mapsto\langle \Psi, E_{\mathcal{J}}(O_A\times O_B) \Psi \rangle$ and $\Psi\mapsto\langle \Psi, E_{\mathcal{E}_A}(O_A)\otimes E_{\mathcal{E}_B}(O_B) \Psi \rangle$ are bounded quadratic forms on $\mathcal{H}_A\otimes\mathcal{H}_B$ that agree on all product vectors by the computation above. By Lemma \ref{simpleTensors}, they agree everywhere. Applying the polarization identity, we have the operator equality $E_{\mathcal{J}}(O_A\times O_B) = E_{\mathcal{E}_A}(O_A)\otimes E_{\mathcal{E}_B}(O_B)$.
\end{proof}

\begin{proof}[Proof of Lemma \ref{Tsirelson}] Let $E_{\mathcal{E}_{A_i}}$ and $E_{\mathcal{E}_{B_j}}$ be the normalized POVMs given by Lemma \ref{lem:opProd}, and let $I_A$ and $I_B$ be the identity operators on $\mathcal{H}_A$ and $\mathcal{H}_B$ respectively. Let $A_i$ equal $\big(E_{\mathcal{E}_{A_i}}(O_A)-E_{\mathcal{E}_{A_i}}({O_A}^{c})\big)\otimes I_B$ and $B_j$ equal $I_A\otimes \big(E_{\mathcal{E}_{B_j}}(O_B)-E_{\mathcal{E}_{B_j}}({O_B}^{c})\big)$. Since $E_{\mathcal{E}_{A_i}}({O_A}^{c})$ equals $I_A-E_{\mathcal{E}_{A_i}}(O_A)$, we have $A_i=\big(2E_{\mathcal{E}_{A_i}}(O_A)-I_A\big)\otimes I_B$, and $0\le E_{\mathcal{E}_{A_i}}(O_A)\le I_A$ gives $-I\le A_i\le I$. Likewise $-I\le B_j\le I$. Since $A_i$ and $B_j$ commute for any $i$ and $j$, the hypotheses of Lemma \ref{lem:tsirBoundLemma} hold.    Taking expectation values in the two-sided operator inequality of Lemma \ref{lem:tsirBoundLemma} gives, for any $\Psi\in S(\mathcal{H}_A\otimes\mathcal{H}_B)$,
\begin{equation}\label{eq:tsirBound} 
\big|\langle \Psi,A_1B_1 \Psi \rangle + \langle \Psi,A_1B_2 \Psi \rangle + \langle \Psi,A_2B_1 \Psi \rangle - \langle \Psi,A_2B_2 \Psi \rangle\big|\le 2\sqrt{2}\,.
\end{equation}
The result then follows from the identity
\begin{align*}
\mathrm{C}_{\mathcal{J}_{ij}}(O_A,O_B\,;\Psi)&~~~=~~~ \mathbb{P}_{\mathcal{J}_{ij}}\big(\mathcal{O}_{\textrm{eq}}(O_A,O_B)\,;\Psi\big)-\mathbb{P}_{\mathcal{J}_{ij}}\big(\mathcal{O}_{\textrm{op}}(O_A,O_B)\,;\Psi\big)\\
&~~~=~~~\mathbb{P}_{\mathcal{J}_{ij}}\big(O_A\times O_B\,;\Psi\big)+\mathbb{P}_{\mathcal{J}_{ij}}\big({O_A}^{c}\times {O_B
}^{c}\,;\Psi\big)\\
&\qquad\qquad-\mathbb{P}_{\mathcal{J}_{ij}}\big(O_A\times {O_B}^{c}\,;\Psi\big)-\mathbb{P}_{\mathcal{J}_{ij}}\big({O_A}^{c}\times O_B
\,;\Psi\big)\\
&\stackrel{(\mathrm{\ref{POVM}})}{=} 
\langle \Psi,E_{\mathcal{J}_{ij}}(O_A\times O_B)\Psi\rangle + \langle \Psi,E_{\mathcal{J}_{ij}}({O_A}^{c}\times {O_B}^{c})\Psi\rangle\\
&\qquad\qquad - \langle \Psi,E_{\mathcal{J}_{ij}}(O_A\times {O_B}^{c})\Psi\rangle - \langle \Psi,E_{\mathcal{J}_{ij}}({O_A}^{c}\times O_B)\Psi\rangle\\
&\,\stackrel{\mathrm{Lem.~\ref{lem:opProd}}}{=}\langle \Psi,E_{\mathcal{E}_{A_i}}(O_A)\otimes E_{\mathcal{E}_{B_j}}( O_B)\Psi\rangle + \langle \Psi,E_{\mathcal{E}_{A_i}}({O_A}^{c})\otimes E_{\mathcal{E}_{B_j}}({O_B}^{c})\Psi\rangle\\
&\qquad - \langle \Psi,E_{\mathcal{E}_{A_i}}(O_A)\otimes E_{\mathcal{E}_{B_j}}({O_B}^{c})\Psi\rangle - \langle \Psi,E_{\mathcal{E}_{A_i}}({O_A}^{c})\otimes E_{\mathcal{E}_{B_j}}( O_B)\Psi\rangle\\
&~~~\,=~~~ \left\langle \Psi,\big(E_{\mathcal{E}_{A_i}}(O_A)-E_{\mathcal{E}_{A_i}}({O_A}^{c})\big)\otimes \big(E_{\mathcal{E}_{B_j}}(O_B)-E_{\mathcal{E}_{B_j}}({O_B}^{c})\big)\,\Psi\right\rangle \\
&~~~\,=~~~ \left\langle \Psi,A_iB_j\,\Psi\right\rangle\,,
\end{align*}
which shows $\mathrm{CHSH}_{\mathcal{J}_{11},\mathcal{J}_{12},\mathcal{J}_{21},\mathcal{J}_{22}}(O_A,O_B\,;\Psi)$ is equal to the left-hand side of Equation \eqref{eq:tsirBound}.
\end{proof}

\subsection{A model in which Postulate \ref*{PM} fails}\label{subsec:PRmodel}

To show that Postulate \ref{PM} cannot be derived from the other postulates, it suffices to exhibit a model in which the other postulates hold and \ref{PM} fails. Since the postulates concern which HPMMs are available and which available composites describe post-preparation independent experiments, a model consists of a set $\mathcal{A}$ of available HPMMs, together with a specification of which available composite HPMMs $(\mathcal{J};\mathcal{E}_A,\mathcal{E}_B)$ describe post-preparation independent composite experiments.
 
The following model is built from two kinds of single-qubit experiment. The first are the orthodox HPMMs $\mathcal{E}_{\overline{u}}$ of Section \ref{sec:OrthodoxOPFs}, whose POVMs are defined by $E_{\overline{u}}(\{\mathrm{Y}\})=\ket{u}\!\bra{u}$ and $E_{\overline{u}}(\{\mathrm{N}\})=I-\ket{u}\!\bra{u}$. The second is the HPMM $\mathcal{E}_{\mathrm{PR}}\equiv\left(\{0,1\},2^{\{0,1\}},\mathbb{C}^2,\mathbb{P}_{E}\right)$, where $E(\{0\})=E(\{1\})=\frac{1}{2}I_2$ and $\mathbb{P}_E(O\,;\psi)=\langle \psi,E(O)\psi\rangle$. The outcome of $\mathcal{E}_{\mathrm{PR}}$ can be thought of as being determined by a fair coin flip that does not depend on the prepared state.
 
Next, for $i,j\in\{1,2\}$, let $E_{ij}$ be the POVM on $\{0,1\}^2$ acting on $\mathbb{C}^2\otimes\mathbb{C}^2$ defined by
\begin{align*}
E_{ij}(\{(a,b)\})=
\begin{cases}
\frac{1}{2} I_4 & a+b \equiv (i-1)(j-1) \pmod{2}\,, \\
0 & \mathrm{otherwise,}
\end{cases}
\end{align*}
extended to all subsets of $\{0,1\}^2$ by additivity. For each $(i,j)$, exactly two of the four outcomes satisfy the parity condition, so $E_{ij}(\{0,1\}^2)=I_4$ and $E_{ij}$ is a normalized POVM. These are Popescu and Rohrlich's correlations \cite{popescu1994quantum}, with inputs $x=i-1$ and $y=j-1$. The outputs satisfy $a\oplus b=xy$, and each output on its own is uniformly random. Let
$$
\mathcal{J}_{ij}\equiv\left(\{0,1\}^2,2^{\{0,1\}^2},\mathbb{C}^2\otimes\mathbb{C}^2,\mathbb{P}_{E_{ij}}\right)\,,
$$
which is a composite of $\mathcal{E}_{\mathrm{PR}}$ with itself. Note that each $\mathcal{J}_{ij}$ has the same components.

Finally, given $\overline{u}\in S^2$ and an HPMM $\mathcal{X}$ with finite outcome space, discrete $\sigma$-algebra, and POVM $E_{\mathcal{X}}$, let $\mathcal{E}_{\overline{u}}\otimes\mathcal{X}$ denote the composite of $\mathcal{E}_{\overline{u}}$ and $\mathcal{X}$ whose POVM $E_{\overline{u}\otimes\mathcal{X}}$ is given on singletons by $E_{\overline{u}\otimes\mathcal{X}}(\{(s,t)\})=E_{\overline{u}}(\{s\})\otimes E_{\mathcal{X}}(\{t\})$ and extended by additivity. To satisfy Postulate \ref{EQC}, the model includes the composite $\mathcal{E}_{\overline{u}}\otimes\mathcal{X}$ for every $\overline{u}\in S^2$ and every available HPMM $\mathcal{X}$. Since these composites are themselves available, the construction must be iterated. Let $\mathcal{A}_0$ consist of the HPMMs $\mathcal{E}_{\overline{u}}$ for all $\overline{u}\in S^2$, the HPMM $\mathcal{E}_{\mathrm{PR}}$, and the HPMMs $\mathcal{J}_{ij}$. For $n\ge 0$, let $\mathcal{A}_{n+1}=\mathcal{A}_n\cup\left\{\mathcal{E}_{\overline{u}}\otimes\mathcal{X}~\big|~\overline{u}\in S^2,~\mathcal{X}\in\mathcal{A}_n\right\}$, and let $\mathcal{A}=\bigcup_{n}\mathcal{A}_n$. By induction on $n$, every element of $\mathcal{A}$ has a finite outcome space with a discrete $\sigma$-algebra and is defined by a normalized POVM, so each $\mathcal{E}_{\overline{u}}\otimes\mathcal{X}$ is well defined. We take the available HPMMs to be the elements of $\mathcal{A}$, and we take the composite HPMMs describing post-preparation independent composite experiments to be exactly $(\mathcal{J}_{ij};\mathcal{E}_{\mathrm{PR}},\mathcal{E}_{\mathrm{PR}})$ for $i,j\in\{1,2\}$ and $(\mathcal{E}_{\overline{u}}\otimes\mathcal{X};\mathcal{E}_{\overline{u}},\mathcal{X})$ for $\overline{u}\in S^2$ and $\mathcal{X}\in\mathcal{A}$.
 
\begin{theorem}\label{thm:PMindep}
In the model just described, every postulate in Figure \ref{Implications} other than Postulate \ref{PM} holds, and Postulate \ref{PM} fails.
\end{theorem}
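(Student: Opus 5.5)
The plan is to check the postulates in three groups: the single-system postulates, the composite-system postulates, and then the failure of \ref{PM}. For the first two groups I lean on the implications already proved in Section \ref{mathImp}, so that only a few facts must be checked directly in the model. Those facts are: every available HPMM has Property \ref{POVM}, Postulate \ref{EQC} holds, and Property \ref{IP} holds for every designated post-preparation independent composite. The remaining postulates then follow by the lemmas.

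\emph{Single-system postulates.} Every element of $\mathcal{A}$ is defined by a normalized POVM. For $\mathcal{A}_0$ this holds by construction. For $\mathcal{E}_{\overline{u}}\otimes\mathcal{X}$ it follows by induction on $n$: the singleton effects $E_{\overline{u}}(\{s\})\otimes E_{\mathcal{X}}(\{t\})$ are positive and sum to $I\otimes I$, and countable additivity is trivial on a finite outcome space. Hence Postulate \ref{POVM} holds. Lemma \ref{lem:POVMeqQF} then gives \ref{QF}. Lemmas \ref{densityReduction} and \ref{densityReductionGeneral} give \ref{WSEP} and \ref{SEP}, and Lemma \ref{DIPL} gives \ref{PL}, of which \ref{WPL} is a special case.

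\emph{Composite-system postulates.} For \ref{EQC}, take any available $\mathcal{X}$. Then $\mathcal{X}\in\mathcal{A}_n$ for some $n$, so $\mathcal{E}_{\overline{u}}\otimes\mathcal{X}\in\mathcal{A}_{n+1}$, and the triple $(\mathcal{E}_{\overline{u}}\otimes\mathcal{X};\mathcal{E}_{\overline{u}},\mathcal{X})$ is declared post-preparation independent.

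For \ref{IP}, I compute Bob's marginal effect in each designated composite.
\begin{itemize}
\item For $\mathcal{E}_{\overline{u}}\otimes\mathcal{X}$, summing over Alice's outcomes gives $E(\Omega_{\mathcal{E}_{\overline{u}}}\times O_B)=I_2\otimes E_{\mathcal{X}}(O_B)$. Its expectation in $\psi\otimes\phi$ is $\mathbb{P}_{\mathcal{X}}(O_B;\phi)$.
\item For $\mathcal{J}_{ij}$, fix $b$. Exactly one $a\in\{0,1\}$ satisfies the parity condition, so $E_{ij}(\{0,1\}\times\{b\})=\tfrac12 I_4$. This reproduces $\mathbb{P}_{\mathcal{E}_{\mathrm{PR}}}(\{b\};\phi)=\tfrac12$, and additivity handles general $O_B$.
\end{itemize}

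Since only the designated triples are post-preparation independent, these two cases exhaust the instances of \ref{IP}. With \ref{IP} and \ref{WSEP} in hand, Lemma \ref{lem:WSEPandIPtoPT} gives \ref{PT}. Lemma \ref{lem:PTtoNS} then gives \ref{NS}, which trivially implies \ref{WNS}, and Lemma \ref{lem:PTtoSQ} gives \ref{SMQC}. The dimension hypothesis attached to \ref{WPL}$\Rightarrow$\ref{QF} plays no role here, because \ref{QF} was obtained directly from \ref{POVM}.

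\emph{Failure of \ref{PM}.} Take $(\mathcal{J}_{22};\mathcal{E}_{\mathrm{PR}},\mathcal{E}_{\mathrm{PR}})$. The parity condition there is $a+b\equiv1$, so $\mathbb{P}_{\mathcal{J}_{22}}(\{0\}\times\{0\};\psi\otimes\phi)=0$. On the other hand, $\mathbb{P}_{\mathcal{E}_{\mathrm{PR}}}(\{0\};\psi)\,\mathbb{P}_{\mathcal{E}_{\mathrm{PR}}}(\{0\};\phi)=\tfrac14$. So \ref{PM} fails.

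As a by-product, take $O_A=O_B=\{0\}$. Then $\mathrm{C}_{\mathcal{J}_{ij}}=1$ for $(i,j)\ne(2,2)$, since the outcomes always agree, and $\mathrm{C}_{\mathcal{J}_{22}}=-1$, since they always disagree. The CHSH value is therefore $4>2\sqrt2$. This is consistent with Lemma \ref{Tsirelson}, since that lemma requires \ref{PM}.

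\emph{Main obstacle.} The calculations are routine. The delicate point is bookkeeping. I must make sure that each earlier lemma is applied only to available composites that are designated post-preparation independent, and that the component triple is the one specified; Postulate \ref{NS} in particular quantifies only over such families. I also need the inductive definition of $\mathcal{A}$ to be well founded. That holds because each $\mathcal{A}_{n+1}$ is built only from $\mathcal{A}_n$, so the POVM and finiteness properties propagate by induction on $n$.
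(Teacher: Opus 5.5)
Your proposal is correct and follows essentially the same route as the paper: verify \ref{POVM}, \ref{EQC}, and \ref{IP} directly in the model, obtain every other postulate from the implications proved in Section \ref{mathImp}, and exhibit a product-state outcome that violates \ref{PM}. The differences are cosmetic. You witness the failure of \ref{PM} with $\mathcal{J}_{22}$ and the outcome $(0,0)$ rather than with $\mathcal{J}_{11}$ and $(0,1)$, and you spell out the chain of lemmas that the paper compresses into a single sentence.
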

 
\begin{proof}
Since every postulate other than \ref{PM} appearing in Figure \ref{Implications} follows from the conjunction of Postulates \ref{POVM}, \ref{IP}, and \ref{EQC}, it suffices to verify Postulates \ref{POVM}, \ref{IP}, and \ref{EQC} for the model described above, and to show that Postulate \ref{PM} fails. Postulate \ref{POVM} follows from the model's definition, since every element of $\mathcal{A}$ is defined by a normalized POVM.
 
To show Postulate \ref{EQC} holds, let $\overline{u}\in S^2$ and let $\mathcal{X}\in\mathcal{A}$. Since $\mathcal{A}$ is the union of the sets $\mathcal{A}_n$, $\mathcal{X}\in\mathcal{A}_n$ for some $n$. From the definition of $\mathcal{A}_{n+1}$, $\mathcal{E}_{\overline{u}}\otimes\mathcal{X}\in\mathcal{A}_{n+1}\subseteq\mathcal{A}$. It follows that $(\mathcal{E}_{\overline{u}}\otimes\mathcal{X};\mathcal{E}_{\overline{u}},\mathcal{X})$ is an available composite HPMM, and by construction it describes a post-preparation independent composite experiment.

To see that Postulate \ref{IP} holds, first consider the composites $\mathcal{E}_{\overline{u}}\otimes\mathcal{X}$. Since the identity $E_{\overline{u}}(\{\mathrm{Y}\})+E_{\overline{u}}(\{\mathrm{N}\})=I$ holds, we have, for all $\psi\in S(\mathbb{C}^2)$, $\phi\in S(\mathcal{H}_{\mathcal{X}})$, and $O_B\in\Sigma_{\mathcal{X}}$,
\begin{align*}
\mathbb{P}_{\mathcal{E}_{\overline{u}}\otimes\mathcal{X}}\big(\{\mathrm{Y},\mathrm{N}\}\times O_B\,;\psi\otimes\phi\big)
&=\big\langle\psi\otimes\phi,\big(I\otimes E_{\mathcal{X}}(O_B)\big)\,\psi\otimes\phi\big\rangle\\
&=\langle \psi, \psi\rangle \big\langle \phi, E_{\mathcal{X}}(O_B) \phi\big\rangle=\mathbb{P}_{\mathcal{X}}(O_B\,;\phi)\,.
\end{align*}
For the composites $\mathcal{J}_{ij}$, we must show that $\mathbb{P}_{\mathcal{J}_{ij}}(\{0,1\}\times O_B\,;\psi\otimes\phi)=\mathbb{P}_{E}(O_B\,;\phi)$ for all $\psi,\phi\in S(\mathbb{C}^2)$ and $O_B\subseteq\{0,1\}$. For each $b\in\{0,1\}$, exactly one $a\in\{0,1\}$ satisfies $a+b\equiv(i-1)(j-1)\pmod{2}$, so $E_{ij}(\{0,1\}\times\{b\})=\frac{1}{2}I_4$. Since also $E(\{b\})=\frac{1}{2}I_2$, both sides equal $\frac{1}{2}$ when $O_B=\{b\}$. The cases $O_B=\varnothing$ and $O_B=\{0,1\}$ are immediate.

Finally, Postulate \ref{PM} fails for the composite HPMM $(\mathcal{J}_{11};\mathcal{E}_{\mathrm{PR}},\mathcal{E}_{\mathrm{PR}})$, which describes a post-preparation independent composite experiment. This is because for any $\psi,\phi\in S(\mathbb{C}^2)$, $\mathbb{P}_{\mathcal{J}_{11}}(\{(0,1)\}\,;\psi\otimes\phi)=0$, whereas ${\mathbb{P}_{E}(\{0\}\,;\psi)\,\mathbb{P}_{E}(\{1\}\,;\phi)=\tfrac{1}{4}}$.
\end{proof}

The above model also violates Tsirelson's bound. To see this, take $\mathcal{E}_{A_1} = \mathcal{E}_{A_2} = \mathcal{E}_{B_1} = \mathcal{E}_{B_2} = \mathcal{E}_{\mathrm{PR}}$ in the definition of the CHSH function, and let $O_A = O_B = \{0\}$. Since each $E_{ij}(\{(a,b)\})$ is either $\frac{1}{2}I_4$ or $0$, every elementary outcome has probability $\frac{1}{2}$ or $0$ in $\mathcal{J}_{ij}$ for any $\Psi$. For $(i,j) \ne (2,2)$, the outcomes $(0,0)$ and $(1,1)$ have probability $\frac{1}{2}$, while $(0,1)$ and $(1,0)$ have probability $0$, so $\mathrm{C}_{\mathcal{J}_{ij}}(\{0\},\{0\}\,;\Psi)=1$. For $(i,j)=(2,2)$, the roles are reversed, and $\mathrm{C}_{\mathcal{J}_{22}}(\{0\},\{0\}\,;\Psi)=-1$. Hence, for every $\Psi\in S(\mathbb{C}^2\otimes\mathbb{C}^2)$, $\mathrm{CHSH}_{\mathcal{J}_{11},\mathcal{J}_{12},\mathcal{J}_{21},\mathcal{J}_{22}}\big(\{0\},\{0\}\,;\Psi\big)=|1+1+1-(-1)|=4>2\sqrt{2}$. By Lemma \ref{Tsirelson}, this gives a second proof that Postulate \ref{PM} fails in the model.

\subsection{Postulate \ref*{PM} and the uniqueness of composites}\label{subsec:uniqueness}

The model of Section \ref{subsec:PRmodel} also shows that, without Postulate \ref{PM}, a composite HPMM need not be determined by its components. Indeed, the composites $\mathcal{J}_{11}$ and $\mathcal{J}_{22}$ have the same components but different OPFs. Under Postulates \ref{PM} and \ref{QF}, however, an available composite describing a post-preparation independent composite experiment is determined by its components, as the following lemma shows.

\begin{lemma}\label{lem:decompUniqueness} Under Postulates \ref{PM} and \ref{QF}, given HPMMs $\mathcal{E}_A$ and $\mathcal{E}_B$, there is at most one HPMM $\mathcal{J}$ such that $(\mathcal{J};\mathcal{E}_A,\mathcal{E}_B)$ is an available composite HPMM describing a post-preparation independent composite experiment.
\end{lemma}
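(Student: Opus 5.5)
Suppose $(\mathcal{J};\mathcal{E}_A,\mathcal{E}_B)$ and $(\mathcal{J}';\mathcal{E}_A,\mathcal{E}_B)$ are both available composite HPMMs describing post-preparation independent composite experiments. The plan is to show that $\mathbb{P}_{\mathcal{J}}$ and $\mathbb{P}_{\mathcal{J}'}$ coincide. The outcome spaces, $\sigma$-algebras, and Hilbert spaces are already fixed by the definition of a composite, so only the OPFs need comparing.

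First, since both composites are available, Postulate \ref{QF} together with Lemma \ref{lem:POVMeqQF} gives normalized POVMs $E_{\mathcal{J}}$ and $E_{\mathcal{J}'}$ on $\Sigma_{\mathcal{E}_A}\otimes\Sigma_{\mathcal{E}_B}$. Lemma \ref{lem:opProd} then applies to each composite. It says that $\mathcal{E}_A$ and $\mathcal{E}_B$ have Property \ref{POVM}, with POVMs $E_{\mathcal{E}_A}$ and $E_{\mathcal{E}_B}$ that depend only on the components, as that proof notes. It also gives
$$
E_{\mathcal{J}}(O_A\times O_B)=E_{\mathcal{E}_A}(O_A)\otimes E_{\mathcal{E}_B}(O_B)=E_{\mathcal{J}'}(O_A\times O_B)
$$
for all measurable rectangles. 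Hence $\mathbb{P}_{\mathcal{J}}(R\,;\Psi)=\mathbb{P}_{\mathcal{J}'}(R\,;\Psi)$ for every rectangle $R$ and every $\Psi\in S(\mathcal{H}_{\mathcal{E}_A}\otimes\mathcal{H}_{\mathcal{E}_B})$.

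The remaining step is to pass from rectangles to the whole product $\sigma$-algebra, and I expect this to be the only real point of care. Fix $\Psi$. The maps $O\mapsto\mathbb{P}_{\mathcal{J}}(O\,;\Psi)$ and $O\mapsto\mathbb{P}_{\mathcal{J}'}(O\,;\Psi)$ are probability measures on $\Sigma_{\mathcal{E}_A}\otimes\Sigma_{\mathcal{E}_B}$, and they agree on the measurable rectangles. The rectangles form a $\pi$-system, since the intersection of two rectangles is a rectangle, and this $\pi$-system generates the product $\sigma$-algebra. Dynkin's $\pi$-$\lambda$ theorem therefore shows the two measures agree on all of $\Sigma_{\mathcal{E}_A}\otimes\Sigma_{\mathcal{E}_B}$. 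Since $\Psi$ was arbitrary, $\mathbb{P}_{\mathcal{J}}=\mathbb{P}_{\mathcal{J}'}$, and so $\mathcal{J}=\mathcal{J}'$.

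Note that the argument uses Postulate \ref{PM} only through Lemma \ref{lem:opProd}. The model of Section \ref{subsec:PRmodel}, where $\mathcal{J}_{11}\neq\mathcal{J}_{22}$ despite sharing components, shows that this ingredient cannot be dropped.
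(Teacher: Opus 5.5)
Your proposal is correct and follows the paper's proof essentially step for step. You obtain POVMs for both composites via Postulate \ref{QF}, use Lemma \ref{lem:opProd} to get the common product form $E_{\mathcal{E}_A}(O_A)\otimes E_{\mathcal{E}_B}(O_B)$ on measurable rectangles (with component POVMs independent of the composite), and then extend agreement to the full product $\sigma$-algebra by Dynkin's $\pi$-$\lambda$ theorem.
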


\begin{proof}
Let $(\mathcal{J}_1;\mathcal{E}_A,\mathcal{E}_B)$ and $(\mathcal{J}_2;\mathcal{E}_A,\mathcal{E}_B)$ be available composite HPMMs, each describing a post-preparation independent composite experiment. Since the components fix the outcome set, outcome $\sigma$-algebra, and Hilbert space of a composite, it suffices to show that $\mathbb{P}_{\mathcal{J}_1}=\mathbb{P}_{\mathcal{J}_2}$.

By Postulate \ref{QF} and the results of Section \ref{sec:POVM}, $\mathcal{J}_1$ and $\mathcal{J}_2$ have associated POVMs, each uniquely determined by its OPF. Applying Lemma \ref{lem:opProd} to $\mathcal{J}_1$ and to $\mathcal{J}_2$ shows that $\mathcal{E}_A$ and $\mathcal{E}_B$ have POVMs $E_{\mathcal{E}_A}$ and $E_{\mathcal{E}_B}$, which do not depend on the composite, and that $E_{\mathcal{J}_1}(O_A\times O_B) = E_{\mathcal{E}_A}(O_A)\otimes E_{\mathcal{E}_B}(O_B) = E_{\mathcal{J}_2}(O_A\times O_B)$ for every $O_A\in\Sigma_{\mathcal{E}_A}$ and $O_B\in\Sigma_{\mathcal{E}_B}$. Hence for each $\Psi\in S(\mathcal{H}_{\mathcal{E}_A}\otimes\mathcal{H}_{\mathcal{E}_B})$,
$$
\mathbb{P}_{\mathcal{J}_1}(O_A\times O_B\,;\Psi)
=\langle\Psi,E_{\mathcal{J}_1}(O_A\times O_B)\Psi\rangle
=\langle\Psi,E_{\mathcal{J}_2}(O_A\times O_B)\Psi\rangle
=\mathbb{P}_{\mathcal{J}_2}(O_A\times O_B\,;\Psi),
$$
so the probability measures $O\mapsto\mathbb{P}_{\mathcal{J}_i}(O\,;\Psi)$ agree on all measurable rectangles. These form a $\pi$-system generating $\Sigma_{\mathcal{E}_A}\otimes\Sigma_{\mathcal{E}_B}$, so by Dynkin's $\pi$-$\lambda$ lemma \cite[Thm.~3.2]{billingsley2012probability} the two measures agree on the entire product $\sigma$-algebra. Since $\Psi$ was arbitrary, $\mathbb{P}_{\mathcal{J}_1}=\mathbb{P}_{\mathcal{J}_2}$, and hence $\mathcal{J}_1=\mathcal{J}_2$.
\end{proof}

\appendix
\section{A Stand-Alone Proof of Lemma \ref*{regularity}}\label{app:regularity}

In this appendix, we give a complete proof of Lemma \ref{regularity} without relying on Gleason's proof of Lemma 3.4, which assumes that $f$ is a global frame function. At least a priori, this is stronger than $f$ being bounded and regular on every two-dimensional subspace, though Gleason's proof relies only on these weaker hypotheses.\footnote{Gleason's relations (1)--(4) in the proof of Lemma 3.4 are derived solely from the restrictions of $f$ to two-dimensional subspaces, and the remainder of the proof uses only the boundedness of $f$, which follows easily from the fact that $f$ is a non-negative frame function.} The following proof invokes only Gleason's Theorem 3.5, applied to finite-dimensional subspaces.

\begin{proof}[Proof of Lemma \ref{regularity}] Let $\mathcal{H}$ be a complex Hilbert space of dimension at least $3$, and let
$f : S(\mathcal{H}) \to \mathbb{R}$ be bounded, phase invariant, and non-negative, and suppose $f$ satisfies the weak parallelogram law. Since $f$ satisfies the weak parallelogram law and is phase invariant on $S(\mathcal{H})$, for any finite-dimensional subspace $V\subseteq \mathcal{H}$, the restriction $f|_{S(V)}$ has the same properties. From Lemma \ref{frame}, we can conclude that $f|_{S(V)}$ is a frame function. If $V$ has dimension at least 3, then since $f|_{S(V)}$ is non-negative, we can also conclude from Gleason's theorem \cite[Thm.~3.5]{gleason1957} that $f|_{S(V)}$ is regular, i.e., there exists a self-adjoint operator $A_V:V\to V$ such that $f(v) = \langle v, A_V v\rangle$ for all $v\in S(V)$. As in the proof of Lemma \ref{lem:POVMeqQF}, since the quadratic form $v \mapsto \langle v, A_V v\rangle$ is homogeneous of degree $2$, its values on the unit sphere determine it on all of $V$. The polarization identity then uniquely determines the operator $A_V$.

For any $W \subseteq V$, the identity $(f|_{S(V)})|_{S(W)} = f|_{S(W)}$ then gives $f(v)=\big\langle v, (P_W A_V P_W) v\big\rangle$ for all $v\in S(W)$, where $P_W$ is the orthogonal projection onto $W$. Since the compression $P_W A_V P_W$ (viewed as an operator on $W$) is self-adjoint, the uniqueness argument above implies it is the unique self-adjoint operator on $W$ whose quadratic form evaluates to $f|_{S(W)}$.

We now need to show that $P_W A_V P_W$ depends only on $W$ and not on the choice of the superspace $V$. To see this, let $V_1$ and $V_2$ be two such superspaces, and let $U\supseteq V_1\cup V_2$ be any finite-dimensional subspace. Applying Gleason's theorem to $f|_{S(U)}$ gives a unique self-adjoint operator $A_U$ such that for all $u\in S(U)$, $f(u)=\langle u, A_U u\rangle$. Since for $v\in S(V_i)\subset S(U)$, we have $\langle v, A_{V_i} v\rangle = f(v)=\langle v, (P_{V_i} A_U P_{V_i})v\rangle$, uniqueness implies that $A_{V_i} = P_{V_i} A_U P_{V_i}$. Because $P_W P_{V_i} = P_{V_i} P_W = P_W$ for any $W \subseteq V_i$, we have the composition property:
$$
P_W A_{V_1} P_W 
=
P_W (P_{V_1} A_U P_{V_1}) P_W
=
P_W A_U P_W
=
P_W (P_{V_2} A_U P_{V_2}) P_W
=
P_W A_{V_2} P_W\,,
$$
as required. This allows us to unambiguously define $A_W \equiv P_W A_V P_W$ for any finite-dimensional subspace $W$ of $\mathcal{H}$, by choosing any finite-dimensional superspace $V \supseteq W$ with $\dim(V) \ge 3$. Given any set of elements $\{v_1,\dots,v_n\}\subset\mathcal{H}$, let $\mathrm{sp}(v_1,\dots,v_n)$ denote the subspace spanned by $\{v_1,\dots,v_n\}$. Let $B(v,w) = \langle v, A_{\mathrm{sp}(v,w)} w\rangle$. We claim that $B$ is sesquilinear. It follows from the self-adjointness of $A_{\mathrm{sp}(v,w)}=A_{\mathrm{sp}(w,v)}$ that
$$
B(v,w)=\langle v, A_{\mathrm{sp}(v,w)} w\rangle
=
\langle A_{\mathrm{sp}(v,w)}v,  w\rangle
=
\langle A_{\mathrm{sp}(w,v)}v,  w\rangle
=
\langle w,A_{\mathrm{sp}(w,v)}v\rangle^\ast
=
B(w,v)^\ast\,,
$$
so conjugate symmetry is satisfied. Homogeneity in the second argument follows from the identity $\mathrm{sp}(v,w)=\mathrm{sp}(v,\lambda w)$ for $\lambda \ne 0$, which implies 
$$
B(v,\lambda w)
=
\langle v, A_{\mathrm{sp}(v,\lambda w)} \lambda w \rangle
=
\langle v, A_{\mathrm{sp}(v, w)} \lambda w \rangle
=
\lambda \langle v, A_{\mathrm{sp}(v, w)} w \rangle
=
\lambda B(v,w)\,,
$$
and the identity holds trivially when $\lambda=0$, since then both sides vanish. Homogeneity in the first argument follows immediately from conjugate symmetry, since $B(\lambda v,w)=B(w,\lambda v)^\ast=\big(\lambda B(w,v)\big)^\ast=\lambda^\ast B(v,w)$. For additivity in the first argument, note that if $W\subseteq U$ are finite-dimensional subspaces, then choosing $V\supseteq U$ with $\dim(V)\ge 3$ gives $A_W=P_WA_VP_W=P_W(P_UA_VP_U)P_W=P_WA_UP_W$. Hence for any $u,w\in W$, we can pass freely between operators in inner products:
\begin{equation}\label{eq:compressionTrick}
\langle u, A_U w\rangle = \langle P_W u, A_U P_W w\rangle = \langle u, (P_W A_U P_W)w\rangle = \langle u, A_W w\rangle\,.
\end{equation}
Applying Equation \eqref{eq:compressionTrick} to the superspace $U = \mathrm{sp}(u,v,w)$ and the three relevant subspaces ${\mathrm{sp}(u+v,w)}$, $\mathrm{sp}(u,w)$, and $\mathrm{sp}(v,w)$, we obtain:
\begin{align*}
B(u + v,w)
&~=
\langle u + v, A_{\mathrm{sp}(u+v, w)} w\rangle \\
&\stackrel{\eqref{eq:compressionTrick}}{=}
\langle u + v, A_{\mathrm{sp}(u,v, w)} w\rangle =
\langle u, A_{\mathrm{sp}(u,v, w)} w\rangle
+
\langle v, A_{\mathrm{sp}(u,v, w)} w\rangle\\ 
&\stackrel{\eqref{eq:compressionTrick}}{=}
\langle u, A_{\mathrm{sp}(u, w)} w\rangle
+
\langle v, A_{\mathrm{sp}(v, w)} w\rangle =
B(u,w) + B(v,w).
\end{align*}
Additivity in the second argument follows easily from conjugate symmetry:
$$
B(u,v+w)=B(v+w,u)^\ast=\big(B(v,u)+B(w,u)\big)^\ast=B(u,v)+B(u,w)\,.
$$

Next, we show that $B$ is bounded. Let $M = \sup_{u\in S(\mathcal{H})} f(u)$, which is finite by assumption. By the definition of $B$ and $A_{\mathrm{sp}(x)}$, for any non-zero $x \in \mathcal{H}$ we have 
\begin{equation}\label{eq:bdiagbound}
B(x,x) = \langle x, A_{\mathrm{sp}(x)} x \rangle = \|x\|^2 f(x/\|x\|) \le M \|x\|^2\,,
\end{equation}
and $B(0,0)=0$. For any non-zero $v,w \in \mathcal{H}$, choose a phase $\eta \in \mathbb{C}$ such that $|\eta|=1$ and $B(\eta\, v, w) = \eta^* B(v,w) = |B(v,w)|$. Using the sesquilinearity of $B$, its non-negativity on the diagonal, and the diagonal bound \eqref{eq:bdiagbound},
\begin{align*}
4|B(v,w)| &= 4\,\mathrm{Re}\,B(\eta\, v,w) = B(\eta\, v+w,\eta \,v+w)-B(\eta\, v-w,\eta\, v-w)\\
&\le B(\eta\, v+w,\eta\, v+w) \le M\|\eta \,v+w\|^2 \le 2M\big(\|v\|^2+\|w\|^2\big)\,.
\end{align*}
Replacing $v$ and $w$ by $v'=(\|w\|/\|v\|)^{1/2}\,v$ and $w'=(\|v\|/\|w\|)^{1/2}\,w$, which leaves $B(v,w)$ unchanged, gives $|B(v,w)|\le\tfrac12M\big(\|v'\|^2+\|w'\|^2\big)=M\|v\|\|w\|$. Since the same bound holds trivially if either $v=0$ or $w=0$, boundedness is established. 

By the Riesz representation theorem for bounded sesquilinear forms, there exists a unique bounded operator $A$ such that $B(v,w) = \langle v, A w \rangle$. Since $B(v,w)=B(w,v)^\ast$, $A$ is self-adjoint, and since $f(v)= B(v,v) = \langle v, A v \rangle$ for $v\in S(\mathcal{H})$, $f$ is regular as required.
\end{proof}

\makeatletter
\immediate\write\@auxout{\string\citation{apsrev42Control}}
\makeatother

\bibliographystyle{apsrev4-2}
\phantomsection
\addcontentsline{toc}{section}{References}
\bibliography{PLBibliographyAPS}

\end{document}